\title{On Optimum End-to-End Distortion in \\ MIMO Systems}
\author{\IEEEauthorblockN{Jinhui Chen\IEEEauthorrefmark{1}, Dirk T. M. Slock\IEEEauthorrefmark{2}}
\IEEEauthorblockA{\IEEEauthorrefmark{1}Alcatel-Lucent Shanghai Bell \\ Research \& Innovation Center \\388 Ningqiao Rd., Pudong, 201206, Shanghai, China \\ Email: Jinhui.Chen@alcatel-sbell.com.cn\\}
\IEEEauthorblockA{\IEEEauthorrefmark{2}Department of Mobile Communications, EURECOM \\ B.P. 193, 06904, Sophia-Antipolis Cedex, France \\Email: Dirk.Slock@eurecom.fr}
}
\date{}
\begin{document}
\maketitle

\begin{abstract}
This paper presents the joint impact of the numbers of antennas, source-to-channel bandwidth ratio and spatial correlation on the optimum expected end-to-end distortion in an outage-free MIMO system. In particular, based on an analytical expression valid for any SNR, a closed-form expression of the optimum asymptotic expected end-to-end distortion valid for high SNR is derived. It is comprised of the optimum distortion exponent and the multiplicative optimum distortion factor. Demonstrated by the simulation results, the analysis on the joint impact of the optimum distortion exponent and the optimum distortion factor explains the behavior of the optimum expected end-to-end distortion varying with the numbers of antennas, source-to-channel bandwidth ratio and spatial correlation. It is also proved that as the correlation tends to zero, the optimum asymptotic expected end-to-end distortion in the setting of correlated channel approaches that in the setting  of uncorrelated channel. The results in this paper could be performance objectives for analog-source transmission systems. To some extend, they are instructive for system design.
\end{abstract}

\begin{IEEEkeywords}
MIMO, end-to-end distortion
\end{IEEEkeywords}

\section{Introduction}
\subsection{Background}
\renewcommand{\thefootnote}{}
\footnote{Parts of the work in this paper have been presented in \cite{chen_icc08, chen_globecom08}.} It is well-known that the functional diagram and the basic elements of a digital communication system can be illustrated by Fig.\ref{c1f:chmod}  \cite{proakis}. The source can be either analog (continuous-amplitude) or digital (discrete-amplitude). Whichever is the source, there is always a tradeoff between the efficiency and the reliability. For transmitting a digital sequence, the tradeoff would be between the spectral efficiency (bit/s/Hz) \cite{tse} and the error probability. For transmitting a bandlimited analog source, under the  assumption of a band-limited white Gaussian source, the tradeoff would be between the source-to-channel bandwidth ratio $W_s/W_c$ (SCBR) \cite{goblick_it1965} and the mean squared error (MSE) \cite{shannon48, shannon}, \emph{i.e.}, the end-to-end distortion.

\begin{figure*}
\includegraphics[scale = 0.8]{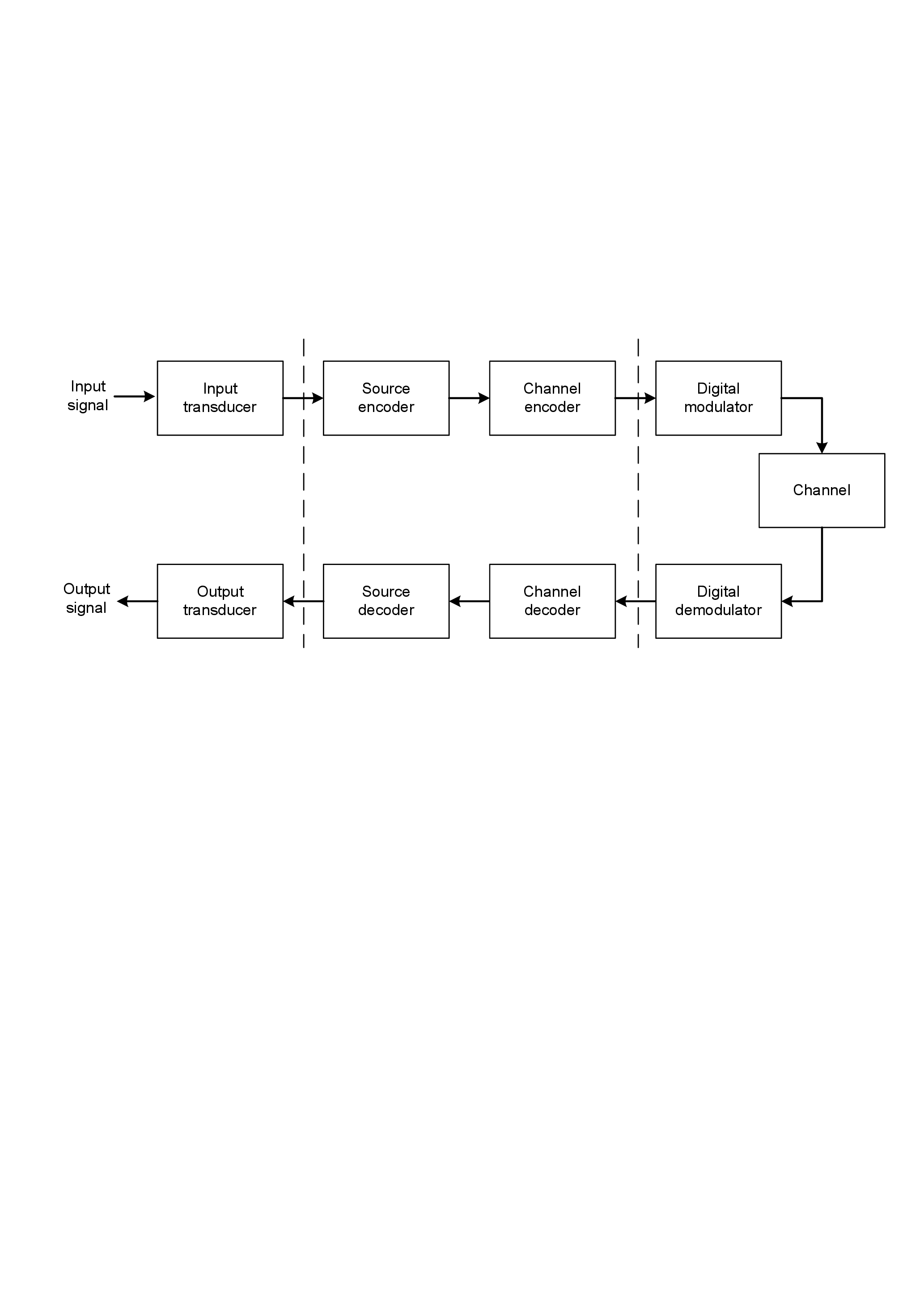}
\centering
\caption{Basic elements of a digital communication system}
\label{c1f:chmod}
\end{figure*}

A point of distinction between digital-source transmission and analog-source transmission is: in digital-source transmission, if the spectral efficiency (bit/s/Hz) is below the upper bound (channel capacity) subject to channel state and the transmitter knows the instantaneous channel state information (CSI) perfectly, the error probability would go to zero; whereas, in analog-source transmission, no matter how good the channel condition and the system are, the end-to-end distortion is non-vanishing, because the entropy of a continuous-amplitude source is infinite and thus the exact recovery of an analog source requires infinite channel capacity \cite{shannon48, shannon, gallager, cover}.

%distortion exponent existence
Regarding the end-to-end distortion, in \cite{ziv_it1969, ziv_it1970}, Ziv and Zakai investigated the decay of MSE with SNR for the analog-source transmission over a noisy single-input single-output (SISO) channel without any channel knowledge on the transmitter side (CSIT). In \cite{laneman_isit04, laneman}, Laneman \emph{et al.} used the \emph{distortion exponent} in the asymptotic expected distortion
\begin{equation}
\Delta \triangleq -\lim_{\rho \rightarrow \infty} \frac{ED(\rho)}{\log \rho}
\end{equation}
related to SCBR as a metric to compare different source-channel coding approaches for parallel channels. Note that $\rho$ denotes the SNR and $ED$ denotes the expected end-to-end distortion over all possible channel states. Choudhury and Gibson presented the relations between the end-to-end distortion and the outage capacity for AWGN channels \cite{choudhury_itaw07}. Zoffoli \emph{et al.} studied the characteristics of the distortions in MIMO systems with different strategies, with and without CSIT \cite{zoffoli_globecom08, zoffoli_allerton08}.

In \cite{holliday_allerton04, holliday, holliday_it08}, for tandem source-channel coding systems, assuming optimal block quantization and SNR-dependent rate-adaptive transmission as in \cite{zheng}, Holliday and Goldsmith investigated the expected end-to-end distortion for uncorrelated block-fading MIMO channels based on the results in \cite{zheng, gersho_it79, hochwald_it97}. They gave the following upper bound on the total expected distortion (MSE)
\begin{equation}
ED \leq 2^{-\frac{2r}{\eta}\log\rho+O(1)}+2^{-(N_r-r)(N_t-r)\log\rho+o(\log\rho)}
\label{c1eq:holli1}
\end{equation}
where $\eta$ is the SCBR, $r$ is the multiplexing gain (the source rate scales like $r\log\rho$), $N_t$ is the number of transmit antennas and $N_r$ is the number of receive antennas. Considering the asymptotic high SNR regime,  they proposed that the multiplexing gain $r$ should satisfy
\begin{equation}
\Delta_{\mathrm{sep}}^*  = (N_r-r)(N_t-r) = \frac{2r}{\eta}+o(1)
\label{c1eq:holli2}
\end{equation}
where $\Delta_{\mathrm{sep}}^*$ is the optimum distortion exponent for tandem source-channel coding systems.  The explicit expression of $\Delta_{\mathrm{sep}}^*$ is given by Theorem 2 in \cite{caire_it07},
\begin{equation}
\Delta_{\mathrm{sep}}^* (\eta) = \frac{2\left[jd^*(j-1)-(j-1)d^*(j)\right]}{2+\eta(d^*(j-1)-d^*(j))}, \quad
\eta \in \left[\frac{2(j-1)}{d^*(j-1)}, \frac{2j}{d^*(j)}\right)
\end{equation}
for $j=1, \ldots, N_{\min}$ with $N_{\min} = \min\{N_t, N_r\}$ and $d^*(j) = (N_t-j)(N_r-j)$.
Note that a factor 2 appears here and there because the source is real whereas the channel is complex.

In \cite{caire_allerton05, caire_it07}, assuming an uncorrelated block-fading MIMO channel, perfect CSIT and joint source-channel coding, Caire and Narayanan derived the \emph{optimum distortion exponent}
\begin{equation}
\Delta^*(\eta) = \sum_{i=1}^{N_{\min}}\min\left\{\frac{2}{\eta}, 2i-1+\vert N_t-N_r\vert\right\}
\label{c1eq:optdexp}
\end{equation}
which is larger than $\Delta_{\mathrm{sep}}^*$. Concurrently, the same result as (\ref{c1eq:optdexp}) was also provided by Gunduz and Erkip \cite{gunduz_itw06, gunduz_it08}.

Caire-Narayanan's and Gunduz-Erkip's derivations are extensions to the outage probability analysis in \cite{zheng}. They jointly considered the MIMO-channel mutual information in bits per channel use (bpcu) \cite{telatar}
\begin{equation}
\mathcal{I} = \log\left\vert \mathbf{I}_{N_r\times N_r} + \frac{\rho}{N_t}\mathbf{H}\mathbf{H}^{\dag} \right\vert
\end{equation}
where $\mathbf{H}$ is the $N_r\times N_t$ complex channel matrix with $N_t$ inputs and $N_r$ outputs , the rate-distortion function for a $\mathcal{N}(0,1)$ source \cite{cover}
\begin{equation}
D(R_s) = 2^{-2R_s}
\end{equation}
where $R_s$ is the source rate, and Shannon's rate-capacity inequality for outage-free transmission \cite{shannon}
\begin{equation}
R_s \leq R_c.
\end{equation}

\subsection{Problem statement}

Nevertheless, there is something more than the distortion exponent in the expected end-to-end distortion. Intuitively, for high SNR, the form of the \emph{asymptotic optimum expected end-to-end distortion} can be written as
\begin{equation}
ED^*_{\mathrm{asy}} = \mu^*(\rho)\rho^{-\Delta^*}
\end{equation}
where the multiplicative \emph{optimum distortion factor} $\mu^*(\rho)$ varies less than exponentially:
\begin{equation}
\lim_{\rho\rightarrow \infty} \frac{\log\mu^*(\rho)}{\log \rho} = 0.
\end{equation}

For an analog-source transmission system, its performance at a high SNR could be measured via the asymptotic expected end-to-end distortion
\begin{equation}
ED_{\mathrm{asy}} = \mu(\rho)\rho^{-\Delta}
\end{equation}
where the distortion exponent $\Delta$ and the distortion factor $\mu(\rho)$ could be obtained analytically.
\begin{figure}
\centering
\includegraphics[width = 9cm, height = 8cm]{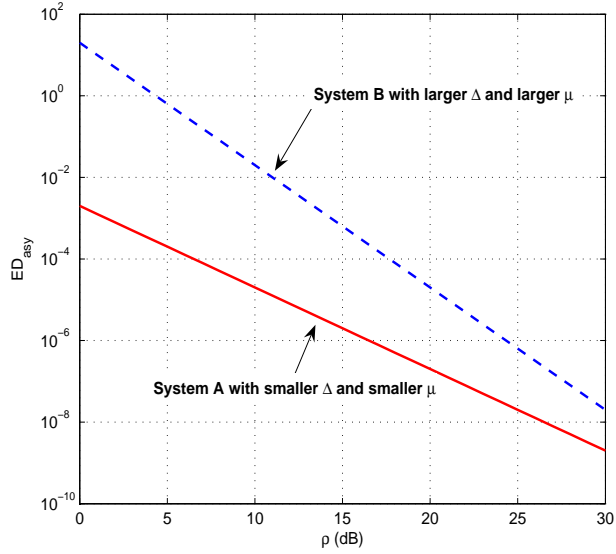}
\caption{Impact of distortion factor}
\label{c1f:egdraw}
\end{figure}

Obviously, we cannot say that a system achieves the optimum asymptotic expected distortion $ED^*_{\mathrm{asy}}$ if what it achieves is only the optimum distortion exponent $\Delta^*$. Also, we cannot say that in the regime of practical high SNR, the scheme with a larger distortion exponent must perform better than the other. As illustrated by Fig.\ref{c1f:egdraw}, in the regime of practical high SNR, the effect of the distortion factor must be taken into consideration. In other words, for practical cases, studying only the optimum distortion exponent is insufficient and giving the closed-form expression of $ED^*_{\mathrm{asy}}$ is more meaningful. Using $ED^*_{\mathrm{asy}}$ as an objective, via analyzing both $\Delta^*$ and $\mu^*(\rho)$, it is possible to design an analog-source transmission system performing better than the existing systems in the regime of practical high SNR.

For deriving $ED^*_{\mathrm{asy}}$, if we could obtain the analytical expression of $ED^*$ valid  for any SNR, then it would be easy to find out the optimum distortion factor $\mu^*(\rho)$ and the optimum distortion exponent $\Delta^*$.

\subsection{Outline}

In this paper, for the cases of spatially uncorrelated channel and correlated channel, we give an analytical expression of the optimum expected end-to-end distortion $ED^*$ in an outage-free MIMO system valid for any SNR, based on which the optimum asymptotic expected end-to-end distortion $ED^*_{\mathrm{asy}}$ is derived. The simulation results agree with our analysis with the derived results on the joint impact of the numbers of antennas, source-to-channel bandwidth ratio and spatial correlation.

The remainder of this paper is organized as follows. The system model is given in Section \ref{sec_model}. In Section \ref{sec_pre}, the preliminaries such as the mathematical definitions, properties and lemmas are presented for deriving the main results in Section \ref{main}. Section \ref{num} is dedicated to the simulation results, numerical analysis, and discussions. Finally, the contributions of this paper are concluded in Section \ref{sec_con}, with our perspectives on future work.

Throughout the paper, vectors and matrices are denoted by bold characters, $\vert \mathbf{A} \vert$ denotes the determinant of matrix $\mathbf{A}$ and $\{a_{ij}\}_{i,j = 1,..., N}$ is an $N\times N$ matrix with entries $a_{ij}$, $i,j = 1, ..., N$. Also, $\mathbb{E}\{\cdot\}$ denotes expectation and, in particular, $\mathbb{E}_x\{\cdot\}$ denotes expectation over the random variable $x$. The superscript $^\dag$ denotes conjugate transpose. $(a)_n$ denotes $\Gamma(a+n)/\Gamma(a)$. $\log$ refers to the logarithm with base 2.

\section{MIMO System Model}\label{sec_model}

Assume that a continuous-time white Gaussian source $s(t)$ of bandwidth $W_s$ and source power $P_s$ is to be transmitted over a flat block-fading MIMO channel of bandwidth $W_c$ and the system is working on ``short" frames due to strict time delay constraint, \emph{i.e.}, no time diversity can be exploited. The transmission system is supposed to be free of outage, \emph{e.g.}, the transmitter knows the instantaneous channel capacity by scalar feedback and does joint source-channel coding. Let $\hat{s}(t)$ denote the recovered source at the receiver.

Suppose a $K$-to-$(N_t\times T)$ joint source-channel encoder is employed at the transmitter \cite{caire_it07}, which maps the source block $\mathbf{s}^{'} \in \mathbb{R}^K$ onto channel codewords $\mathbf{X} \in \mathbb{C}^{N_t\times T}$. Herein, the source block $\mathbf{s}^{'}$ is composed of $K$ source samples, $N_t$ is the number of transmit antennas, and $T$ is the number of channel uses for transmitting one block. The corresponding source-channel decoder is a mapping $\mathbb{C}^{N_r\times T} \rightarrow \mathbb{R}^K$ that maps the channel output $\mathbf{Y} = \{\mathbf{y}_1, \ldots, \mathbf{y}_T\}$ into an approximation $\hat{\mathbf{s}}^{'}$. Assuming the continuous-time source $s(t)$ is sampled by a Nyquist sampler, $2W_s$ samples per second, and the bandlimited MIMO channel is used as a discrete-time channel at $2W_c$ channel uses per second \cite[pp.\,247-250]{cover}, we have the SCBR
\begin{equation}
\eta = \frac{W_s}{W_c} = \frac{K}{T}.
\end{equation}

At the $t^{\mathrm{th}}$ channel use, the output of the discrete-time flat block-fading MIMO channel with $N_t$ inputs and $N_r$ outputs is
\begin{equation}
\mathbf{y}_t = \mathbf{H}\mathbf{x}_t + \mathbf{n}_t, \quad t = 1,\ldots, T
\end{equation}
where $\mathbf{x}_t \in \mathbb{C}^{N_t}$ is the transmitted signal satisfying the long-term power constraint $\mathbb{E}[\mathbf{x}_t^H\mathbf{x}_t] = P$, $\mathbf{H} \in \mathbb{C}^{N_r\times N_t}$ is the channel matrix whose entries $h_{ij} \sim  \mathcal{CN}(0,1)$, $\mathbf{n}_t \in \mathbb{C}^{N_r}$ is the additive white noise matrix whose entries $n_{t,i} \sim \mathcal{CN}(0,\sigma_n^2)$. Note that the SNR per receive antenna is $\rho = P/\sigma_n^2$.

In the case of uncorrelated channel, the $h_{ij}$'s are independent to each other. In the case of  spatially correlated channel, we have the correlation matrix $\bm{\Sigma} = \mathbb{E}(\mathbf{H}\mathbf{H}^{\dag})$ which is assumed to be a full-rank matrix with distinct  eigenvalues $\bm{\sigma} = \{\sigma_1, \sigma_2, \cdots, \sigma_{N_{\min}}\}$, $0 < \sigma_1 < \sigma_2 < \cdots <\sigma_{N_{\min}}$. It can be seen that in the case of uncorrelated channel, $\bm{\Sigma}$ is an identity matrix with $\sigma_1 = \sigma_2 = \cdots = \sigma_{N_{\min}} = 1$.

\section{Mathematical Preliminaries}\label{sec_pre}

The mathematical properties, definitions and lemmas in this section will be used in the derivations for the main results.

\subsection{Mathematical properties and definitions}

We shall use the integral of an exponential function
\begin{equation}
\begin{split}
&\int_0^\infty e^{-px}x^{q-1}(1+ax)^{-\nu}dx = a^{-q}\Gamma(q)\Psi(q, q+1-\nu, p/a), \\
& \quad \Re\{q\} > 0, \quad \Re\{p\} > 0, \quad \Re\{a\} > 0.\\
\end{split}
\label{eq_int}
\end{equation}
as introduced in \cite[pp.\,365]{tables}. This involves the confluent hypergeometric function
\begin{equation}
\Psi(a,c; x) = \frac{1}{\Gamma(a)}\int_0^{\infty}e^{-xt}t^{a-1}(1+t)^{c-a-1}dt, \quad \Re\{a\}>0
\end{equation}
which satisfies (with $y=\Psi$)
\begin{equation}
x\frac{d^2y}{dx^2}+(c-x)\frac{dy}{dx}-ay = 0.
\end{equation}
Bateman has given a thorough analysis on $\Psi(a,c;x)$ \cite[pp.\, 257-261]{bateman}. In particular, he obtained the expressions on $\Psi(a, c; x)$ for small $x$ as Table \ref{tb_psisx} shows. In Appendix \ref{app1}, we also state some of his more general results for any $x$, which we will use for the analysis in the case of spatially correlated MIMO channel.
\begin{table}
\caption{$\Psi(a, c; x)$ for small $x$, real $c$}
\label{tb_psisx}
\centering
\begin{tabular}{|c|c|}
\hline
$\mathbf{c}$ & $\mathbf{\Psi}$ \\ \hline
$c>1$ & $x^{1-c}\Gamma(c-1)/\Gamma(a)+ o\left(x^{1-c}\right)$ \\ \hline
$c=1$ & $-\left[\Gamma(a)\right]^{-1}\log x + o\left(|\log x|\right)$ \\ \hline
$c<1$ & $\Gamma(1-c)/\Gamma(a-c+1) + o(1)$ \\ \hline
\end{tabular}
\end{table}

\subsection{Mathematical lemmas}
The proofs of the mathematical lemmas below can be found in Appendices \ref{appl1}-\ref{appl7}.

\newtheorem{lemma}{Lemma}
\begin{lemma}\label{l1}
Define an $m\times m$ full-rank matrix $\mathbf{W}(x)$ whose $(i,j)^{\mathrm{th}}$ entry is of the  form $c_{ij}x^{\mathrm{min}\{a, i+j\}}$, $c_{ij} \neq 0$, $x, a \in \mathbb{R}^+$, $1 \leqslant i,j \leqslant m $. Then
\begin{equation}
\lim_{x\rightarrow 0}\frac{\mathrm{log}\vert\mathbf{W}(x)\vert}{\mathrm{log}x} = \sum_{i=1}^m\mathrm{min}\{a, 2i\}.
\end{equation}
\end{lemma}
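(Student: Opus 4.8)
The plan is to compute the determinant $\vert\mathbf{W}(x)\vert$ via the Leibniz expansion $\sum_{\pi\in S_m}\mathrm{sgn}(\pi)\prod_{i=1}^m c_{i\pi(i)}x^{\min\{a,\,i+\pi(i)\}}$ and identify which permutation $\pi$ minimizes the total exponent $e(\pi)=\sum_{i=1}^m\min\{a,\,i+\pi(i)\}$; since $x\to 0^+$, the term of smallest exponent dominates, so $\lim_{x\to0}\log\vert\mathbf{W}(x)\vert/\log x$ equals that minimal exponent, \emph{provided} the sum of the coefficients of all permutations attaining that minimal exponent does not vanish. First I would argue that the identity permutation $\pi=\mathrm{id}$ achieves $e(\pi)=\sum_{i=1}^m\min\{a,2i\}$, which is exactly the claimed value, so it remains to show no permutation does strictly better and that cancellation does not occur at the minimum.

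For the optimality of the identity, I would use a rearrangement-type argument on the concave nondecreasing function $g(t)=\min\{a,t\}$. Writing $e(\pi)=\sum_i g(i+\pi(i))$ and noting that $\sum_i(i+\pi(i))=2\sum_i i$ is independent of $\pi$, concavity of $g$ means the sum $\sum_i g(i+\pi(i))$ is minimized when the arguments $i+\pi(i)$ are as ``spread out'' as possible and maximized when they are clustered; one checks that $\pi=\mathrm{id}$ gives arguments $2,4,6,\dots,2m$, which are the extreme configuration, while any other permutation produces a multiset that is majorized by... — more carefully, I would show directly that for any $\pi$, $\sum_i g(i+\pi(i))\ge\sum_i g(2i)$ by a swapping argument: if $\pi\neq\mathrm{id}$, there exist indices with $\pi(i)>i$ and $\pi(j)<j$ for $i<j$ (or one can induct), and exchanging the images can only decrease... (sign to be checked against concavity direction) — in any case this is a standard, short combinatorial lemma and it pins down the minimal exponent as $\sum_i\min\{a,2i\}$.

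The delicate point, and the main obstacle, is ruling out cancellation: several permutations may tie for the minimal exponent $e(\pi)=\sum_i\min\{a,2i\}$, and a priori their signed coefficient-products could sum to zero, which would make the true leading exponent strictly larger and break the claim. Here I would use the hypotheses $c_{ij}\neq0$ and, crucially, that $\mathbf{W}(x)$ has full rank: full rank for all small $x>0$ forces $\vert\mathbf{W}(x)\vert\not\equiv0$, so \emph{some} nonzero power of $x$ survives, but I need the surviving power to be exactly the minimal one. I expect the resolution is that the set of minimizing permutations, together with the structure of the exponents $\min\{a,i+j\}$, forces the dominant part of the determinant to factor as (a nonzero monomial in $x$) times a determinant of a submatrix with generic nonzero entries $c_{ij}$ that is itself nonsingular — essentially, the entries for which $i+j\le a$ contribute $x^{i+j}$ and behave like a generalized Vandermonde/Hankel pattern, while the saturated entries ($i+j>a$) all carry the same power $x^a$ and their block contributes a constant-coefficient determinant; one then shows this constant is nonzero either by the full-rank assumption or by a direct evaluation. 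I would isolate this as the technical heart of the argument and handle it by carefully grouping the Leibniz terms by their exponent, showing the minimal-exponent coefficient equals a product/determinant that is nonzero under the stated hypotheses.
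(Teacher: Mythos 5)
Your route---Leibniz expansion, identification of the minimal exponent over permutations, then a non-cancellation argument---is genuinely different from the paper's, which proceeds by induction on $m$ via the Schur determinant formula: the paper peels off the last row and column, argues that the scalar Schur complement has order exactly $x^{p(2m)}$ because the correction term $\mathbf{b}^T\mathbf{W}_{m-1}^{-1}\mathbf{b}$ has order $p(2m-1)-p(2m-2)+p(2m-1)\geq p(2m)$, and then invokes the induction hypothesis on the leading block. The combinatorial core is the same in both: your rearrangement step is exactly the paper's pairwise concavity inequality for $p(n)=\min\{a,n\}$, and your claim that the identity permutation minimizes $\sum_i\min\{a,i+\pi(i)\}$ is correct (the multiset $\{2i\}$ majorizes $\{i+\pi(i)\}$ for every permutation $\pi$, and $t\mapsto\min\{a,t\}$ is concave), so that half of your argument goes through once you fix the direction you were unsure about.

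The genuine gap is the one you flagged yourself: non-cancellation of the minimal-exponent coefficient. Your hoped-for resolution---that full rank of $\mathbf{W}(x)$ forces the signed sum of coefficient products over the minimizing permutations to be nonzero---does not work. Full rank for $x>0$ only says $\vert\mathbf{W}(x)\vert\not\equiv 0$, and the surviving leading power can be strictly larger than $\sum_i\min\{a,2i\}$. Concretely, take $m=3$ and $5<a<6$, so that only the $(3,3)$ entry is saturated; the minimizing permutations are the identity and the transposition of $\{1,2\}$, and the minimal-exponent coefficient is $c_{33}(c_{11}c_{22}-c_{12}c_{21})$. Choosing $c_{11}=c_{12}=c_{21}=c_{22}=1$ with $c_{13}\neq c_{23}$ and $c_{31}\neq c_{32}$ kills that coefficient while leaving $\vert\mathbf{W}(x)\vert=(c_{23}-c_{13})(c_{31}-c_{32})x^{12}\neq 0$, so the matrix is full rank for every $x>0$ yet the limit is $12$ rather than $6+a$. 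Hence the conclusion requires a nondegeneracy condition on certain minors of $(c_{ij})$, not merely $c_{ij}\neq 0$ plus full rank; in the paper's application this is supplied implicitly because the relevant leading sub-determinants are evaluated in closed form (Lemmas~3 and~4) and are nonzero. Note that the paper's own Schur-complement induction quietly leans on the same nondegeneracy (it assumes no cancellation between $c_{mm}x^{p(2m)}$ and the correction term), so you have correctly located the delicate point of the argument---but your proof as proposed cannot close it from the stated hypotheses alone, and the block-factorization you sketch for the mixed saturated/unsaturated case would have to be carried out and supplemented with an explicit nonvanishing assumption or computation.
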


\begin{lemma}\label{l2}
Define an $m\times m$ Hankel matrix $\mathbf{W}(x)$ whose $(i,j)^{\mathrm{th}}$ entry is of the form $c_{i+j}x^{i+j}$, $c_{i+j} \neq 0$, $x\in \mathbb{R}^+$, $1 \leqslant i,j \leqslant m $. Then, each summand in the determinant of $\mathbf{W}(x)$ has the same degree $m(m+1)$ over $x$.
\end{lemma}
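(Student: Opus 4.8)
The plan is to work directly from the Leibniz (permutation) expansion of the determinant. Writing $\mathbf{W}(x) = \{c_{i+j}x^{i+j}\}_{i,j=1}^m$, we have
\[
\vert \mathbf{W}(x)\vert = \sum_{\pi \in S_m} \mathrm{sgn}(\pi)\prod_{i=1}^m c_{i+\pi(i)}\, x^{\,i+\pi(i)} = \sum_{\pi\in S_m}\mathrm{sgn}(\pi)\Big(\prod_{i=1}^m c_{i+\pi(i)}\Big) x^{\sum_{i=1}^m (i+\pi(i))},
\]
so each summand is a monomial in $x$ whose exponent is $\sum_{i=1}^m (i+\pi(i))$. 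The main (and essentially only) observation is that, since $\pi$ is a bijection of $\{1,\dots,m\}$, the multiset $\{\pi(1),\dots,\pi(m)\}$ equals $\{1,\dots,m\}$, so that $\sum_{i=1}^m \pi(i) = \sum_{i=1}^m i = m(m+1)/2$. Hence
\[
\sum_{i=1}^m (i+\pi(i)) = \sum_{i=1}^m i + \sum_{i=1}^m \pi(i) = \frac{m(m+1)}{2} + \frac{m(m+1)}{2} = m(m+1),
\]
independently of the permutation $\pi$. The hypothesis $c_{i+j}\neq 0$ ensures $\prod_{i=1}^m c_{i+\pi(i)}\neq 0$, so every summand is genuinely a nonzero monomial of degree $m(m+1)$ rather than degenerating to a lower degree, which is exactly the claim.

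There is no real obstacle here; the only point needing care is the reading of the word ``summand'': it denotes the per-permutation term in the Leibniz formula \emph{before} collecting like powers of $x$, so the assertion is not that $\vert\mathbf{W}(x)\vert$ is itself a monomial (the terms may partially cancel) but that each contributing term carries the same power $x^{m(m+1)}$. Equivalently — and this is the phrasing I would actually prefer for later use — one may pull a factor $x^i$ out of row $i$ and a factor $x^j$ out of column $j$, so that
\[
\vert\mathbf{W}(x)\vert = x^{\,2\sum_{i=1}^m i}\,\bigl\vert\{c_{i+j}\}_{i,j=1}^m\bigr\vert = x^{m(m+1)}\,\bigl\vert\{c_{i+j}\}_{i,j=1}^m\bigr\vert,
\]
which makes the homogeneity of degree $m(m+1)$ transparent and, incidentally, shows that the polynomial degree of $\vert\mathbf{W}(x)\vert$ is $m(m+1)$ whenever the constant Hankel matrix $\{c_{i+j}\}$ is nonsingular. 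This row/column-scaling viewpoint is the one I would carry into the applications of Lemma~\ref{l2} in Section~\ref{main}.
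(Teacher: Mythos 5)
Your proof is correct and is essentially the paper's own argument: both expand the determinant via the Leibniz permutation sum and observe that the exponent $\sum_{i=1}^m (i+\pi(i)) = m(m+1)$ is independent of $\pi$ because $\pi$ is a bijection. The additional row/column-scaling remark is a nice reformulation but does not change the substance.
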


\begin{lemma}\label{l3}
Define an $m\times m$ Hankel matrix $\mathbf{W}$ whose $(i,j)^{\mathrm{th}}$ entry is $\Gamma(a+i+j-1)$, $1 \leqslant i,j \leqslant m $, $a\in \mathbb{R}$. Then
\begin{equation}
\vert\mathbf{W}\vert = \prod_{k=1}^m \Gamma(k)\Gamma(a+k).
\end{equation}
\end{lemma}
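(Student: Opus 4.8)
The plan is to evaluate the Hankel determinant $\vert\mathbf{W}\vert=\det\{\Gamma(a+i+j-1)\}_{i,j=1}^{m}$ by reducing it, through elementary row and column operations, to a Vandermonde determinant. The device that makes everything work is the factorization
\[
\Gamma(a+i+j-1)=\Gamma(a+i-1)\,(a+i-1)_{j},
\]
where $(x)_{j}=x(x+1)\cdots(x+j-1)$ is a monic polynomial of degree $j$ in $x$ with vanishing constant term (consistent with the Pochhammer notation $(a)_{n}=\Gamma(a+n)/\Gamma(a)$ fixed in the Introduction). Throughout I assume $a$ avoids the nonpositive integers that would send some matrix entry — or the claimed right-hand side — to infinity; the remaining real values of $a$ follow at the end by noting that both sides are meromorphic in $a$ and coincide on an open set.

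First I would pull $\Gamma(a+i-1)$ out of the $i$-th row, which gives $\vert\mathbf{W}\vert=\bigl(\prod_{i=1}^{m}\Gamma(a+i-1)\bigr)\,\det\{(a+i-1)_{j}\}_{i,j=1}^{m}$. Next, because $\{(x)_{1},\dots,(x)_{m}\}$ and $\{x,x^{2},\dots,x^{m}\}$ are two graded bases (the $j$-th element is monic of degree exactly $j$) of the space of polynomials of degree at most $m$ with zero constant term, the transition matrix between them is unipotent and upper triangular; hence column operations turn $(a+i-1)_{j}$ into $(a+i-1)^{j}$ without changing the determinant. Writing $x_{i}:=a+i-1$, this determinant is $\det\{x_{i}^{j}\}_{i,j=1}^{m}=\bigl(\prod_{i=1}^{m}x_{i}\bigr)\prod_{1\leqslant i<l\leqslant m}(x_{l}-x_{i})$, obtained by extracting $x_{i}$ from row $i$ and recognizing the Vandermonde determinant $\det\{x_{i}^{j-1}\}$.

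It then remains to substitute $x_{i}=a+i-1$ and bookkeep. The Vandermonde factor becomes $\prod_{1\leqslant i<l\leqslant m}(l-i)=\prod_{k=1}^{m-1}k!$ (each value $d\in\{1,\dots,m-1\}$ arises as a difference $m-d$ times, which is exactly its multiplicity in $1!\,2!\cdots(m-1)!$), and $\prod_{i=1}^{m}x_{i}=\prod_{i=1}^{m}(a+i-1)$. Combining this with the row-factor $\prod_{i=1}^{m}\Gamma(a+i-1)$ and using $(a+i-1)\Gamma(a+i-1)=\Gamma(a+i)$ collapses the product to $\prod_{i=1}^{m}\Gamma(a+i)$; together with $\prod_{k=1}^{m-1}k!=\prod_{k=1}^{m}\Gamma(k)$ this yields $\vert\mathbf{W}\vert=\prod_{k=1}^{m}\Gamma(k)\Gamma(a+k)$, as claimed.

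I do not expect a genuine obstacle here; the only step that deserves an explicit word is the column reduction — i.e.\ checking that passing from the rising-factorial columns to the pure-power columns is an invertible, determinant-preserving transformation — plus the brief analyticity remark needed to cover all real $a$. As an alternative I could instead write $\Gamma(a+i+j-1)=\int_{0}^{\infty}t^{a+i+j-2}e^{-t}\,dt$, view $\mathbf{W}$ as the moment (Gram) matrix of the weight $t^{a}e^{-t}$ on $(0,\infty)$, and invoke the standard identity that such a Hankel determinant equals $\prod_{k=0}^{m-1}\Vert p_{k}\Vert^{2}$ with $p_{k}$ the monic generalized Laguerre polynomials, for which $\Vert p_{k}\Vert^{2}=k!\,\Gamma(a+k+1)$; but the row/column-reduction argument above is shorter and entirely self-contained.
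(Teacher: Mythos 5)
Your proof is correct, but it takes a genuinely different route from the paper's. The paper proves Lemma~3 by direct Gaussian elimination on the rows of $\mathbf{W}$: it posits that after clearing column $l$ the $(i,j)$ entry has the form $\theta_{l,i,j}\,\Gamma(a+i+j-1-l)$, derives a recursion for the multipliers $\theta_{l,i,j}$, guesses $\theta_{l,i,j}=\Gamma(j)/\Gamma(j-l)$, verifies the guess inductively, and reads off the pivots $\Gamma(i)\Gamma(a+i)$ from the resulting triangular matrix. You instead factor $\Gamma(a+i+j-1)=\Gamma(a+i-1)\,(a+i-1)_j$, extract the row factors, use the unipotent change of basis between the rising factorials $\{(x)_1,\dots,(x)_m\}$ and the pure powers $\{x,\dots,x^m\}$ to reduce the remaining determinant to a Vandermonde determinant in the nodes $x_i=a+i-1$, and then collect $\prod_{i<l}(l-i)=\prod_{k=1}^{m}\Gamma(k)$ and $(a+i-1)\Gamma(a+i-1)=\Gamma(a+i)$. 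Both arguments are complete and elementary; I checked your bookkeeping (the factorization, the triangularity of the transition matrix, the multiplicity count $\prod_{d=1}^{m-1}d^{\,m-d}=\prod_{k=1}^{m-1}k!$, and the low-order cases $m=1,2$) and it is all sound. What your route buys is conceptual transparency and brevity: the answer drops out of a standard Vandermonde evaluation with no inductive guess to verify, and the same template immediately handles related determinants (indeed the orthogonal-polynomial/moment-matrix remark you add explains \emph{why} the product formula holds). What the paper's route buys is uniformity: essentially the same elimination scheme, with a modified recursion for $\theta_{l,i,j}$, is reused verbatim to prove Lemmas~4 and~5, where the entries $\Gamma(a+i+j-1)\Gamma(b-i-j+1)$ and $\Gamma(a+i-j)$ do not factor as cleanly into a row term times a polynomial column system, so your reduction would need nontrivial adaptation there. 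Your caveat about nonpositive-integer values of $a+i+j-1$ (where the entries themselves are undefined) and the meromorphic-continuation remark are appropriate and, if anything, more careful than the paper, which does not address the issue.
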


\begin{lemma}\label{l4}
Define an $m\times m$ Hankel matrix $\mathbf{W}$ whose $(i,j)^{\mathrm{th}}$ entry is $\Gamma(a+i+j-1)\Gamma(b-i-j+1)$ where $1 \leqslant i,j \leqslant m $, $m \geqslant 2$ and $a, b \in \mathbb{R}$. Then
\begin{equation}
\begin{split}
\vert\mathbf{W}\vert  &= \Gamma(a+1)\Gamma(b-1)\Gamma^{m-1}(a+b)\\
& \quad \times \prod_{k=2}^m  \Gamma(k)\Gamma(a+k)\frac{\Gamma(b-2k+2)\Gamma(b-2k+1)}{\Gamma(a+b-k+1)\Gamma(b-k+1)}.
\end{split}
\end{equation}
\end{lemma}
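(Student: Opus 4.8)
The plan is to evaluate the Hankel determinant $\vert\mathbf{W}\vert$ with entries $w_{ij}=\Gamma(a+i+j-1)\Gamma(b-i-j+1)$ by reducing it, via elementary row and column operations, to a triangular (or near-triangular) form whose diagonal entries can be read off and multiplied. The first thing I would do is substitute $k=i+j-1$ so that the entry depends only on the anti-diagonal index, and observe that $w_{ij}$ can be written as $\Gamma(a+1)\Gamma(b-1)$ times a product of ratios $\frac{(a+1)_{i+j-2}}{(b-1)_{-(i+j-2)}}$ — i.e. the factors $(a+k)$ appear in the numerator as $i+j$ grows and the factors $(b-k)$ appear in the denominator. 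Pulling a common factor out of each row and each column is the natural normalization: factor $\Gamma(a+i)/\Gamma(a+1)$-type quantities from row $i$ and similar from column $j$, leaving a matrix whose entries are pure rational functions of $a,b$.

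Next I would recognize the resulting normalized matrix as a classical evaluable determinant. After the normalization, entry $(i,j)$ should collapse to something of the form $\frac{1}{(a+b-1)_{\,?}}$ or a single Pochhammer ratio, at which point the determinant is a Cauchy-type or Cauchy–Binet-type determinant — the kind evaluated in standard references (e.g. Krattenthaler's determinant calculus). Concretely, I expect the identity
\begin{equation}
\det\!\left(\frac{1}{(x_i+y_j)}\right)=\frac{\prod_{i<j}(x_i-x_j)(y_i-y_j)}{\prod_{i,j}(x_i+y_j)}
\end{equation}
or its Pochhammer analogue $\det\!\left(\frac{1}{(c)_{i+j}}\right)$ to be the engine; one substitutes the appropriate arithmetic progressions for $x_i,y_j$ and the Vandermonde products telescope into Gamma functions. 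Alternatively, an induction on $m$ works: perform the row operations $R_i\leftarrow R_i-\lambda R_{i-1}$ with $\lambda$ chosen to kill the first column below the top, show the resulting $(m-1)\times(m-1)$ minor is again of the same shape with $a,b$ shifted by fixed amounts, and match the recursion to the claimed product $\prod_{k=2}^m\Gamma(k)\Gamma(a+k)\frac{\Gamma(b-2k+2)\Gamma(b-2k+1)}{\Gamma(a+b-k+1)\Gamma(b-k+1)}$. The base case $m=2$ is a direct $2\times 2$ computation, which also explains why the hypothesis $m\ge 2$ is imposed and why $\Gamma(a+1)\Gamma(b-1)\Gamma^{m-1}(a+b)$ sits outside the product.

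The main obstacle I anticipate is bookkeeping the Gamma-function factors correctly through the normalization: the entry mixes an \emph{increasing} argument $a+i+j-1$ with a \emph{decreasing} argument $b-i-j+1$, so the factorization into row-part times column-part is not symmetric, and the "leftover" rational matrix is not the textbook $\frac{1}{(c)_{i+j}}$ on the nose — it will be $\frac{1}{(b-i-j+1)_{\text{something}}}$ or require an extra reflection $\Gamma(z)\Gamma(1-z)=\pi/\sin\pi z$ step to convert a rising factorial in the denominator into a falling one. Getting the exponents on $\Gamma(a+b)$ and the precise shifts in $\Gamma(b-2k+2)\Gamma(b-2k+1)$ right is where sign errors and off-by-one errors will creep in; I would pin these down by checking $m=2$ and $m=3$ explicitly against the closed form before trusting the general manipulation. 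Throughout I will assume $a$, $b$ are generic (the stated $a,b\in\mathbb{R}$) so that no Gamma argument is a non-positive integer and all the ratios are well defined; the final formula then extends by continuity.
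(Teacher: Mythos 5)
Your proposal is a plan rather than a proof, and the hard content is exactly the part you defer. The paper's own argument is your second route made precise: Gaussian elimination from the bottom row up, where after clearing the $l^{\mathrm{th}}$ column the $(i,j)$ entry has the form $\theta_{l,i,j}\,\Gamma(a+i+j-1-l)\Gamma(b-i-j+1)$; the whole proof consists of deriving the recursion for $\theta_{l,i,j}$, guessing the closed form $\theta_{l,i,j}=\prod_{k=1}^{l}\frac{(j-k)(a+b-k)}{b-i-l+k}$, verifying it by induction on $l$, and reading the product of the resulting diagonal entries. Your version of this step asserts that the $(m-1)\times(m-1)$ minor after one elimination "is again of the same shape with $a,b$ shifted by fixed amounts," but that is not true on the nose: the eliminated matrix acquires the nontrivial coefficients $\theta_{l,i,j}$, which depend on \emph{both} $i$ and $j$. (They do happen to factor as a function of $i$ times a function of $j$, so a row- and column-rescaled self-similarity can be salvaged, but establishing that factorization is precisely the work you have skipped.) Without the closed form for these coefficients, the recursion does not "match the claimed product"; it is the claimed product.

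Your first route has a similar gap plus a misidentification. After pulling $\Gamma(a+i)\Gamma(b-i)$ out of row $i$, the residual entry is a Pochhammer ratio of the type $\frac{(a+i)_{j-1}}{(b-i-j+1)_{j-1}}$, not a Cauchy kernel $\frac{1}{x_i+y_j}$: it is not a function of $i+j$ alone after normalization, and the Cauchy determinant identity you name does not apply. The relevant classical evaluation is the Pochhammer-ratio determinant (a different lemma in Krattenthaler's catalogue), and identifying the correct one, substituting the right arithmetic progressions, and telescoping the resulting Gamma quotients into $\Gamma^{m-1}(a+b)\prod_{k=2}^{m}\Gamma(k)\Gamma(a+k)\frac{\Gamma(b-2k+2)\Gamma(b-2k+1)}{\Gamma(a+b-k+1)\Gamma(b-k+1)}$ is the entire content of the lemma. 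Checking $m=2$ and $m=3$ numerically, as you propose, confirms the target but does not prove it for general $m$. In short: both of your routes are viable in principle, but as written neither contains the key computation, and the one concrete identity you do invoke (the Cauchy determinant) is not the right one for this matrix.
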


\begin{lemma}\label{l8}
Define an $m\times m$ Toeplitz matrix $\mathbf{W}$ whose $(i,j)^{\mathrm{th}}$ entry is $\Gamma(a+i-j)$, $1 \leqslant i,j \leqslant m $, $a\in \mathbb{R}$. Then
\begin{equation}
\vert\mathbf{W}\vert = (-1)^{\frac{m(m-1)}{2}}\prod_{k=1}^m \Gamma(k)\Gamma(a+k-m).
\end{equation}
\end{lemma}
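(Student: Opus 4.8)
The plan is to reduce the Toeplitz determinant $\vert\mathbf{W}\vert$ with $(i,j)^{\mathrm{th}}$ entry $\Gamma(a+i-j)$ to a triangular (or anti-triangular) form by exploiting the recurrence $\Gamma(z+1)=z\,\Gamma(z)$, which lets us relate entries in adjacent rows or columns. First I would extract from each entry a common Gamma-function factor so that the matrix becomes a matrix of Pochhammer-type polynomials. Concretely, write $\Gamma(a+i-j)$ with reference point depending on the column $j$: for fixed $j$, the entries down column $j$ are $\Gamma(a+1-j),\Gamma(a+2-j),\dots,\Gamma(a+m-j)$, and each equals $\Gamma(a+1-j)$ times an ascending factorial $(a+1-j)_{i-1}$, a polynomial of degree $i-1$ in the index. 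Pulling $\prod_{j=1}^m\Gamma(a+1-j)=\prod_{k=1}^m\Gamma(a+k-m)$ out of the determinant (one factor per column) leaves a determinant whose $(i,j)$ entry is the polynomial $(a+1-j)_{i-1}$ of degree $i-1$ in $i$ — i.e. a generalized Vandermonde / Wronskian-type structure.

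The second step is to evaluate that remaining polynomial determinant. Since the $i$-th row consists of degree-$(i-1)$ polynomials evaluated at the $m$ "nodes" $a, a-1,\dots,a-m+1$ (one per column), row reduction by elementary operations that replace each monic polynomial basis by the falling/rising factorial basis turns it into a Vandermonde determinant in those nodes — but the nodes are consecutive integers shifted by $a$, so the Vandermonde product $\prod_{1\le p<q\le m}(x_p-x_q)$ becomes a product of differences of consecutive integers, which collapses to $\prod_{k=1}^{m}\Gamma(k)$ (up to sign), since $\prod_{1\le p<q\le m}(q-p)=\prod_{k=1}^{m-1}k! =\prod_{k=1}^m\Gamma(k)$. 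Care is needed with the leading coefficients produced when changing polynomial bases: because each $(a+1-j)_{i-1}$ is monic of degree $i-1$ in the appropriate variable, those leading coefficients are all $1$ and contribute nothing, but the bookkeeping must be done explicitly.

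The remaining step is to pin down the sign $(-1)^{m(m-1)/2}$. This comes from the ordering of the nodes: the natural Vandermonde is $\prod_{p<q}(x_p-x_q)$ with $x_j=a+1-j$ decreasing in $j$, so each of the $\binom{m}{2}$ factors $x_p-x_q$ with $p<q$ is positive (equals $q-p$), but matching this to the standard Vandermonde determinant convention introduces one sign flip per inversion, i.e. a global $(-1)^{\binom{m}{2}}=(-1)^{m(m-1)/2}$. Alternatively, and perhaps more cleanly, I would reverse the column order at the outset (a permutation of cost $(-1)^{\lfloor m/2\rfloor}$... — better to track it through the Vandermonde as above). The main obstacle I anticipate is precisely this careful sign and leading-coefficient accounting in the change of polynomial basis; the Gamma-function algebra itself is routine once the Vandermonde structure is exposed. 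An alternative route that avoids some of this is induction on $m$: perform the column operations $C_j \leftarrow C_j - C_{j+1}$ (which, via $\Gamma(a+i-j)-\Gamma(a+i-j-1)=(a+i-j-2)\Gamma(a+i-j-1)$, after further row operations) peels off one factor $\Gamma(m)\Gamma(a+1-m)$ and reduces $\mathbf{W}$ to the same type of matrix of size $m-1$; matching the base case $m=1$, $\vert\mathbf{W}\vert=\Gamma(a)$, then closes the induction.
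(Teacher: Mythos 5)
Your argument is correct, and it takes a genuinely different route from the paper. The paper's proof of Lemma~\ref{l8} (Appendix~\ref{appl8}) is a one-line reference to the proof of Lemma~\ref{l3}: Gaussian elimination, but by column operations from right to left, with the elimination multipliers tracked through a recursion for coefficients $\theta_{l,i,j}$ whose closed form is guessed and verified inductively, and the determinant read off the resulting anti-triangular form. You instead factor $\Gamma(a+1-j)$ out of column $j$, giving $\prod_{j=1}^m\Gamma(a+1-j)=\prod_{k=1}^m\Gamma(a+k-m)$, and observe that the residual matrix has $(i,j)$ entry $(a+1-j)_{i-1}$, so that row $i$ is a single monic polynomial of degree $i-1$ evaluated at the nodes $x_j=a+1-j$; its determinant therefore equals the Vandermonde determinant
\begin{equation*}
\prod_{1\le p<q\le m}(x_q-x_p)=\prod_{1\le p<q\le m}(p-q)=(-1)^{\frac{m(m-1)}{2}}\prod_{k=1}^{m-1}k!=(-1)^{\frac{m(m-1)}{2}}\prod_{k=1}^{m}\Gamma(k),
\end{equation*}
which assembles to the claimed formula. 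This is arguably cleaner and more self-contained than the paper's version: the leading-coefficient bookkeeping you worry about is disposed of in one line, since replacing the rows $x_j^{i-1}$ by any monic degree-$(i-1)$ polynomials is a sequence of determinant-preserving row operations (top degree down), and the sign falls out transparently from the nodes being decreasing in $j$ rather than from counting elimination steps. Your fallback induction via $C_j\leftarrow C_j-C_{j+1}$ is essentially the paper's elimination strategy in disguise. The one caveat, shared equally by the paper's statement and proof, is that for integer $a\le m-1$ some factors $\Gamma(a+1-j)$ sit at poles, so the identity should be read as holding for generic $a$ (or by continuation); this does not affect how the lemma is used elsewhere in the paper.
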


\begin{lemma}\label{l6}
Define
\begin{align}
f(n) &= \prod_{k=1}^m \frac{\Gamma(n-m-a+k)}{\Gamma(n-k+1)}, \\
g(n) &= n^{am}f(n), \end{align}
subject to $a \in \mathbb{R}^+$, $m,n \in \mathbb{Z}^+$, $n \geq m$, and $ n-m+1 \geq a$.
Then both $f(n)$ and $g(n)$ are monotonically decreasing.
\end{lemma}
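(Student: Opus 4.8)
The plan is to prove discrete monotonicity by examining the ratios $f(n+1)/f(n)$ and $g(n+1)/g(n)$ and showing each is at most one. The essential observation is that the Gamma functions telescope: applying $\Gamma(x+1)=x\Gamma(x)$ to each of the $2m$ Gamma factors gives
\begin{equation}
\frac{f(n+1)}{f(n)}=\prod_{k=1}^{m}\frac{\Gamma(n-m-a+k+1)}{\Gamma(n-m-a+k)}\cdot\frac{\Gamma(n-k+1)}{\Gamma(n-k+2)}=\prod_{k=1}^{m}\frac{n-m-a+k}{n-k+1}=\prod_{i=0}^{m-1}\frac{n-a-i}{n-i},
\end{equation}
the last step being only a reindexing of the numerator and denominator factor sets. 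Under the hypotheses $n\ge m$ and $n-m+1\ge a$, every Gamma argument occurring in $f(n+1)$ is strictly positive, so the ratio is well defined; in the boundary case $n-m+1=a$ one has $f(n)=+\infty$ and the desired inequality is trivial, so we may assume throughout that all quantities are finite and positive.

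For $f$ itself, note that in each factor of the product above the denominator satisfies $n-i\ge n-m+1\ge 1>0$ and the numerator satisfies $n-a-i\ge(n-m+1)-a\ge 0$, while $\frac{n-a-i}{n-i}<1$ exactly because $a>0$. Hence $0\le f(n+1)/f(n)<1$, so $f$ is (strictly) decreasing.

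For $g$ we must check that the polynomial factor $n^{am}$ does not overturn this decrease. Taking base-$2$ logarithms, $g(n+1)\le g(n)$ is equivalent to
\begin{equation}
am\,\ln\!\Big(1+\tfrac1n\Big)\le\sum_{i=0}^{m-1}\big[\ln(n-i)-\ln(n-a-i)\big],
\end{equation}
where the common factor $\ln 2$ from the change of base has cancelled. The plan is to bound both sides by $am/n$: on the left, $\ln(1+1/n)\le 1/n$; on the right, since $1/t$ is decreasing, $\ln(n-i)-\ln(n-a-i)=\int_{n-a-i}^{\,n-i}\frac{dt}{t}\ge\frac{a}{\,n-i\,}\ge\frac{a}{n}$, so the sum is at least $am/n$. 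Chaining these inequalities yields the claim, and since both bounds are strict $g$ is in fact strictly decreasing as well.

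The routine parts are the telescoping identity and the monotonicity of $f$, which follows the instant $a>0$ is invoked. The only point that requires a little care is the behaviour of $g$: one has to show the correction $n^{am}$ is dominated, and the clean way is the two-sided comparison with $am/n$ above, which ultimately rests on the elementary bound $\sum_{i=0}^{m-1}\frac{1}{n-i}\ge\frac{m}{n}$ (term by term, since $n-i\le n$) together with $\ln(1+1/n)\le 1/n$. A secondary nuisance is bookkeeping the positivity of the Gamma arguments near the boundary $n-m+1=a$, which is handled by the remark in the first paragraph.
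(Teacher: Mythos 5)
Your proof is correct, and while your treatment of $f$ coincides with the paper's (both telescope $f(n+1)/f(n)$ down to $\prod_{i=0}^{m-1}(n-a-i)/(n-i)$ and note each factor is below one because $a>0$), your argument for $g$ takes a genuinely different route. The paper first weakens the product to $\left(\frac{n-a}{n}\right)^m$, using that $x\mapsto (x-a)/x$ is increasing so each factor is at most $(n-a)/n$; this reduces the claim to the single-variable inequality $(n+1)^a\,\frac{n-a}{n}<n^a$, which it proves by introducing $h(x)=(x+1)^{a+1}-x^{a+1}-(a+1)(x+1)^a$ and applying the mean value theorem to $\phi(x)=x^{a+1}$ together with the convexity of $\phi$. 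You instead take logarithms and sandwich both sides against $am/n$: the bound $\ln(1+1/n)\le 1/n$ on one side, and $\ln(n-i)-\ln(n-a-i)=\int_{n-a-i}^{\,n-i}dt/t\ge a/(n-i)\ge a/n$ on the other. Your version is somewhat leaner --- it avoids the auxiliary function and the mean value theorem, needs only the concavity of the logarithm, and delivers strict decrease in one chain --- while the paper's version isolates a reusable scalar inequality. One point worth keeping explicit: your integral bound requires $n-a-i>0$ for all $i\le m-1$, i.e.\ the strict inequality $n-m+1>a$; you already quarantined the boundary case $n-m+1=a$ (where $\Gamma(n-m+1-a)$ diverges) at the outset, which the paper's proof silently ignores, so your argument is if anything slightly more careful there.
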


\begin{lemma}\label{l7}
Let $(a)_n$ denote $\Gamma(a+n)/\Gamma(a)$, $a\in \mathbb{R}$,  $n \in \mathbb{Z}^+$. Then
\begin{equation}
(a+1)_n = (-1)^n(-a-n)_n.
\end{equation}
\end{lemma}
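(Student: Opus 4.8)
The plan is to unfold both Pochhammer symbols into finite products by iterating the Gamma functional equation $\Gamma(z+1)=z\Gamma(z)$, and then simply read off the sign.

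First I would rewrite the left-hand side. By definition and repeated use of $\Gamma(z+1)=z\Gamma(z)$,
\[
(a+1)_n=\frac{\Gamma(a+1+n)}{\Gamma(a+1)}=(a+n)(a+n-1)\cdots(a+1)=\prod_{k=1}^{n}(a+k).
\]
Next I would treat the right-hand side the same way. Here $(-a-n)_n=\Gamma(-a-n+n)/\Gamma(-a-n)=\Gamma(-a)/\Gamma(-a-n)$, and peeling off $n$ successive factors via the Gamma recursion ($\Gamma(-a)=(-a-1)\Gamma(-a-1)$, and so on) gives
\[
(-a-n)_n=\frac{\Gamma(-a)}{\Gamma(-a-n)}=(-a-1)(-a-2)\cdots(-a-n)=\prod_{j=1}^{n}(-a-j)=(-1)^{n}\prod_{j=1}^{n}(a+j).
\]
Combining the two displays, $(-1)^{n}(-a-n)_n=\prod_{k=1}^{n}(a+k)=(a+1)_n$, which is exactly the asserted identity.

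The only point requiring a word of care — and the nearest thing to an obstacle — is that $\Gamma(-a)$ and $\Gamma(-a-n)$ have poles when $a$ is an integer, so the ratio form of $(-a-n)_n$ is literally defined only for non-integer $a$. On that dense set the computation above is valid; since both sides are polynomials in $a$ of degree $n$, the identity then extends to all $a\in\mathbb{R}$ by continuity, consistent with reading $(-a-n)_n$ throughout as the product $\prod_{j=1}^{n}(-a-j)$. No deeper machinery (Bateman's tables or the earlier lemmas) is needed.
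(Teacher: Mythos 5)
Your proof is correct, but it takes a genuinely different route from the paper's. The paper proves Lemma~\ref{l7} via Euler's reflection formula: it writes $\Gamma(a+n+1)\Gamma(-a-n)=\pi/\sin\left(\pi(a+n+1)\right)$ and $\Gamma(a+1)\Gamma(-a)=\pi/\sin\left(\pi(a+1)\right)$, divides, and uses the $\pi$-periodicity of $\sin$ up to sign to extract the factor $(-1)^n$. You instead unfold both Pochhammer symbols into the finite products $\prod_{k=1}^n(a+k)$ and $\prod_{j=1}^n(-a-j)$ via the recursion $\Gamma(z+1)=z\Gamma(z)$ and read off the sign factor directly. Your route is more elementary (no special-function identities needed) and, notably, you are the only one to address the degenerate cases: when $a$ is an integer both $\Gamma(-a)$ and $\Gamma(-a-n)$ have poles, and the paper's reflection-formula manipulation silently divides expressions of the form $\pi/\sin(\pi x)$ that are infinite there, whereas you handle this cleanly by observing both sides are degree-$n$ polynomials in $a$ and extending by continuity. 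The paper's approach buys a one-line derivation from a standard identity; yours buys transparency and validity at the integer points without extra argument.
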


\section{Main Results} \label{main}

\subsection{Uncorrelated MIMO channel}

\newtheorem{theorem}{Theorem}
\begin{theorem}[Optimum Expected Distortion over an Uncorrelated MIMO Channel]
Assume a continuous-time white Gaussian source $s(t)$ of bandwidth $W_s$ and power $P_s$ to be transmitted over an uncorrelated block-fading MIMO channel of bandwidth $W_c$.  The optimum expected end-to-end distortion is
\begin{equation}
ED_{\mathrm{unc}}^{*}(\eta) = \frac{P_s\vert\mathbf{U}(\eta)\vert}{\prod_{k=1}^{N_{\mathrm{min}}}\Gamma(N_\mathrm{max} -k+1)\Gamma(N_{\mathrm{min}}-k+1)}
\label{Dlb}
\end{equation}
where $\eta = W_s/W_c$ (SCBR), $N_{\mathrm{min}} = \min\{N_t, N_r\}$, $N_{\mathrm{max}} = \max\{N_t, N_r\}$, and  $\mathbf{U}(\eta)$ is an $N_{\mathrm{min}}\times N_{\mathrm{min}}$ Hankel matrix whose $(i,j)^{\mathrm{th}}$ entry is
\begin{equation}
\begin{split}
u_{ij}(\eta) = \left(\frac{\rho}{N_t}\right)^{-d_{ij}}\Gamma(d_{ij}) \Psi\left(d_{ij}, d_{ij}+1-\frac{2}{\eta}; \frac{N_t}{\rho}\right)
\label{uij}
\end{split}
\end{equation}
where $d_{ij} = i+j+\vert N_t-N_r\vert-1$, $1 \leq i, j \leq N_{\min}$, and $\Psi(a, b; x)$ is the $\Psi$ function (see \cite[pp.\,257-261]{bateman}). This theorem is valid for any SNR.
\label{t1}
\end{theorem}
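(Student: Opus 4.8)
The plan is to identify, for each channel realization, the optimum end-to-end distortion, then average it over the Wishart ensemble of $\mathbf{H}\mathbf{H}^{\dag}$, and finally reduce the resulting multidimensional integral to a determinant whose entries are evaluated with the integral formula~(\ref{eq_int}). First I would fix $\mathbf{H}$: because the link is outage-free and the transmitter adapts to the instantaneous capacity, the source rate can be matched to the channel, $\eta R_s = R_c = \mathcal{I}$ with $\mathcal{I}=\log\bigl|\mathbf{I}_{N_r}+\tfrac{\rho}{N_t}\mathbf{H}\mathbf{H}^{\dag}\bigr|$. Combining this with the Gaussian rate--distortion relation $D(R_s)=P_s\,2^{-2R_s}$ and Shannon's rate--capacity inequality $R_s\le R_c/\eta$, which is tight at optimality, gives the optimum conditional distortion $D^{*}(\mathbf{H})=P_s\,2^{-2\mathcal{I}/\eta}=P_s\bigl|\mathbf{I}_{N_r}+\tfrac{\rho}{N_t}\mathbf{H}\mathbf{H}^{\dag}\bigr|^{-2/\eta}$; averaging over the channel yields $ED_{\mathrm{unc}}^{*}(\eta)=P_s\,\mathbb{E}_{\mathbf{H}}\bigl[\,\bigl|\mathbf{I}_{N_r}+\tfrac{\rho}{N_t}\mathbf{H}\mathbf{H}^{\dag}\bigr|^{-2/\eta}\,\bigr]$, which is finite because the integrand never exceeds $1$.

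Next I would pass to eigenvalues: write $\bigl|\mathbf{I}_{N_r}+\tfrac{\rho}{N_t}\mathbf{H}\mathbf{H}^{\dag}\bigr|=\prod_{i=1}^{N_{\min}}\bigl(1+\tfrac{\rho}{N_t}\lambda_i\bigr)$, where $\lambda_1,\dots,\lambda_{N_{\min}}$ are the nonzero eigenvalues of $\mathbf{H}\mathbf{H}^{\dag}$, and invoke the (unordered) Wishart eigenvalue density, proportional to $\prod_{i<j}(\lambda_i-\lambda_j)^2\prod_i\lambda_i^{|N_t-N_r|}e^{-\lambda_i}$ with total mass $N_{\min}!\,\prod_{k=1}^{N_{\min}}\Gamma(k)\,\Gamma(|N_t-N_r|+k)$. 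By Lemma~\ref{l3} (with $a=|N_t-N_r|$ and $m=N_{\min}$) this product of Gamma functions equals exactly $\prod_{k=1}^{N_{\min}}\Gamma(N_{\max}-k+1)\Gamma(N_{\min}-k+1)$, i.e.\ the denominator of~(\ref{Dlb}). The expectation then becomes an $N_{\min}$-fold integral of a squared Vandermonde determinant against the product weight $w(\lambda)=\lambda^{|N_t-N_r|}e^{-\lambda}\bigl(1+\tfrac{\rho}{N_t}\lambda\bigr)^{-2/\eta}$, which I would collapse by the Andr\'eief (Gram / Cauchy--Binet) identity, $\int\prod_{i<j}(\lambda_i-\lambda_j)^2\prod_i w(\lambda_i)\,d\lambda_i = N_{\min}!\,\det\bigl[\int_0^\infty\lambda^{i+j-2}w(\lambda)\,d\lambda\bigr]_{i,j=1}^{N_{\min}}$, the two factors $N_{\min}!$ cancelling.

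It then remains to evaluate each entry, $\int_0^\infty\lambda^{\,i+j-2+|N_t-N_r|}e^{-\lambda}\bigl(1+\tfrac{\rho}{N_t}\lambda\bigr)^{-2/\eta}\,d\lambda$, which is precisely~(\ref{eq_int}) with $p=1$, $q=d_{ij}=i+j-1+|N_t-N_r|$, $a=\rho/N_t$ and $\nu=2/\eta$ --- the conditions $\Re\{q\}>0$, $\Re\{p\}>0$, $\Re\{a\}>0$ holding for every $\rho>0$ and $\eta>0$ --- so that the entry equals $(\rho/N_t)^{-d_{ij}}\Gamma(d_{ij})\,\Psi\bigl(d_{ij},d_{ij}+1-\tfrac{2}{\eta};N_t/\rho\bigr)$, which is $u_{ij}(\eta)$ as in~(\ref{uij}). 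Combining the three steps gives~(\ref{Dlb}), valid at any SNR.

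The integral evaluation and the Vandermonde-to-determinant reduction are routine once the setup is in place. The points that need care are the bookkeeping of the $N_{\min}!$ factors and of the Gamma-function normalizing constant --- handled by Lemma~\ref{l3} --- and the justification that $D^{*}(\mathbf{H})=P_s\,2^{-2\mathcal{I}/\eta}$ is simultaneously a converse and achievable in the outage-free, instantaneous-CSIT regime, the latter being essentially the source--channel separation argument of the cited works \cite{caire_it07,gunduz_it08} rather than anything new.
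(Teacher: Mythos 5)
Your proposal is correct and follows essentially the same route as the paper: obtain the conditional optimum distortion $P_s\bigl|\mathbf{I}+\tfrac{\rho}{N_t}\mathbf{H}\mathbf{H}^{\dag}\bigr|^{-2/\eta}$ from the rate--distortion function and Shannon's rate--capacity inequality, reduce the expectation to an $N_{\min}\times N_{\min}$ Hankel determinant, and evaluate each entry with the tabulated integral (\ref{eq_int}) to reach (\ref{uij}). The only difference is that you derive the determinant representation from first principles via the Wishart eigenvalue density and Andr\'eief's identity (checking the normalization against Lemma \ref{l3}), whereas the paper imports that step wholesale from Chiani \emph{et al.} \cite{chiani}; your version is self-contained but computationally identical.
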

\begin{proof}%Proof of T1
The source rate of the source $s(t)$ is \cite{shannon48}
\begin{equation}
R_s = W_s\log \frac{P_s}{D}
\end{equation}
where $D$ is the distortion (MSE).

Under the assumption that the transmitter only knows the instantaneous channel capacity $R_c$, the covariance matrix of the transmitted vector $\mathbf{x}$ at the transmitter is taken to be a scaled identity matrix $P/N_t\cdot\mathbf{I}_{N_t}$. As stated in  \cite{telatar}, the mutual information per MIMO channel use is
\begin{equation}
\mathcal{I}(\mathbf{x};\mathbf{y}) = \log\left\vert\mathbf{I}_{N_r}+\frac{\rho}{N_t}\mathbf{H}\mathbf{H}^\dag\right\vert.
\end{equation}
And as stated in \cite[pp.\,248-250]{cover}, a channel of bandwidth $W_c$ can be represented by samples taken $1/2W_c$ seconds apart, i.e., the channel is used at $2W_c$ channel uses per second as a time-discrete channel. Hence, the channel capacity (bit/second) is
\begin{equation}
R_c = 2 W_c\mathcal{I} = 2W_c \log\left\vert\mathbf{I}_{N_r}+\frac{\rho}{N_t}\mathbf{H}\mathbf{H}^\dag\right\vert. \label{Rc}
\end{equation}
Substituting (\ref{Rc}) into Shannon's rate-capacity inequality
\begin{equation}
R_s \leq R_c,
\end{equation}
we get the optimum end-to-end distortion
\begin{equation}
D^*(\eta) = P_s\left\vert\mathbf{I}_{N_r}+ \frac{\rho}{N_t}\mathbf{H}\mathbf{H}^\dag\right\vert^{-\frac{2}{\eta}}.
\end{equation}

Thereby, the optimum expected end-to-end distortion is
\begin{equation}
ED^*(\eta) = P_s\mathbb{E}_{\mathbf{H}}\left\vert\mathbf{I}_{N_r}+ \frac{\rho}{N_t}\mathbf{H}\mathbf{H}^\dag\right\vert^{-\frac{2}{\eta}}
\label{EDopt},
\end{equation}
whose form is analogous to the moment generating function of capacity in \cite{chiani}. By the mathematical results given by Chiani \emph{et al.}  \cite{chiani} for the expectation over an uncorrelated MIMO Gaussian channel $\mathbf{H}$, we have
\begin{equation}
ED^*_{\mathrm{unc}}(\eta) = P_sK\vert\mathbf{U(\eta)}\vert
\end{equation}
where $\mathbf{U}(\eta)$ is an $N_{\min}\times N_{\min}$ Hankel matrix with $(i,j)^{\mathrm{th}}$ entry given by
\begin{equation}
u_{ij}(\eta) = \int_0^{\infty}x^{N_{\max}-N_{\min}+j+i-2}e^{-x} \left(1+\frac{\rho}{N_t}x\right)^{-\frac{2}{\eta}}dx
\label{uij2}
\end{equation}
and
\begin{equation}
K = \frac{1}{\prod_{k=1}^{N_{\mathrm{min}}}\Gamma(N_\mathrm{max} -k+1)\Gamma(N_{\mathrm{min}}-k+1)}.
\end{equation}
By the integral solution (\ref{eq_int}), (\ref{uij2}) can be written in the analytic form
\begin{equation}
u_{ij}(\eta) = \left(\frac{\rho}{N_t}\right)^{-d_{ij}}\Gamma(d_{ij}) \Psi\left(d_{ij}, d_{ij}+1-\frac{2}{\eta}; \frac{N_t}{\rho}\right),
\end{equation}
This concludes the proof of the theorem.
\end{proof}
$\\$

Theorem \ref{t1} tells us that the analytical expression of $ED^*_{\mathrm{unc}}$ is a polynomial in $\rho^{-1}$. Therefore, for high SNR, the optimum asymptotic expected end-to-end distortion is of the form
\begin{equation}
ED^*_{\mathrm{asy,unc}} = \mu^*_{\mathrm{unc}}(\eta)\rho^{-\Delta_{\mathrm{unc}}^*(\eta)}
\end{equation}
where $\Delta_{\mathrm{unc}}^*(\eta)$ is the \emph{optimum distortion exponent} satisfying
\begin{equation}
\Delta^*_{\mathrm{unc}}(\eta) = -\lim_{\rho\rightarrow\infty} \frac{\log ED^*_{\mathrm{unc}}(\eta)}{\log\rho}
\end{equation}
and $\mu^*_{\mathrm{unc}}$ is the accompanying \emph{optimum distortion factor} satisfying
\begin{equation}
\lim_{\rho\rightarrow\infty}\frac{\log\mu^*_{\mathrm{unc}}(\eta)}{\log\rho}=0.
\end{equation}
Since $ED^*_{\mathrm{unc}}$ is concave in the log-log scale and monotonically decreasing with SNR and  $ED^*_{\mathrm{asy,unc}}$ is the tangent of the curve $ED^*_{\mathrm{unc}}$ at the point where SNR is infinitely high, we see that the asymptotic tangent line $ED^*_{\mathrm{asy,unc}}$ is always above the curve $ED^*_{\mathrm{unc}}$, \emph{i.e.}, $ED^*_{\mathrm{asy,unc}}$ is always worse than $ED^*_{\mathrm{unc}}$.

The closed-form expressions of  $\Delta_{\mathrm{unc}}^*(\eta)$ and  $\mu^*_{\mathrm{unc}}(\eta)$ are given as follows.
\\

\begin{theorem}[Optimum Distortion Exponent over an Uncorrelated MIMO Channel]The optimum distortion exponent is
\begin{equation}
\Delta^*_{\mathrm{unc}}(\eta) = \sum_{k=1}^{N_{\min}}\min\left\{\frac{2}{\eta}, 2k-1+\vert N_t-N_r \vert\right\}.
\label{c2eq:t2}
\end{equation}
\label{t2}
\end{theorem}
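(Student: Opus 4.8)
The plan is to read the distortion exponent off the closed form of Theorem~\ref{t1}. Since $P_s$ and the normalizing constant $K$ do not depend on $\rho$, we have $\Delta^*_{\mathrm{unc}}(\eta) = -\lim_{\rho\to\infty}\log\bigl|\mathbf{U}(\eta)\bigr|/\log\rho$, so everything reduces to finding the leading power of $\rho^{-1}$ in the Hankel determinant $|\mathbf{U}(\eta)|$. First I would set $x = N_t/\rho$, so that $x\to 0$ as $\rho\to\infty$ and
\[
u_{ij}(\eta) = x^{d_{ij}}\,\Gamma(d_{ij})\,\Psi\!\left(d_{ij},\,d_{ij}+1-\tfrac{2}{\eta};\,x\right),\qquad d_{ij}=i+j+|N_t-N_r|-1 .
\]
Feeding $a=d_{ij}$ and $c=d_{ij}+1-\tfrac{2}{\eta}$ into the three rows of Table~\ref{tb_psisx}, the $\Psi$-factor behaves like $x^{\,2/\eta-d_{ij}}$ when $c>1$, like a nonzero constant when $c<1$, and like $-\log x$ when $c=1$; in every case
\[
u_{ij}(\eta) = c_{ij}\,x^{\min\{2/\eta,\;d_{ij}\}}\bigl(1+o(1)\bigr),
\]
with $c_{ij}\neq 0$ (ratios of Gamma functions evaluated off their poles), up to a harmless extra factor $\log x$ on the single anti-diagonal where $2/\eta=d_{ij}$.

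Next I would write $\min\{2/\eta,d_{ij}\} = (|N_t-N_r|-1) + \min\{a',\,i+j\}$ with $a' := 2/\eta-|N_t-N_r|+1$, factor $x^{\,|N_t-N_r|-1}$ out of every entry, and recognize the residual $N_{\min}\times N_{\min}$ matrix as exactly the one treated by Lemma~\ref{l1} (its hypotheses hold provided $a'>0$; lower-order corrections and the $\log x$ factors cannot alter the exponent limit, since $\log|\log x|/\log x\to 0$ and a perturbation of order $o(x^{e})$ in each entry contributes only $o(x^{P})$ to a determinant of leading exponent $P$). Lemma~\ref{l1} then gives
\[
\lim_{x\to 0}\frac{\log|\mathbf{U}(\eta)|}{\log x}
= N_{\min}(|N_t-N_r|-1) + \sum_{i=1}^{N_{\min}}\min\{a',2i\}
= \sum_{i=1}^{N_{\min}}\min\bigl\{\tfrac{2}{\eta},\;2i-1+|N_t-N_r|\bigr\}.
\]
Since $\log x=\log N_t-\log\rho$ gives $\log x/\log\rho\to-1$, this converts to $\Delta^*_{\mathrm{unc}}(\eta)$ being equal to the displayed sum, which is the claim (with $i$ renamed $k$).

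The step I expect to be the main obstacle is the regime where Lemma~\ref{l1} does not apply, namely $2/\eta\le|N_t-N_r|-1$ (equivalently $a'\le 0$): there every $d_{ij}$ exceeds $2/\eta$, so every entry carries the \emph{same} leading power $u_{ij}(\eta)\sim x^{2/\eta}\,\Gamma(d_{ij}-2/\eta)$, and one must rule out cancellation of the leading term of the determinant. I would handle this by factoring $x^{2/\eta}$ out of the whole matrix and noting that the residual leading matrix is $\bigl\{\Gamma(|N_t-N_r|-2/\eta+i+j-1)\bigr\}_{i,j}$, a Gamma–Hankel matrix of the type in Lemma~\ref{l3}; by that lemma its determinant equals $\prod_{k=1}^{N_{\min}}\Gamma(k)\Gamma(|N_t-N_r|-2/\eta+k)$, and since $2/\eta\le|N_t-N_r|-1$ all those arguments are $\ge 2$, so it is nonzero. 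Hence $|\mathbf{U}(\eta)|$ has leading power $x^{\,N_{\min}\cdot 2/\eta}$ and $\Delta^*_{\mathrm{unc}}(\eta)=N_{\min}\cdot\tfrac{2}{\eta}$, which matches the general formula because in this regime every summand $\min\{2/\eta,\,2k-1+|N_t-N_r|\}$ already equals $2/\eta$. The finitely many boundary values of $\eta$ (where $2/\eta=d_{ij}$ and a $\log x$ enters) are then covered either by the $\log$-insensitivity argument above or by continuity, the target formula being continuous and piecewise-linear in $2/\eta$.
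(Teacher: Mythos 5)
Your proposal is correct and follows essentially the same route as the paper: read the leading power of $N_t/\rho$ in each entry $u_{ij}$ off Table \ref{tb_psisx}, then pass from entrywise exponents to the determinant via Lemma \ref{l1}. The one point where you go beyond the paper is worthwhile: the paper invokes Lemma \ref{l1} unconditionally, whereas its hypothesis $a\in\mathbb{R}^+$ fails for $2/\eta\le \vert N_t-N_r\vert-1$ (all entries then share the same leading exponent $2/\eta$ and cancellation must be excluded), and your separate non-degeneracy check via the Gamma--Hankel determinant of Lemma \ref{l3} closes that gap.
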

\begin{proof}
This optimum distortion exponent appeared already in \cite{caire_it07, gunduz_itw06}. However, a different proof is provided here.

%Proof of T2
Consider $u_{ij}(\eta)$ in Theorem \ref{t1}. When $\rho$ is large, $N_t/\rho$ is small. We thus refer to Table \ref{tb_psisx} and see that, for high SNR, $u_{ij}(\eta)$ approaches  $e_{ij}(\eta)\rho^{-\Delta_{ij}(\eta)}$ with
\begin{equation}
\Delta_{ij}(\eta)  = \min\left\{\frac{2}{\eta},  i + j -1+\vert N_t-N_r \vert \right\}
\end{equation}
and
\begin{equation}
\lim_{\rho\rightarrow\infty}\frac{\log e_{ij}(\eta)}{\log \rho} = 0.
\end{equation}
Straightforwardly, in the regime of high SNR, the asymptotic form of $\vert\mathbf{U}(\eta)\vert$ can be represented by $\vert\mathbf{E}(\eta)\vert\rho^{-\Delta^*_{\mathrm{unc}}(\eta)}$
with
\begin{equation}
\lim_{\rho\rightarrow\infty}\frac{\log\vert\mathbf{E}(\eta)\vert}{\log\rho} = 0.
\end{equation}
By Lemma \ref{l1}, we obtain that
\begin{equation}
\Delta^*_{\mathrm{unc}}(\eta) = \sum_{k=1}^{N_{\min}}\min\left\{\frac{2}{\eta}, 2k-1+\vert N_t-N_r \vert\right\}.
\end{equation}
This concludes the proof of this theorem.
\end{proof}

\begin{figure*}[!t]
\hrulefill
\begin{align}
\small
\kappa_l(\beta, t, m, n) &=
\begin{cases}
\Gamma(n-m+1)\frac{\Gamma(\beta-n+m-1)}{\Gamma(\beta)}\prod_{k=2}^t \Gamma(k)\Gamma(n-m+k) \\
\quad \times \frac{\Gamma(\beta-n+m-2k+2)\Gamma(\beta-n+m-2k+1)} {\Gamma(\beta-k+1)\Gamma(\beta-n+m-k+1)}, \quad & t > 1;\\ \Gamma(n-m+1)\frac{\Gamma(\beta-n+m-1)}{\Gamma(\beta)}, \quad & t = 1; \\
1, \quad & t=0.\label{kl}
\end{cases}\\
\kappa_h(\beta, t, m, n) &=
\begin{cases}
\prod_{k=1}^t\Gamma(k)\Gamma(n-m-\beta+k), \qquad & t > 0;\\
1, \quad & t = 0. \label{kh}
\end{cases}
\end{align}
\hrulefill
\end{figure*}

\normalsize
\begin{theorem}[Optimum Distortion Factor over an Uncorrelated MIMO Channel]
Define two four-tuple functions $\kappa_l(\beta, t, m, n)$ and $\kappa_h(\beta, t, m, n)$ for $\beta \in \mathbb{R}^+$ and $t \in \{0, \mathbb{Z}^+\}$ as in (\ref{kl}) and (\ref{kh}).
The optimum distortion factor $\mu^*_{\mathrm{unc}}(\eta)$ is given as follows:

\begin{compactenum}
%T3.1
\item For $2/\eta \in (0, \vert N_t-N_r\vert+1)$, referred to as \emph{the high SCBR regime} (HSCBR), the optimum distortion factor is
\begin{equation}
\mu^*_{\mathrm{unc}}(\eta) = P_s{N_t}^{\Delta^*_{\mathrm{unc}}}\;\frac{\kappa_h(\frac{2}{\eta}, N_{\min}, N_{\min}, N_{\max} )}{\prod_{k=1}^{N_{\min}}\Gamma(N_{\max}-k+1)\Gamma(N_{\min}-k+1)}.
\label{chi}
\end{equation}
It decreases monotonically  with $N_{\max}$.
%T3.2
\item For $2/\eta \in (N_t+N_r-1, +\infty)$, referred to as \emph{the low SCBR regime} (LSCBR), the optimum distortion factor is
\begin{equation}
\mu^*_{\mathrm{unc}}(\eta) = P_s{N_t}^{\Delta^*_{\mathrm{unc}}}\;\frac{\kappa_l(\frac{2}{\eta}, N_{\min}, N_{\min}, N_{\max} )}{\prod_{k=1}^{N_{\min}}\Gamma(N_{\max}-k+1)\Gamma(N_{\min}-k+1)}.
\end{equation}
%T3.3
\item For $2/\eta \in [\vert N_t-N_r\vert+1, N_t+N_r-1]$, referred to as \emph{the moderate SCBR regime} (MSCBR), the optimum distortion factor is
\begin{equation}
\mu^*_{\mathrm{unc}}(\eta) = \begin{cases} P_s{N_t}^{\Delta^*_{\mathrm{unc}}}\;\frac{\kappa_l(\frac{2}{\eta}, l, N_{\min}, N_{\max})\kappa_h(\frac{2}{\eta}-2l, N_{\min}-l, N_{\min}, N_{\max})}{\prod_{k=1}^{N_{\min}}\Gamma(N_{\max}-k+1)\Gamma(N_{\min}-k+1)}, \\
\quad \mod\{\frac{2}{\eta}+1-\vert N_t-N_r\vert,2\} \neq 0; \\
P_s{N_t}^{\Delta^*_{\mathrm{unc}}}\log\rho\;\frac{\kappa_l(\frac{2}{\eta}, l-1, N_{\min}, N_{\max})\kappa_h(\frac{2}{\eta}-2l, N_{\min}-l, N_{\min}, N_{\max})} {\prod_{k=1}^{N_{\min}}\Gamma(N_{\max}-k+1)\Gamma(N_{\min}-k+1)},\\
\quad \mod\{\frac{2}{\eta}+1-\vert N_t-N_r\vert,2\} = 0
\end{cases}
\end{equation}
where $l = \left\lfloor \frac{\frac{2}{\eta}+1-\vert N_t-N_r\vert}{2}\right\rfloor$.
\end{compactenum} \label{t3}
\end{theorem}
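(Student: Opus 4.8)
The plan is to extract the leading $\rho$-behaviour of the exact identity $ED^{*}_{\mathrm{unc}}(\eta)=P_{s}K\,|\mathbf{U}(\eta)|$ of Theorem~\ref{t1}, where $K=1/\prod_{k=1}^{N_{\min}}\Gamma(N_{\max}-k+1)\Gamma(N_{\min}-k+1)$ and $\mathbf{U}(\eta)$ is the $N_{\min}\times N_{\min}$ Hankel matrix with entries $u_{ij}(\eta)=(\rho/N_{t})^{-d_{ij}}\Gamma(d_{ij})\Psi\!\left(d_{ij},d_{ij}+1-\frac{2}{\eta};\frac{N_{t}}{\rho}\right)$, $d_{ij}=i+j+|N_{t}-N_{r}|-1$. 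First I would feed the small-argument asymptotics of Table~\ref{tb_psisx} into each $u_{ij}$ (with $a=d_{ij}$, $c=d_{ij}+1-\frac{2}{\eta}$, $x=N_{t}/\rho\to0$). Cancelling the prefactor $\Gamma(d_{ij})$ this gives, to leading order, $u_{ij}\sim(N_{t}/\rho)^{d_{ij}}\,\Gamma(d_{ij})\Gamma(\frac{2}{\eta}-d_{ij})/\Gamma(\frac{2}{\eta})$ if $d_{ij}<\frac{2}{\eta}$; $u_{ij}\sim(N_{t}/\rho)^{2/\eta}\log\rho$ if $d_{ij}=\frac{2}{\eta}$; and $u_{ij}\sim(N_{t}/\rho)^{2/\eta}\,\Gamma(d_{ij}-\frac{2}{\eta})$ if $d_{ij}>\frac{2}{\eta}$ --- i.e.\ $u_{ij}\sim e_{ij}\,\rho^{-\min\{2/\eta,\,d_{ij}\}}$ (with one extra $\log\rho$ in the knife-edge case), the coefficients $e_{ij}$ again depending on $i,j$ only through $i+j$.

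Next I would pin down the dominant terms of the Leibniz expansion $|\mathbf{U}|=\sum_{\pi}\mathrm{sgn}(\pi)\prod_{i}u_{i\pi(i)}$. Since $\sum_{i}d_{i\pi(i)}=N_{\min}N_{\max}$ does not depend on $\pi$, we have $\sum_{i}\min\{\frac{2}{\eta},d_{i\pi(i)}\}=N_{\min}N_{\max}-\sum_{i}(d_{i\pi(i)}-\frac{2}{\eta})_{+}$, so the smallest exponent is attained by the permutations that maximise $\sum_{i}(d_{i\pi(i)}-\frac{2}{\eta})_{+}$; because $s\mapsto(s-\text{const})_{+}$ is convex, a rearrangement argument (the refinement of the counting behind Lemma~\ref{l1}) shows these are exactly the permutations leaving $\{1,\dots,l\}$ invariant, with $l=\big\lfloor(\frac{2}{\eta}+1-|N_{t}-N_{r}|)/2\big\rfloor$ truncated to $[0,N_{\min}]$. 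When $\frac{2}{\eta}+1-|N_{t}-N_{r}|$ is an even integer the threshold anti-diagonal passes through the diagonal entry $(l,l)$, which is then of knife-edge type, so the dominant term picks up one factor $\log\rho$; otherwise no diagonal entry lies on that anti-diagonal. A short swap estimate --- replacing a diagonal pair by the two symmetric off-diagonal entries strictly raises the $\rho$-exponent, even after accounting for a possible $\log\rho$ --- confirms that the block-crossing permutations are genuinely lower order. Hence, up to the $\log\rho$, the leading part of $|\mathbf{U}|$ block-factorises as $|\mathbf{U}_{\mathrm{low}}|\,|\mathbf{U}_{\mathrm{high}}|$, with $\mathbf{U}_{\mathrm{low}}$ the top-left unsaturated block of size $l$ (resp.\ $l-1$, carrying also the scalar $u_{ll}$, in the even-integer case) and $\mathbf{U}_{\mathrm{high}}$ the bottom-right saturated block of size $N_{\min}-l$.

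Then it remains to evaluate the two blocks. For $\mathbf{U}_{\mathrm{high}}$, factoring $(N_{t}/\rho)^{2/\eta}$ out of every entry reduces it to the Hankel matrix with $(i,j)$ entry $\Gamma\!\left(i+j+2l+|N_{t}-N_{r}|-1-\frac{2}{\eta}\right)$, which Lemma~\ref{l3} evaluates to $\kappa_{h}(\frac{2}{\eta}-2l,\,N_{\min}-l,\,N_{\min},\,N_{\max})$. For $\mathbf{U}_{\mathrm{low}}$, pulling out the common factor $(N_{t}/\rho)^{|N_{t}-N_{r}|-1}$, the row/column factors $(N_{t}/\rho)^{i}$ and $(N_{t}/\rho)^{j}$, and the common $\Gamma(\frac{2}{\eta})^{-1}$ (Lemma~\ref{l2} guarantees that all Leibniz terms share the same $x$-degree, so this factorisation is unambiguous) reduces it to $\{\Gamma(i+j+|N_{t}-N_{r}|-1)\,\Gamma(\frac{2}{\eta}-i-j-|N_{t}-N_{r}|+1)\}$, which Lemma~\ref{l4} with $a=|N_{t}-N_{r}|$, $b=\frac{2}{\eta}-|N_{t}-N_{r}|$ (so $a+b=\frac{2}{\eta}$) evaluates, after clearing the $\Gamma(\frac{2}{\eta})$-powers, to $\kappa_{l}(\frac{2}{\eta},\,l,\,N_{\min},\,N_{\max})$ (resp.\ the same with $l-1$). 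Assembling all powers of $N_{t}/\rho$ reproduces $\rho^{-\Delta^{*}_{\mathrm{unc}}}$ with $\Delta^{*}_{\mathrm{unc}}$ as in Theorem~\ref{t2}, and the leftover constants multiplied by $P_{s}K$ give precisely the stated $\mu^{*}_{\mathrm{unc}}$: the HSCBR regime is the degeneration $l=0$ ($\mathbf{U}_{\mathrm{low}}$ absent, $\kappa_{l}=1$), recovering (\ref{chi}); the LSCBR regime is $l=N_{\min}$ ($\mathbf{U}_{\mathrm{high}}$ absent, $\kappa_{h}=1$); and the MSCBR regime is the generic $1\le l\le N_{\min}$, its knife-edge subcase carrying the $\log\rho$. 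Finally, in HSCBR the dependence of $\mu^{*}_{\mathrm{unc}}$ on $N_{\max}$ collapses, after cancelling $\prod_{k}\Gamma(k)$ against $\prod_{k}\Gamma(N_{\min}-k+1)$, to $f(N_{\max})$ when $N_{\max}=N_{r}$ and to $g(N_{t})=N_{t}^{\Delta^{*}_{\mathrm{unc}}}f(N_{t})$ when $N_{\max}=N_{t}$, in the notation of Lemma~\ref{l6} (whose hypothesis $N_{\max}-N_{\min}+1\ge\frac{2}{\eta}$ is exactly the HSCBR condition); both are monotonically decreasing, which yields the claimed monotonicity.

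The real obstacle is the second paragraph: rigorously proving that the minimal-exponent permutations are exactly the block-preserving ones and that their contributions do not cancel, so that the leading coefficient is the honest product $|\mathbf{U}_{\mathrm{low}}|\,|\mathbf{U}_{\mathrm{high}}|$ rather than something smaller. The knife-edge subcase is the delicate point --- there the unsaturated block must be taken one size smaller, the singular diagonal entry $u_{ll}$ (which alone would produce $\Gamma(0)$ if kept inside a Lemma~\ref{l4} block) treated on its own, and one must check that no competing permutation attains the same $\rho^{-\Delta^{*}_{\mathrm{unc}}}\log\rho$ order. The rest --- the Gamma-function bookkeeping turning the Lemma~\ref{l3}/\ref{l4} outputs together with the $\Gamma(\frac{2}{\eta})$-powers into the compact $\kappa_{l},\kappa_{h}$ normalisation of (\ref{kl})--(\ref{kh}) --- is long but routine.
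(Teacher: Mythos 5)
Your proposal reproduces the paper's result by essentially the same skeleton --- Table \ref{tb_psisx} asymptotics for $u_{ij}$, Lemma \ref{l3} giving $\kappa_h$ for the saturated entries, Lemmas \ref{l2} and \ref{l4} giving $\kappa_l$ for the unsaturated ones, Lemma \ref{l6} for the monotonicity in $N_{\max}$, and the same $l\times l$ / $(N_{\min}-l)\times(N_{\min}-l)$ partition with the knife-edge $\log\rho$ case isolated at the $(l,l)$ entry --- but it justifies the one nontrivial step, the asymptotic factorisation $\vert\mathbf{E}\vert\sim\vert\widetilde{\mathbf{A}}\vert\,\vert\widetilde{\mathbf{C}}\vert$, differently. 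The paper applies the exact Schur identity $\vert\mathbf{E}\vert=\vert\mathbf{A}\vert\,\vert\mathbf{C}-\mathbf{B}^{T}\mathbf{A}^{-1}\mathbf{B}\vert$ and then argues (by the order-counting of Appendix \ref{appl1}) that the Schur complement is entrywise dominated by $\widetilde{\mathbf{C}}$; this has the advantage that no cancellation issue arises, since the identity is exact and only an entrywise comparison remains. You instead attack the Leibniz expansion directly, using the constancy of $\sum_i d_{i\pi(i)}=N_{\min}N_{\max}$ and convexity of $s\mapsto(s-2/\eta)_{+}$ to argue that the minimal-exponent permutations are precisely the block-preserving ones; this is a legitimate and arguably more transparent route (it makes the origin of $l$ and of the knife-edge $\log\rho$ visible at the level of individual permutations), but it carries the extra burden you yourself identify --- showing that the maximisers are \emph{exactly} $S_l\times S_{N_{\min}-l}$ and that their signed contributions sum to the nonzero product $\vert\widetilde{\mathbf{A}}\vert\,\vert\widetilde{\mathbf{C}}\vert$ rather than cancelling. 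Neither your sketch nor the paper fully discharges the delicate part (the paper's ``by the method analogous to the derivation in Appendix \ref{appl1}'' is no more detailed than your swap estimate), so on this point the two arguments are of comparable rigour; your treatment of the two extreme regimes as the degenerations $l=0$ and $l=N_{\min}$ of the moderate case, and your explicit verification that the HSCBR condition is exactly the hypothesis $n-m+1\geq a$ of Lemma \ref{l6}, are small organisational improvements over the paper's three separate cases.
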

\begin{proof}
See Appendix \ref{appunc}.
\end{proof}

\subsection{Spatially correlated MIMO channel}
\begin{theorem}[Optimum Expected Distortion over a Correlated MIMO Channel]
The optimum expected end-to-end distortion in a system over a spatially correlated MIMO channel is
\begin{equation}
ED_{\mathrm{cor}}^{*}(\eta) = \frac{P_s\vert\mathbf{G}(\eta)\vert} {\prod_{k=1}^{N_{\mathrm{min}}} \sigma_k^{\vert N_t-N_r\vert+1}\Gamma(N_{\max} -k+1)\prod_{1\leq m < n \leq N_{\min}}(\sigma_n-\sigma_m)}.
\label{Dlb_cor}
\end{equation}
where $\mathbf{G}(\eta)$ is an $N_{\mathrm{min}}\times N_{\mathrm{min}}$ matrix whose $(i,j)^{\mathrm{th}}$ entry given by
\begin{equation}
\begin{split}
g_{ij}(\eta) = \left(\frac{\rho}{N_t}\right)^{-d_{j}}\Gamma(d_{j}) \Psi\left(d_{j}, d_{j}+1-\frac{2}{\eta}; \frac{N_t}{\sigma_i\rho}\right).
\label{gij}
\end{split}
\end{equation}
$d_j= \vert N_t-N_r\vert+j$. $\bm{\sigma} = \{\sigma_1, \sigma_2, \cdots, \sigma_{N_{\min}}\}$ with $0 < \sigma_1 < \sigma_2 < \cdots <\sigma_{N_{\min}}$ denoting the ordered eigenvalues of the correlation matrix $\bm{\Sigma}$.
\label{t4}
\end{theorem}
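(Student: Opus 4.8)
The plan is to mirror the proof of Theorem~\ref{t1}, only replacing the eigenvalue statistics of an uncorrelated Wishart matrix by those of a one-sided correlated one; the first half of the argument is untouched. Exactly as in the proof of Theorem~\ref{t1}, the outage-free joint source-channel coding assumption together with Shannon's rate-capacity inequality gives $D^{*}(\eta)=P_s\left|\mathbf{I}_{N_r}+\frac{\rho}{N_t}\mathbf{H}\mathbf{H}^{\dag}\right|^{-2/\eta}$, so that, writing $\lambda_1,\ldots,\lambda_{N_{\min}}$ for the nonzero eigenvalues of $\mathbf{H}\mathbf{H}^{\dag}$,
\begin{equation}
ED^{*}_{\mathrm{cor}}(\eta)=P_s\,\mathbb{E}_{\mathbf{H}}\left|\mathbf{I}_{N_r}+\tfrac{\rho}{N_t}\mathbf{H}\mathbf{H}^{\dag}\right|^{-\frac{2}{\eta}}=P_s\,\mathbb{E}_{\bm{\lambda}}\prod_{i=1}^{N_{\min}}\left(1+\tfrac{\rho}{N_t}\lambda_i\right)^{-\frac{2}{\eta}}.
\end{equation}
The only new ingredient is that $\bm\lambda$ now follows the eigenvalue law of a complex Wishart matrix with one-sided correlation $\bm\Sigma$ instead of the uncorrelated law invoked in Theorem~\ref{t1}.

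First I would substitute the joint density of $\bm\lambda$ for a semi-correlated complex Gaussian $\mathbf{H}$ with $\mathbb{E}(\mathbf{H}\mathbf{H}^{\dag})=\bm\Sigma$ having distinct eigenvalues $\sigma_1<\cdots<\sigma_{N_{\min}}$ (see \cite{chiani}). Taking $N_r\le N_t$ for concreteness and collecting the sign factors coming from the two Vandermonde/determinant conventions (they cancel), this density reads, on the region $\lambda_1>\cdots>\lambda_{N_{\min}}>0$,
\begin{equation}
p(\bm\lambda)=\frac{\prod_{1\le m<n\le N_{\min}}(\lambda_n-\lambda_m)\;\det\!\left[e^{-\lambda_j/\sigma_i}\right]_{i,j=1}^{N_{\min}}\;\prod_{j=1}^{N_{\min}}\lambda_j^{\,\vert N_t-N_r\vert}}{\prod_{k=1}^{N_{\min}}\sigma_k^{\,\vert N_t-N_r\vert+1}\,\Gamma(N_{\max}-k+1)\;\prod_{1\le m<n\le N_{\min}}(\sigma_n-\sigma_m)},
\end{equation}
where the powers $\sigma_k^{\vert N_t-N_r\vert+1}$ are what remains after the Gaussian normalization $\vert\bm\Sigma\vert^{-N_{\max}}$ is combined with the rewriting $\det[\sigma_i^{-(j-1)}]=\prod_{m<n}(\sigma_m^{-1}-\sigma_n^{-1})$ of the Vandermonde in the $\sigma$'s. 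Writing $\prod_{m<n}(\lambda_n-\lambda_m)=\det[\lambda_j^{\,i-1}]_{i,j}$ and symmetrizing over the ordering (a factor $1/N_{\min}!$), the expectation $\mathbb{E}_{\bm\lambda}\prod_i(1+\frac{\rho}{N_t}\lambda_i)^{-2/\eta}$ becomes the integral over $(0,\infty)^{N_{\min}}$ of a product of two determinants against the separable weight $\prod_j\lambda_j^{\vert N_t-N_r\vert}(1+\frac{\rho}{N_t}\lambda_j)^{-2/\eta}$.

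Next I would apply Andr\'eief's identity (the classical Cauchy--Binet / de Bruijn integration formula), which collapses that $N_{\min}$-fold integral of a product of two determinants against a separable weight into a single $N_{\min}\times N_{\min}$ determinant, producing a factor $N_{\min}!$ that cancels the $1/N_{\min}!$ from the symmetrization; the $(i,j)^{\mathrm{th}}$ entry of the resulting determinant is $\int_0^\infty\lambda^{\vert N_t-N_r\vert+i-1}e^{-\lambda/\sigma_j}(1+\frac{\rho}{N_t}\lambda)^{-2/\eta}\,d\lambda$. Applying the closed form (\ref{eq_int}) with $p=1/\sigma_j$, $q=d_i=\vert N_t-N_r\vert+i$, $a=\rho/N_t$ and $\nu=2/\eta$ turns this entry into $(\rho/N_t)^{-d_i}\Gamma(d_i)\Psi\!\left(d_i,d_i+1-\frac{2}{\eta};\frac{N_t}{\sigma_j\rho}\right)$, which is precisely $g_{ji}(\eta)$ in (\ref{gij}); since transposition leaves a determinant unchanged, this determinant equals $\vert\mathbf{G}(\eta)\vert$. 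Multiplying by $P_s$ and by the remaining $\bm\sigma$-dependent constant of $p(\bm\lambda)$ then yields (\ref{Dlb_cor}).

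The step I expect to be the main obstacle is bookkeeping rather than analysis: assembling the semi-correlated Wishart eigenvalue density with the exactly correct normalization, and in particular pinning down the factor $\prod_{k=1}^{N_{\min}}\sigma_k^{\vert N_t-N_r\vert+1}$, which is buried in the interplay between $\vert\bm\Sigma\vert^{-N_{\max}}$ and the $\sigma$-Vandermonde and is easy to be off by a power of the $\sigma_k$'s. One must also be careful that when $N_r>N_t$ the matrix $\mathbf{H}\mathbf{H}^{\dag}$ is rank-deficient, so ``$\bm\Sigma$ full-rank with $N_{\min}$ distinct eigenvalues'' is to be read as referring to the effective $N_{\min}$-dimensional correlation, and to check that the derivation is symmetric under $N_t\leftrightarrow N_r$ — which it is, since the answer depends on them only through $N_{\min}$, $N_{\max}$ and $\vert N_t-N_r\vert$.
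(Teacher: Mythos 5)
Your proposal is correct and follows essentially the same route as the paper: reduce $ED^{*}_{\mathrm{cor}}$ to $P_s\,\mathbb{E}_{\mathbf{H}}\vert\mathbf{I}+\frac{\rho}{N_t}\mathbf{H}\mathbf{H}^{\dag}\vert^{-2/\eta}$, obtain a single $N_{\min}\times N_{\min}$ determinant with entries $\int_0^\infty x^{\vert N_t-N_r\vert+j-1}e^{-x/\sigma_i}(1+\frac{\rho}{N_t}x)^{-2/\eta}dx$, evaluate these via (\ref{eq_int}), and absorb the $\sigma$-Vandermonde and $\vert\bm{\Sigma}\vert^{-N_{\max}}$ into the constant to produce $\prod_k\sigma_k^{\vert N_t-N_r\vert+1}$. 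The only difference is that you explicitly re-derive, via the semi-correlated Wishart eigenvalue density and Andr\'eief's identity, the determinant formula that the paper simply imports from \cite{chiani}.
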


\begin{proof}
Following the proof of Theorem \ref{t1}, by the mathematical results given by Chiani \emph{et al.} in \cite{chiani} for a spatially correlated $\mathbf{H}$, we have
\begin{equation}
ED_{\mathrm{cor}}^{*}(\eta) = P_sK_{\bm{\Sigma}}\vert\mathbf{G}(\eta)\vert
\end{equation}
where $\mathbf{G}(\eta)$ is an $N_{\min} \times N_{\min}$ matrix with $(i,j)^{\mathrm{th}}$ entry given by
\begin{equation}
g_{ij}(\eta) = \int_0^\infty x^{\vert N_t-N_r\vert +j-1}e^{-x/\sigma_i}(1+\frac{\rho}{N_t}x)^{-\frac{2}{\eta}} dx
\label{gijint}
\end{equation}
and
\begin{equation}
K_{\bm{\Sigma}} = \frac{|\bm{\Sigma}|^{-N_{\max}}} {\vert\mathbf{V}_2(\bm{\sigma})\vert\prod_{k=1}^{N_{\min}}\Gamma\left(N_{\max}-k+1\right)}
\end{equation}
where $\mathbf{V}_2(\sigma)$ is a Vandermonde matrix given by
\begin{equation}
\mathbf{V}_2(\bm{\sigma}) \triangleq \mathbf{V}_1 \left(-\{\sigma_1^{-1}, \cdots, \sigma_{N_{\min}}^{-1} \}\right) \label{c3_defv2}
\end{equation}
with the Vandermonde matrix $\mathbf{V}_1(\mathbf{x})$ defined as
\begin{equation}
\mathbf{V}_1(\mathbf{x}) \triangleq \left( \begin{array}{cccc}
1 & 1 & \cdots & 1 \\
x_1 & x_2 & \cdots & x_{N_{\min}} \\
\vdots & \vdots & \ddots & \vdots \\
x_1^{N_{\min}-1} & x_2^{N_{\min}-1} & \cdots & x_{N_{\min}}^{N_{\min}-1}
\end{array} \right).\label{c3_defv1}
\end{equation}
In terms of the property of a Vandermonde matrix \cite{horn}, the determinant of $\mathbf{V}_2(\bm{\sigma})$
\begin{align}
\vert \mathbf{V}_2(\bm{\sigma}) \vert &= \prod_{1\leq m < n \leq  N_{min}}(-\sigma_j^{-1}+\sigma_i^{-1})\\
&= \prod_{1\leq m < n \leq N_{min}}\sigma_m^{-1}\sigma_n^{-1}(\sigma_n-\sigma_m) \\
&= \prod_{k=1}^{N_{\min}}\sigma_k^{1-N_{\min}}\prod_{1\leq m < n \leq N_{min}}(\sigma_n-\sigma_m)\\
&= \prod_{k=1}^{N_{\min}}\sigma_k^{1-N_{\min}}\vert \mathbf{V}_1(\bm{\sigma})\vert.
\end{align}
Thereby,
\begin{equation}
K_{\bm{\Sigma}} = \frac{1} {\prod_{k=1}^{N_{\mathrm{min}}} \sigma_k^{\vert N_t-N_r\vert +1}\Gamma(N_{\max} -k+1)\prod_{1\leq m < n \leq N_{\min}}(\sigma_n-\sigma_m)}\; .
\end{equation}
In terms of the integral solution (\ref{eq_int}), (\ref{gijint}) can be written in the analytic form
\begin{equation}
g_{ij}(\eta) = \left(\frac{\rho}{N_t}\right)^{-d_j}\Gamma(d_j)\Psi\left(d_j, d_j+1-\frac{2}{\eta}; \frac{N_t}{\sigma_i\rho}\right).
\end{equation}
This concludes the proof of this theorem.
\end{proof}

\begin{theorem}[Optimum Distortion Exponent over a Correlated MIMO Channel]
The optimum distortion exponent  $\Delta^*_{\mathrm{cor}}$ in the case of spatially correlated MIMO channel is the same as the optimum distortion exponent $\Delta^*_{\mathrm{unc}}$ in the case of uncorrelated MIMO channel, that is,
\begin{equation}
\Delta^*_{\mathrm{cor}}(\eta) = \Delta^*_{\mathrm{unc}}(\eta) =
\sum_{k=1}^{N_{\min}}\min\left\{\frac{2}{\eta}, 2k-1+\vert N_t-N_r\vert\right\}
\end{equation}
\label{t5}
\end{theorem}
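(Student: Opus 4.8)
The plan is to prove Theorem~\ref{t5} by a sandwiching argument that reduces the correlated case directly to the uncorrelated case of Theorem~\ref{t2}, instead of analyzing the asymptotics of $|\mathbf{G}(\eta)|$ head‑on. First I would recall that in the semi‑correlated model the channel factors as $\mathbf{H}=\bm{\Sigma}^{1/2}\mathbf{H}_w$, where $\mathbf{H}_w$ is exactly the uncorrelated channel matrix of Theorem~\ref{t1} (i.i.d.\ $\mathcal{CN}(0,1)$ entries, up to the normalization built into $\bm{\Sigma}$), and that $\sigma_1\mathbf{I}\preceq\bm{\Sigma}\preceq\sigma_{N_{\min}}\mathbf{I}$. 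Since $\mathbf{H}^\dag\mathbf{H}=\mathbf{H}_w^\dag\bm{\Sigma}\mathbf{H}_w$, conjugation by $\mathbf{H}_w$ gives $\sigma_1\,\mathbf{H}_w^\dag\mathbf{H}_w\preceq\mathbf{H}^\dag\mathbf{H}\preceq\sigma_{N_{\min}}\,\mathbf{H}_w^\dag\mathbf{H}_w$, and Loewner monotonicity of the determinant on positive semidefinite matrices together with $|\mathbf{I}+t\,\mathbf{A}\mathbf{A}^\dag|=|\mathbf{I}+t\,\mathbf{A}^\dag\mathbf{A}|$ yields, for every realization of $\mathbf{H}_w$ and every $\rho>0$,
\[
\left|\mathbf{I}+\tfrac{\sigma_1\rho}{N_t}\mathbf{H}_w\mathbf{H}_w^\dag\right|\;\le\;\left|\mathbf{I}+\tfrac{\rho}{N_t}\mathbf{H}\mathbf{H}^\dag\right|\;\le\;\left|\mathbf{I}+\tfrac{\sigma_{N_{\min}}\rho}{N_t}\mathbf{H}_w\mathbf{H}_w^\dag\right|.
\]

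Raising to the negative power $-2/\eta$ reverses the inequalities, and taking $P_s\,\mathbb{E}_{\mathbf{H}_w}\{\cdot\}$ produces the two‑sided bound
\[
ED^*_{\mathrm{unc}}(\eta;\,\sigma_{N_{\min}}\rho)\;\le\;ED^*_{\mathrm{cor}}(\eta;\,\rho)\;\le\;ED^*_{\mathrm{unc}}(\eta;\,\sigma_1\rho),
\]
where $ED^*_{\mathrm{unc}}(\eta;\,\cdot\,)$ denotes the quantity of Theorem~\ref{t1} regarded as a function of the SNR (note the right‑hand and left‑hand sides are genuine uncorrelated distortions, obtained merely by substituting $\rho\mapsto\sigma_1\rho$ or $\rho\mapsto\sigma_{N_{\min}}\rho$ in~\eqref{EDopt}).

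Next I would pass to distortion exponents. Because scaling the SNR by a positive constant $\sigma$ only shifts $\log\rho$ by the constant $\log\sigma$, Theorem~\ref{t2} gives
\[
-\lim_{\rho\to\infty}\frac{\log ED^*_{\mathrm{unc}}(\eta;\,\sigma\rho)}{\log\rho}=-\lim_{\rho\to\infty}\frac{\log ED^*_{\mathrm{unc}}(\eta;\,\sigma\rho)}{\log(\sigma\rho)}\cdot\frac{\log\rho+\log\sigma}{\log\rho}=\Delta^*_{\mathrm{unc}}(\eta).
\]
Applying the decreasing map $-\log(\cdot)/\log\rho$ to the three‑term inequality and letting $\rho\to\infty$ then squeezes both the $\liminf$ and the $\limsup$ of $-\log ED^*_{\mathrm{cor}}(\eta;\rho)/\log\rho$ to $\Delta^*_{\mathrm{unc}}(\eta)$; hence the limit exists and $\Delta^*_{\mathrm{cor}}(\eta)=\Delta^*_{\mathrm{unc}}(\eta)$, which is precisely the stated formula.

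The only genuine obstacle is fixing the correlation model so that the Loewner sandwich is available — i.e.\ the one‑sided (semi‑correlated Kronecker) structure underlying the Chiani‑type result used for Theorem~\ref{t4}, for which $\mathbf{H}_w$ is exactly the Theorem~\ref{t1} channel; once that is pinned down, the rest is elementary. I would also remark, but not pursue, the alternative of reading the exponent directly off $|\mathbf{G}(\eta)|$ by expanding each $\Psi\!\left(d_j,d_j+1-\tfrac{2}{\eta};\,\tfrac{N_t}{\sigma_i\rho}\right)$ via Table~\ref{tb_psisx} and Appendix~\ref{app1}: there column $j$ carries the leading power $\rho^{-\min\{2/\eta,\,|N_t-N_r|+j\}}$, but because that exponent depends on $j$ only and not on $i+j$, Lemma~\ref{l1} does not apply, the leading‑order coefficient matrix is generically singular, and the naive sum of column degrees undershoots the true exponent — so one would have to track the cancellations order by order. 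The sandwiching argument sidesteps all of this.
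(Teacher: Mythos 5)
Your sandwich argument is correct, and it is a genuinely different route from the one the paper takes. The paper's proof (Appendix \ref{appt5}) attacks $\vert\mathbf{G}(\eta)\vert$ head-on: it expands each $\Psi$ entry via Table \ref{tb_psisx}, writes the asymptotic determinant as a sum $\sum_m\vert\widetilde{\mathbf{G}}_m\vert$ of monomial determinants, uses a rank argument to discard the permutations whose leading coefficient matrix is singular (forcing the higher-order indices $r_{m,j}$ in the first $l$ columns to be distinct), and then identifies the dominant exponent by minimizing the convex quadratic $\gamma(n)$ over the partition index --- exactly the cancellation bookkeeping you correctly predict in your closing remark (since the leading power of $g_{ij}$ depends on $j$ alone, the naive column-degree count fails and Lemma \ref{l1} is unavailable). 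Your approach instead exploits the one-sided Kronecker structure $\mathbf{H}=\bm{\Sigma}^{1/2}\mathbf{H}_w$ underlying the Chiani-type formula of Theorem \ref{t4}: the Loewner bounds $\sigma_1\mathbf{I}\preceq\bm{\Sigma}\preceq\sigma_{N_{\min}}\mathbf{I}$, Sylvester's identity, and determinant monotonicity on the positive-definite cone give a pointwise sandwich of the correlated mutual information between two uncorrelated ones at rescaled SNRs, and since a constant SNR shift cannot move the exponent, the squeeze closes the argument. What each buys: your proof is shorter, more elementary, and more robust --- it needs only full rank of $\bm{\Sigma}$, not distinct eigenvalues, and it would survive any model for which the sandwich holds --- but it is purely an exponent argument and discards all constant-level information; the paper's heavier machinery is what feeds directly into the optimum distortion factor $\mu^*_{\mathrm{cor}}$ of Theorem \ref{t6}, which your route cannot recover. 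The one point you should make explicit rather than defer is the normalization: the paper defines $\bm{\Sigma}=\mathbb{E}(\mathbf{H}\mathbf{H}^{\dag})$ yet treats the uncorrelated case as $\bm{\Sigma}=\mathbf{I}$, so the factorization $\mathbf{H}=\bm{\Sigma}^{1/2}\mathbf{H}_w$ carries an implicit rescaling; this affects only a multiplicative constant in the SNR and hence nothing in the exponent, but it should be stated once the sandwich is invoked.
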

\begin{proof}
See Appendix \ref{appt5}.
\end{proof}

\begin{theorem}[Optimum Distortion Factor over a Correlated MIMO Channel]\label{t6}
The optimum distortion factor $\mu_{\mathrm{cor}}^*(\eta)$ is given as follows.
\begin{enumerate}
\item For $2/\eta \in (0, \vert N_t-N_r\vert +1)$ (HSCBR), the optimum distortion factor is
\begin{equation}
\mu_{\mathrm{cor}}^*(\eta) =\prod_{k=1}^{N_{\min}}\sigma_k^{-\frac{2}{\eta}}\, \mu^*_{\mathrm{unc}}(\eta).
\end{equation}
\item For $2/\eta \in (N_t + N_r -1, +\infty)$ (LSCBR), the optimum distortion factor is
\begin{equation}
\mu_{\mathrm{cor}}^*(\eta) = \prod_{k=1}^{N_{\min}}\sigma_k^{-N_{\max}}\, \mu^*_{\mathrm{unc}}(\eta).
\end{equation}
\item For $2/\eta \in [\vert N_t - N_r\vert +1, N_t+N_r -1]$ (MSCBR), the optimum distortion factor is
\begin{equation}
\begin{split}
\mu_{\mathrm{cor}}^*(\eta) &=
\frac{(-1)^{\frac{l(l-1)}{2}}\vert\mathbf{V}_3(\bm{\sigma})\vert} {\prod_{k=1}^{N_{\min}}\sigma_k^{\vert N_t-N_r\vert+1}\prod_{1\leq m < n \leq N_{\min}}(\sigma_n-\sigma_m)} \\
&\quad \times\prod_{k=1}^{N_{\min}-l}\frac{(k)_l}{(\vert N_t-N_r\vert-\frac{2}{\eta}+l+k)_l}\,\mu^*_{\mathrm{unc}}(\eta)
\end{split}
\end{equation}
where $l = \lfloor \frac{\frac{\eta}{2}+1-\vert N_r-N_t\vert}{2}\rfloor$ and each entry of $\mathbf{V}_3(\bm{\sigma})$ is
\begin{equation}
v_{3,ij}= \sigma_i^{-\min\{j-1, \frac{2}{\eta}-d_j\}}.
\end{equation}
\end{enumerate}
\end{theorem}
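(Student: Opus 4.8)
The plan is to start from the exact expression for $ED^*_{\mathrm{cor}}(\eta)$ in Theorem~\ref{t4}, namely $ED^*_{\mathrm{cor}}(\eta) = P_s K_{\bm\Sigma}\,\lvert\mathbf{G}(\eta)\rvert$, and extract the leading power of $\rho$ together with its coefficient in each of the three SCBR regimes, exactly mirroring the structure of the proof of Theorem~\ref{t3} (Appendix~\ref{appunc}) but now keeping the eigenvalues $\sigma_i$ explicit. The key tool is the small-argument asymptotics of $\Psi(a,c;x)$ from Table~\ref{tb_psisx}: since $g_{ij}(\eta)=\left(\frac{\rho}{N_t}\right)^{-d_j}\Gamma(d_j)\Psi\!\left(d_j, d_j+1-\frac{2}{\eta}; \frac{N_t}{\sigma_i\rho}\right)$ and $N_t/(\sigma_i\rho)\to 0$, each entry behaves like $\sigma_i^{\,\gamma_{ij}}\,e_{ij}\,\rho^{-\Delta_{ij}}$ with $\Delta_{ij}=\min\{2/\eta,\, d_j\}$ and a $\sigma_i$-power $\gamma_{ij}$ that depends on whether $c=d_j+1-2/\eta$ is $>1$, $=1$, or $<1$ (i.e.\ whether $d_j\gtrless 2/\eta$). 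Note that the row index $i$ enters only through $\sigma_i$ and the column index $j$ only through $d_j$, so $\mathbf{G}(\eta)$ asymptotically factors as a diagonal-scaling times a matrix of $\sigma_i$-powers times a diagonal-scaling in $\rho$; the determinant then splits into a product of $\rho$-powers (summing to $\Delta^*_{\mathrm{cor}}$, which is already known from Theorem~\ref{t5}) and a determinant of a matrix whose $(i,j)$ entry is $\sigma_i^{\min\{j-1,\,2/\eta-d_j\}}$ up to benign constants — this is precisely the matrix $\mathbf{V}_3(\bm\sigma)$ appearing in the statement.

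First I would treat HSCBR ($2/\eta<\lvert N_t-N_r\rvert+1\le d_j$ for all $j$): here $c<1$ for every entry, so $\Psi\to\Gamma(1-c)/\Gamma(a-c+1)+o(1)$ is a constant in $\rho$ and independent of $\sigma_i$, giving $g_{ij}\sim \mathrm{const}\cdot\rho^{-2/\eta}$ with the $\sigma_i$-dependence entirely gone from the leading term. But $\lvert\mathbf{G}\rvert$ would then vanish to leading order (rank-one structure in $j$), so one must go to the next correction, which reinstates the $\sigma_i$ through the $\Psi$-expansion's subleading terms; carrying this out and comparing against the uncorrelated computation yields the clean factor $\prod_k\sigma_k^{-2/\eta}$ relative to $\mu^*_{\mathrm{unc}}$. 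Second I would do LSCBR ($2/\eta>N_t+N_r-1\ge d_j$ for all $j$): now $c>1$ everywhere, so $\Psi\sim x^{1-c}\Gamma(c-1)/\Gamma(a)$ with $x=N_t/(\sigma_i\rho)$, which contributes $\sigma_i^{\,c-1}=\sigma_i^{\,d_j-2/\eta}$; extracting the $\sigma_i$-powers row by row and factoring, the determinant of the residual matrix combines with $K_{\bm\Sigma}$'s Vandermonde and $\sigma_k$-power factors and with Lemma~\ref{l8} (Toeplitz/Gamma determinant) to collapse into $\prod_k\sigma_k^{-N_{\max}}$ times $\mu^*_{\mathrm{unc}}$. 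Third, MSCBR is the genuinely mixed case: for columns with $d_j<2/\eta$ one is in the $c>1$ branch and picks up $\sigma_i^{d_j-2/\eta}$, for columns with $d_j>2/\eta$ one is in the $c<1$ branch and (after the rank-drop correction as in HSCBR) picks up $\sigma_i^{j-1}$-type powers, and the crossover column $d_j=2/\eta$ (when $\mod\{2/\eta+1-\lvert N_t-N_r\rvert,2\}=0$) contributes the $\log\rho$. This is exactly why the exponent $\min\{j-1,\,2/\eta-d_j\}$ in $\mathbf{V}_3$ has that form, and why $l=\lfloor(\tfrac{2}{\eta}+1-\lvert N_t-N_r\rvert)/2\rfloor$ counts the "$c>1$" columns.

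The remaining work is bookkeeping: assemble $ED^*_{\mathrm{asy,cor}} = P_s K_{\bm\Sigma}\cdot(\text{asymptotic }\lvert\mathbf{G}\rvert)$, pull out $\rho^{-\Delta^*_{\mathrm{cor}}}$ (using Lemma~\ref{l1} on the relevant submatrix to confirm the exponent matches Theorem~\ref{t5}), and identify what is left as $\mu^*_{\mathrm{cor}}$. To get the three displayed forms one divides by $\mu^*_{\mathrm{unc}}$ (Theorem~\ref{t3}); the $\sigma$-independent Gamma-function debris must cancel exactly, which is the sanity check that the algebra is right, and in MSCBR the leftover ratio $\prod_{k=1}^{N_{\min}-l}\frac{(k)_l}{(\lvert N_t-N_r\rvert-\frac{2}{\eta}+l+k)_l}$ comes from the $\Gamma(d_j)/\Gamma(d_j-2/\eta)$-type factors in the "$c<1$" columns, best handled via Lemma~\ref{l7} to rewrite the Pochhammer symbols. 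The main obstacle I anticipate is the HSCBR/MSCBR rank-drop: because the leading $\Psi$-asymptotics make several columns of $\mathbf{G}$ collinear, one cannot read off $\mu^*_{\mathrm{cor}}$ from the crudest asymptotics and must carefully track the first nonvanishing correction in each such column — this is where sign factors like $(-1)^{l(l-1)/2}$ and the precise shape of $\mathbf{V}_3$ are pinned down, and where an off-by-one in the count $l$ or a mishandled $\log\rho$ boundary term would do the most damage. I would cross-check the final formulas in the limit $\sigma_k\to 1$ for all $k$: all three must reduce to $\mu^*_{\mathrm{unc}}$, which in MSCBR forces the $\mathbf{V}_3$-determinant and the Pochhammer ratio to conspire to $1$, providing a strong consistency test (and indeed this is the content of the paper's later claim that the correlated result approaches the uncorrelated one as correlation vanishes).
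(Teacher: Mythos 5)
Your overall architecture matches the paper's: expand each entry $g_{ij}$ of $\mathbf{G}(\eta)$ via the small-argument asymptotics of $\Psi$, exploit the fact that the row index enters only through $\sigma_i$ and the column index only through $d_j$, reassemble the determinant, divide by $\mu^*_{\mathrm{unc}}$, and sanity-check with $\bm{\Sigma}\rightarrow\mathbf{I}$. But you have swapped the two branches of Table \ref{tb_psisx} across the regimes, and this reverses which mechanism operates where. Since $c=d_j+1-\frac{2}{\eta}$, HSCBR ($\frac{2}{\eta}<\vert N_t-N_r\vert+1\leq d_j$) gives $c>1$ for every entry, so $\Psi\sim x^{1-c}\Gamma(c-1)/\Gamma(a)$ and $g_{ij}\sim N_t^{2/\eta}\Gamma(d_j-\tfrac{2}{\eta})\,\sigma_i^{d_j-2/\eta}\rho^{-2/\eta}$: the leading term already carries $\sigma_i^{d_j-2/\eta}=\sigma_i^{\vert N_t-N_r\vert+1-2/\eta}\sigma_i^{j-1}$, the leading matrix is a row-scaled Vandermonde, there is no rank drop, and a single asymptotic term ($M=1$ in the paper's notation) suffices. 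Conversely, LSCBR gives $c<1$ for every entry, so the leading $\Psi$ is a constant independent of $\sigma_i$, all rows of $\mathbf{G}$ coincide at leading order and the determinant genuinely vanishes; the paper must then sum over $N_{\min}!$ choices of series terms $r_{m,j}$ from the expansion (\ref{eq_psicni}), keep only permutations of $\{0,\ldots,N_{\min}-1\}$ (repeated $r$'s make two columns proportional), and reassemble via the Leibniz formula and Lemma \ref{l7} into $\vert\mathbf{V}_2(\bm{\sigma})\vert\,\vert\mathbf{Q}\vert$ with $q_{ij}$ identical to the uncorrelated $e_{ij}$.

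This is more than a labeling slip. The route you describe for LSCBR --- factor $\sigma_i^{d_j-2/\eta}$ out of each row and collapse the residual determinant with Lemma \ref{l8} --- cannot be executed, because that factor is not present in the LSCBR asymptotics; and the route you describe for HSCBR chases subleading corrections to cure a rank deficiency that does not occur in the correlated case (it is the \emph{uncorrelated} HSCBR case of Theorem \ref{t3}, where the row dependence sits in $d_{ij}$ rather than in $\sigma_i$, that behaves the way you expect, which may be the source of the confusion). The same reversal propagates into your MSCBR discussion: the $\sigma_i^{-(j-1)}$ exponents in $\mathbf{V}_3$ come from the series-expansion ($c<1$, small-$d_j$) columns counted by $l$, and the $\sigma_i^{d_j-2/\eta}$ exponents from the leading-term ($c>1$) columns, not the other way around. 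The two mechanisms you name are exactly the right ones and the bookkeeping you outline is the paper's; they just need to be attached to the opposite regimes before the computation will go through.
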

\begin{proof}
See Appendix \ref{appt6}.
\end{proof}

\begin{theorem}[Convergence] \label{t7}
\begin{equation}
\lim_{\bm{\Sigma}\rightarrow \mathbf{I}}\mu^*_{\mathrm{cor}}(\eta) = \mu^*_{\mathrm{unc}}(\eta).
\end{equation}
\end{theorem}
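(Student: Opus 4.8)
The plan is to treat each of the three SCBR regimes separately and, in each case, to show that $\mu^*_{\mathrm{cor}}(\eta)$ degenerates to $\mu^*_{\mathrm{unc}}(\eta)$ as the eigenvalues $\sigma_1,\dots,\sigma_{N_{\min}}$ of $\bm{\Sigma}$ all tend to $1$. In the HSCBR regime the formula from Theorem~\ref{t6} gives $\mu^*_{\mathrm{cor}}(\eta)=\prod_{k=1}^{N_{\min}}\sigma_k^{-2/\eta}\,\mu^*_{\mathrm{unc}}(\eta)$, and since each $\sigma_k\to 1$ the prefactor $\prod_k\sigma_k^{-2/\eta}\to 1$; likewise in the LSCBR regime the prefactor is $\prod_k\sigma_k^{-N_{\max}}\to 1$. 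So in these two regimes the statement is immediate once one checks that the limit $\bm\Sigma\to\mathbf I$ forces $\sigma_k\to 1$ for all $k$, which follows because the eigenvalues depend continuously on the matrix.

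The real work is the MSCBR regime, where $\mu^*_{\mathrm{cor}}(\eta)$ is the ratio of a Vandermonde-type determinant $\vert\mathbf V_3(\bm\sigma)\vert$ and the product $\prod_{k}\sigma_k^{\vert N_t-N_r\vert+1}\prod_{m<n}(\sigma_n-\sigma_m)$, times a rational factor $\prod_{k=1}^{N_{\min}-l}(k)_l/(\vert N_t-N_r\vert-\tfrac{2}{\eta}+l+k)_l$ and $\mu^*_{\mathrm{unc}}(\eta)$. As the $\sigma_k$ collapse to $1$, both $\vert\mathbf V_3(\bm\sigma)\vert$ and $\prod_{m<n}(\sigma_n-\sigma_m)$ vanish, so this is a $0/0$ limit and must be resolved by a confluent (repeated-node) limit of the two Vandermonde-like determinants. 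The plan is: write $\sigma_k = 1+\varepsilon_k$, factor out the leading powers of the differences $\varepsilon_n-\varepsilon_m$ from both determinants via the standard confluent Vandermonde / divided-difference argument (each column of $\mathbf V_3$ carries an exponent $-\min\{j-1,\tfrac2\eta-d_j\}$, which for $j\le l$ equals $j-1$ and reproduces exactly a genuine Vandermonde structure in the confluent limit, and for $j>l$ becomes a constant exponent so those columns each tend to $(1,\dots,1)^{\!\top}$). One then computes the leading term of $\vert\mathbf V_3(\bm\sigma)\vert/\prod_{m<n}(\sigma_n-\sigma_m)$ as $\bm\sigma\to\mathbf 1$ by l'Hôpital / Taylor expansion, and shows that the surviving constant, multiplied by the rational $(k)_l$-factor and the sign $(-1)^{l(l-1)/2}$, cancels precisely, leaving $\mu^*_{\mathrm{unc}}(\eta)$.

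The cleanest way to organize that cancellation is probably not to take the limit of the final closed form directly, but to revisit the derivation in Appendix~\ref{appt6}: the correlated distortion factor there is obtained from $ED^*_{\mathrm{cor}}$ of Theorem~\ref{t4} by the same asymptotic-in-$\rho$ reduction used for the uncorrelated case, and Theorem~\ref{t4}'s integrand $g_{ij}(\eta)$ reduces, as $\sigma_i\to1$, to exactly the integrand $u_{ij}(\eta)$ of Theorem~\ref{t1} after the $\bm\Sigma\to\mathbf I$ confluent limit of the determinant identity $ED^*_{\mathrm{cor}}=P_sK_{\bm\Sigma}\vert\mathbf G(\eta)\vert$ collapses to $ED^*_{\mathrm{unc}}=P_sK\vert\mathbf U(\eta)\vert$. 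In other words, $\lim_{\bm\Sigma\to\mathbf I}ED^*_{\mathrm{cor}}(\eta)=ED^*_{\mathrm{unc}}(\eta)$ by continuity of the (removable) singularity in Chiani's determinant formula, and then taking $-\lim_{\rho\to\infty}(\cdot)/\log\rho$ and the companion no-faster-than-polynomial factor on both sides—operations that commute with the already-taken $\bm\Sigma\to\mathbf I$ limit because the distortion exponent is the same in both cases (Theorem~\ref{t5})—yields $\lim_{\bm\Sigma\to\mathbf I}\mu^*_{\mathrm{cor}}(\eta)=\mu^*_{\mathrm{unc}}(\eta)$. The main obstacle is making the interchange of the two limits ($\bm\Sigma\to\mathbf I$ and $\rho\to\infty$) rigorous: one must verify uniformity, e.g. that $ED^*_{\mathrm{cor}}(\eta)$ with its extracted exponent converges to $\mu^*_{\mathrm{cor}}(\eta)$ uniformly in $\bm\sigma$ on a neighborhood of $\mathbf 1$, or equivalently handle the confluent Vandermonde limit explicitly as sketched above; the regime boundaries (where the $\log\rho$ factor appears, i.e. $\mathrm{mod}\{\tfrac2\eta+1-\vert N_t-N_r\vert,2\}=0$) should be checked as a separate sub-case since there the structure of $\mathbf V_3$ and of the rational factor changes.
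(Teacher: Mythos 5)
Your treatment of the HSCBR and LSCBR regimes is exactly the paper's: the correlated factor differs from the uncorrelated one by $\prod_k\sigma_k^{-2/\eta}$ or $\prod_k\sigma_k^{-N_{\max}}$, which tend to $1$, and your identification of the MSCBR case as a $0/0$ confluent-Vandermonde limit is also the paper's strategy. The problem is that your proposal stops precisely where the proof actually lives: you assert that ``the surviving constant, multiplied by the rational $(k)_l$-factor and the sign $(-1)^{l(l-1)/2}$, cancels precisely,'' but that cancellation \emph{is} the content of Theorem~\ref{t7} and is far from routine. The paper carries it out by Taylor-expanding each entry $v_{3,ij}=\sigma_i^{-p_j}$ about $\sigma_i=1$ (rewriting the binomial coefficients as Pochhammer symbols via the reflection identity of Lemma~\ref{l7}), regrouping the resulting permutation sum by the Leibniz formula so that $\vert\mathbf V_1(\bm\sigma-\bm 1)\vert=\vert\mathbf V_1(\bm\sigma)\vert$ factors out and cancels the $\prod_{m<n}(\sigma_n-\sigma_m)$ in the denominator, and reducing the residue to the determinant of an $(N_{\min}-1)\times(N_{\min}-1)$ Gamma-matrix $\mathbf W$ with $w_{ij}=\Gamma(i+p_{j+1})$. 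That determinant is then block-triangularized, its Hankel block evaluated by Lemma~\ref{l3} and its Toeplitz block by Lemma~\ref{l8}, and the conclusion $\lim\mu^*_{\mathrm{cor}}=(-1)^{n_1+n_2+n_3+n_4}\mu^*_{\mathrm{unc}}$ with $(-1)^{n_1+n_2+n_3+n_4}=1$ still needs one more Pochhammer reflection and explicit sign bookkeeping. Without these steps you have not excluded the possibility that the limit is $\mu^*_{\mathrm{unc}}$ times a nontrivial constant.

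Your alternative route --- taking $\bm\Sigma\rightarrow\mathbf I$ in $ED^*_{\mathrm{cor}}$ at finite $\rho$ and then interchanging with $\rho\rightarrow\infty$ --- would be cleaner if it worked, but as you concede the interchange requires uniform convergence of $\rho^{\Delta^*}ED^*_{\mathrm{cor}}$ in a neighborhood of $\bm\sigma=\bm 1$, and this is delicate precisely because $K_{\bm\Sigma}$ blows up like $\prod_{m<n}(\sigma_n-\sigma_m)^{-1}$ while $\vert\mathbf G(\eta)\vert$ vanishes at the same rate; establishing that uniformity is not obviously easier than the direct confluent computation the paper performs. One further inaccuracy: for $j>l$ the exponent $-p_j=d_j-\frac{2}{\eta}$ is not a single constant across columns --- it varies with $j$ through $d_j=\vert N_t-N_r\vert+j$ --- and this variation is exactly what keeps $\mathbf V_3(\bm\sigma)$ nonsingular before the limit; your phrasing suggests those columns are interchangeable, which would make the determinant identically zero.
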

\begin{proof}
See Appendix \ref{appt7}.
\end{proof}

\section{Numerical Analysis and Discussion} \label{num}

In this section, the examples in various settings are provided. The simulation and numerical results illustrate the foregoing results.

\subsection{An example in the HSCBR regime, uncorrelated MIMO channel}

Fig.\ref{fig1} shows the numerical and simulation results on the optimum expected end-to-end distortions in the outage-free MIMO systems over uncorrelated block-fading MIMO channels in the high SCBR regime and at the high SNR, $\rho = 30$ dB. The number of antennas on one side (either the transmitter side or the receiver side) is fixed to five and the number of antennas on the other side is varying. $ED_{\mathrm{unc,sim}}^*$ denotes the $ED^*_{\mathrm{unc}}$ corresponding to (\ref{EDopt}), evaluated by {10 000} realizations of $\mathbf{H}$.

From Fig.\ref{fig1b}, we see that $ED^*_{\mathrm{unc,sim}}$ monotonically decreases with the number of antennas on one side, which agrees with our intuition. There is an excellent agreement between $ED_{\mathrm{unc, asy}}^*$ and $ED_{\mathrm{unc,sim}}^*$, which indicates that, in the setting when SNR is 30 dB,  the behavior of $ED^*_{\mathrm{unc}}$ at a high SNR can be explained by studying $ED_{\mathrm{unc, asy}}^*$.

In Fig.\ref{fig1a}, in terms of Theorem \ref{t2}, the optimum distortion exponent $\Delta_{\mathrm{unc}}^*$ increases with $N_{\min}$ and then remains constant when $N_{\min}$ stops increasing, though the number of antennas on one side is increasing. In Fig.\ref{fig1b}, in terms of Theorem \ref{t3}, $\mu_{\mathrm{unc}}^*$ is monotonically decreasing with $N_{\max}$. Therefore, when $N_{\min} \leq 5$, $ED^*_{\mathrm{unc}}$ is decreasing because $\Delta_{\mathrm{unc}}^*$ is increasing; although the optimum distortion factor $\mu^*_{\mathrm{unc}}$ is increasing, the increase of $\Delta_{\mathrm{unc}}^*$ dominates the tendency of $ED_{\mathrm{unc}}^*$ since the SNR is high. When the $N_{\min}$ is fixed to 5, $ED_{\mathrm{unc}}^*$ is decreasing because $\mu_{\mathrm{unc}}^*$ is decreasing, though $\Delta_{\mathrm{unc}}^*$ keeps constant. In summary, we see that, for high SNR, the decrease of $ED^*_{\mathrm{unc}}$ with the number of antennas is due to either the increase of the optimum distortion exponent or the decrease of the optimum distortion factor.
\begin{figure}
\subfigure[ ]{
\includegraphics[width = 7.5cm , height = 7cm]{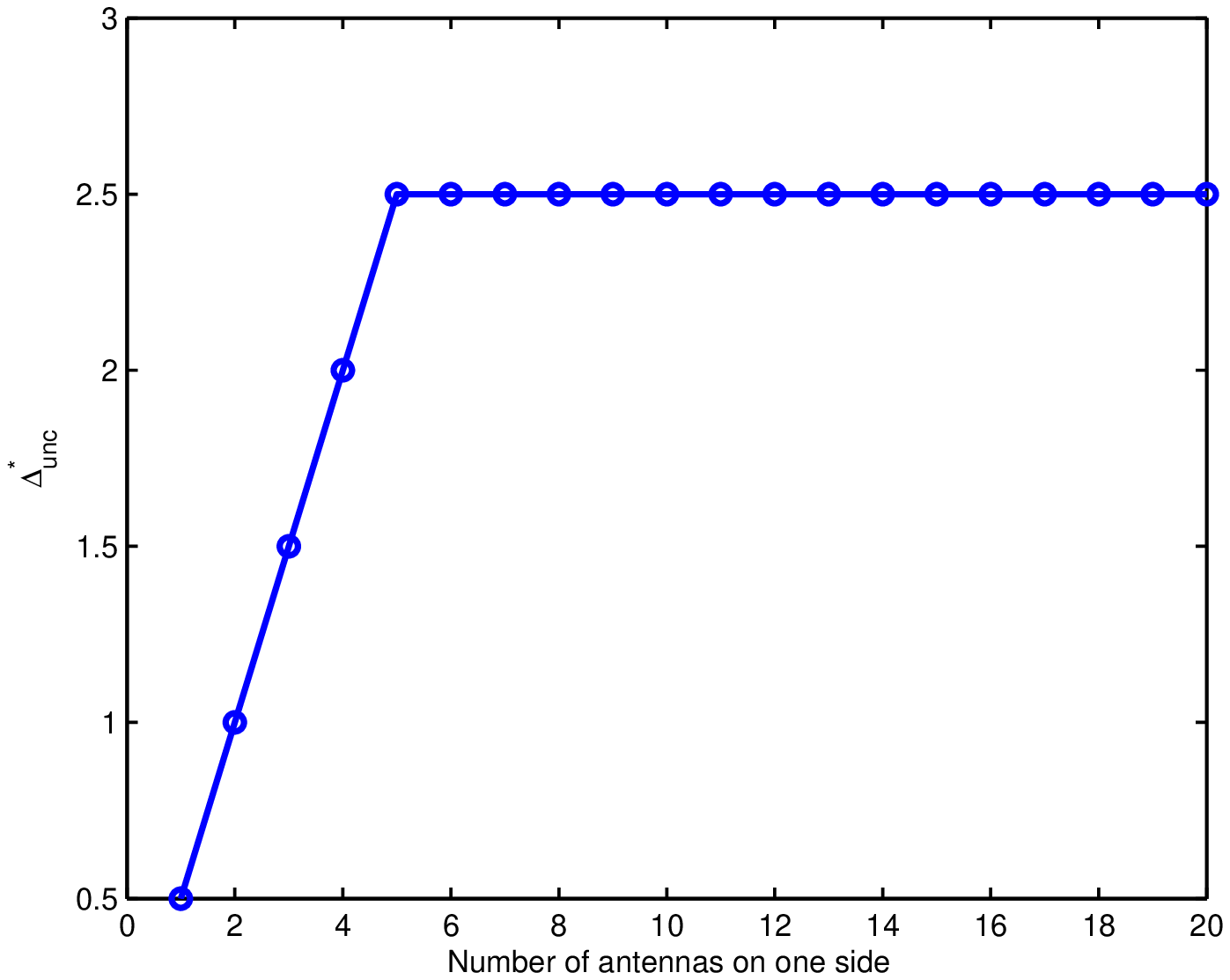}
\label{fig1a}
}
\subfigure[ ]{
\includegraphics[width = 7.5cm , height = 7cm]{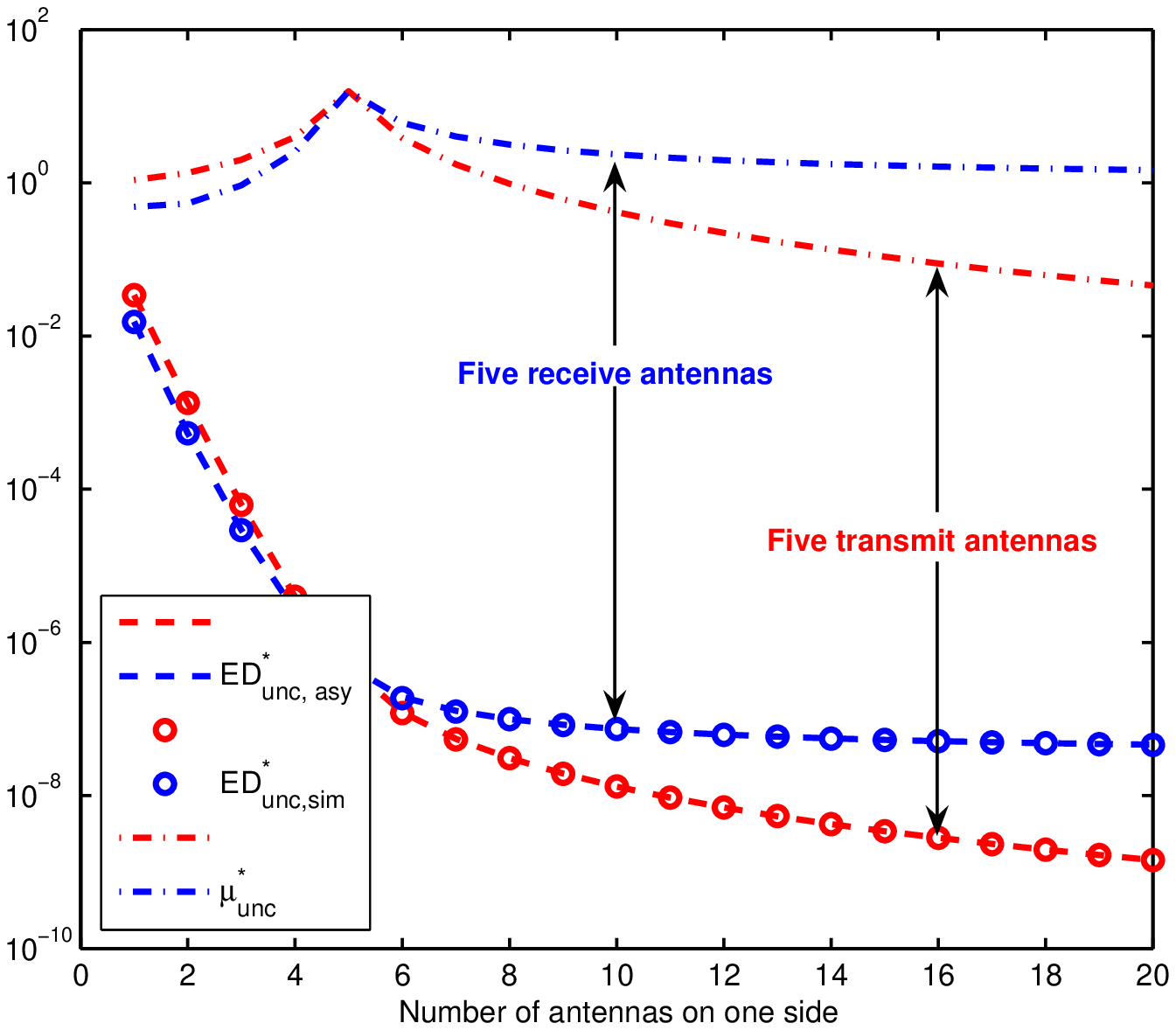}
\label{fig1b}}
\caption{Uncorrelated channel, one of $(N_t, N_r)$ is fixed to 5, $\eta =4$, high SCBR.}
\label{fig1}
\end{figure}

Moreover, from Fig.\ref{fig1}, it is seen that the commutation between the number of transmit antennas and the number of receive antennas impacts $ED^*_{\mathrm{unc}}$. This impact comes from the effect on the optimum distortion factor $\mu_{\mathrm{unc}}^*$. As indicated by the expressions in Theorem \ref{t3} and shown in Fig.\ref{fig1b}, between a couple of commutative antenna allocation schemes, $(N_t=N_{\min}, N_r = N_{\max})$ and $(N_t = N_{\max}, N_r = N_{\min})$, the former scheme whose number of transmit antennas is the smaller between the two numbers of antennas suffers less distortion than the other. This is reasonable since under a certain total transmit power constraint, the scheme with fewer transmit antennas achieves higher average transmit power per transmit antenna.

\subsection{An example in the MSCBR regime, uncorrelated MIMO channel}

In \cite{zoffoli_globecom08, zoffoli_allerton08}, assuming a $\mathcal{N}(0,1)$ source and the system bandwidth is normalized to unity, Zoffoli \emph{et al.} studied the characteristics of the distortions in $2\times 2$ MIMO systems with different space-time coding strategies. In particular, in \cite{zoffoli_allerton08}, assuming the transmitter knows the instantaneous channel capacity and thus the system is free of outage, they compared the strategies with respect to expected distortion and the cumulative density function of distortion. They exhibited that, among REP (repetition), ALM (Alamouti) and SM (spatial multiplexing) strategies, the expected distortion of the ALM strategy is very close to that of the SM strategy.

As Zoffoli et al. derived \cite{zoffoli_allerton08}, the expected distortion of the ALM strategy is
\begin{equation}
ED_{\mathrm{ALM}} = \frac{2}{3}\cdot\frac{\rho\left[(\rho-4)\rho-4\right]+ 4e^{\frac{2}{\rho}}(3\rho+2)\Gamma(0, \frac{2}{\rho})}{\rho^5} \label{edalm}
\end{equation}
and the expected distortion of the SM strategy is
\begin{equation}
ED_{\mathrm{SM}} = -\frac{16\left[\rho-(\rho+2)e^{\frac{2}{\rho}}\Gamma(0,\frac{2}{\rho})\right]^2} {\rho^6}+\frac{8\left[\rho-2e^{\frac{2}{\rho}}\Gamma(0,\frac{2}{\rho})\right] \left[\rho(\rho+2)-4(\rho+1)e^{\frac{2}{\rho}}\Gamma(0,\frac{2}{\rho})\right]}{\rho^6}. \label{edsm}
\end{equation}
Note that $\Gamma(a,x)$ denotes the upper incomplete gamma function, $\Gamma(a,x) = \int_x^{\infty}t^{a-1}e^{-t}dt$. As given in \cite{zoffoli_allerton08}, Fig.\ref{fig2a} shows the difference between the expected distortions of the two strategies in log-lin scale. In log-lin scale, the expected distortion of the ALM strategy is very close to that of the SM strategy in the high SNR regime, i.e., $ED_{\mathrm{ALM}} - ED_{\mathrm{SM}}$ is very small.

\begin{figure}
\subfigure[Log-lin scale]{
\includegraphics[width = 7.5cm , height = 7cm]{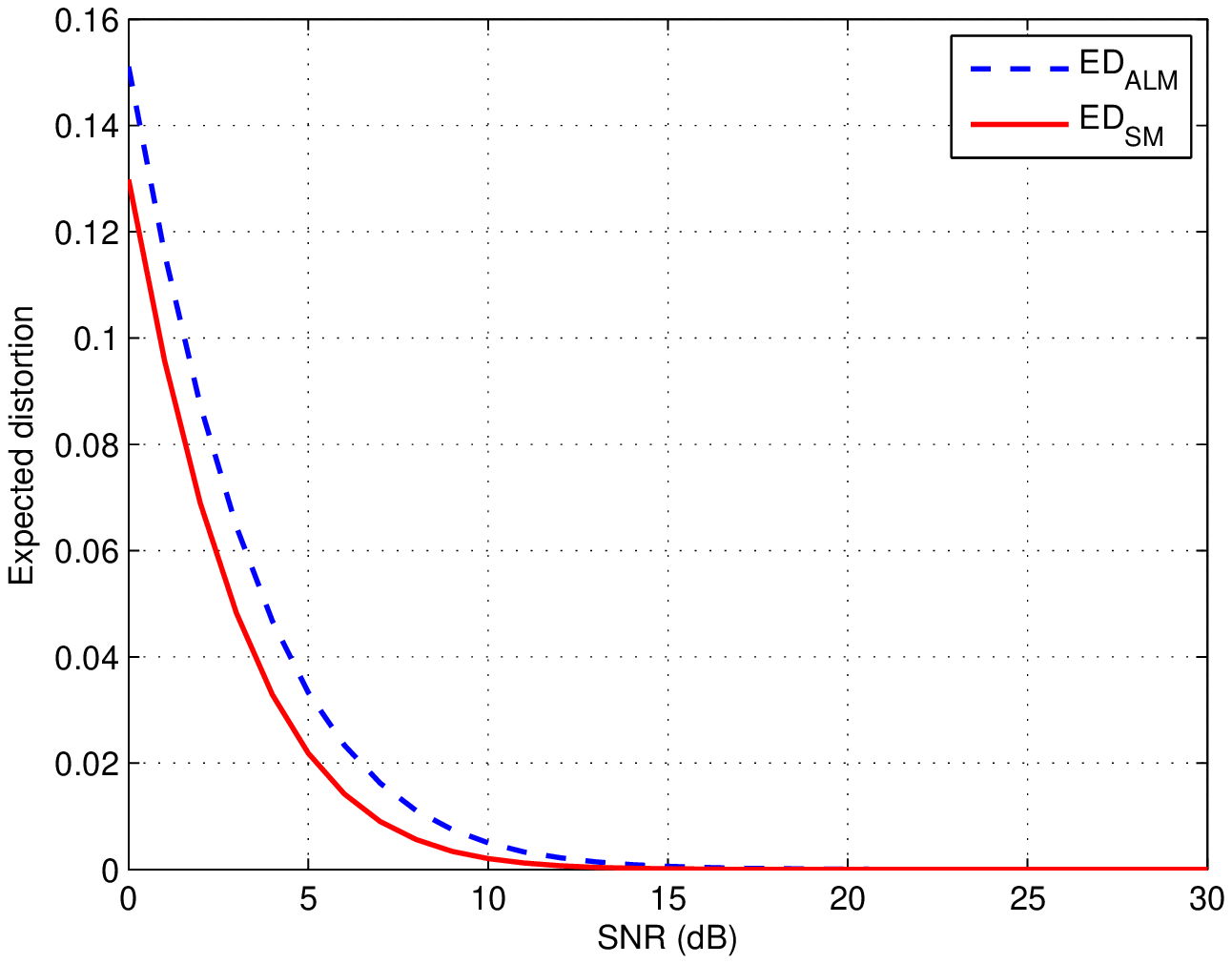}
\label{fig2a}
}
\subfigure[Log-log scale]{
\includegraphics[width = 7.5cm , height = 7cm]{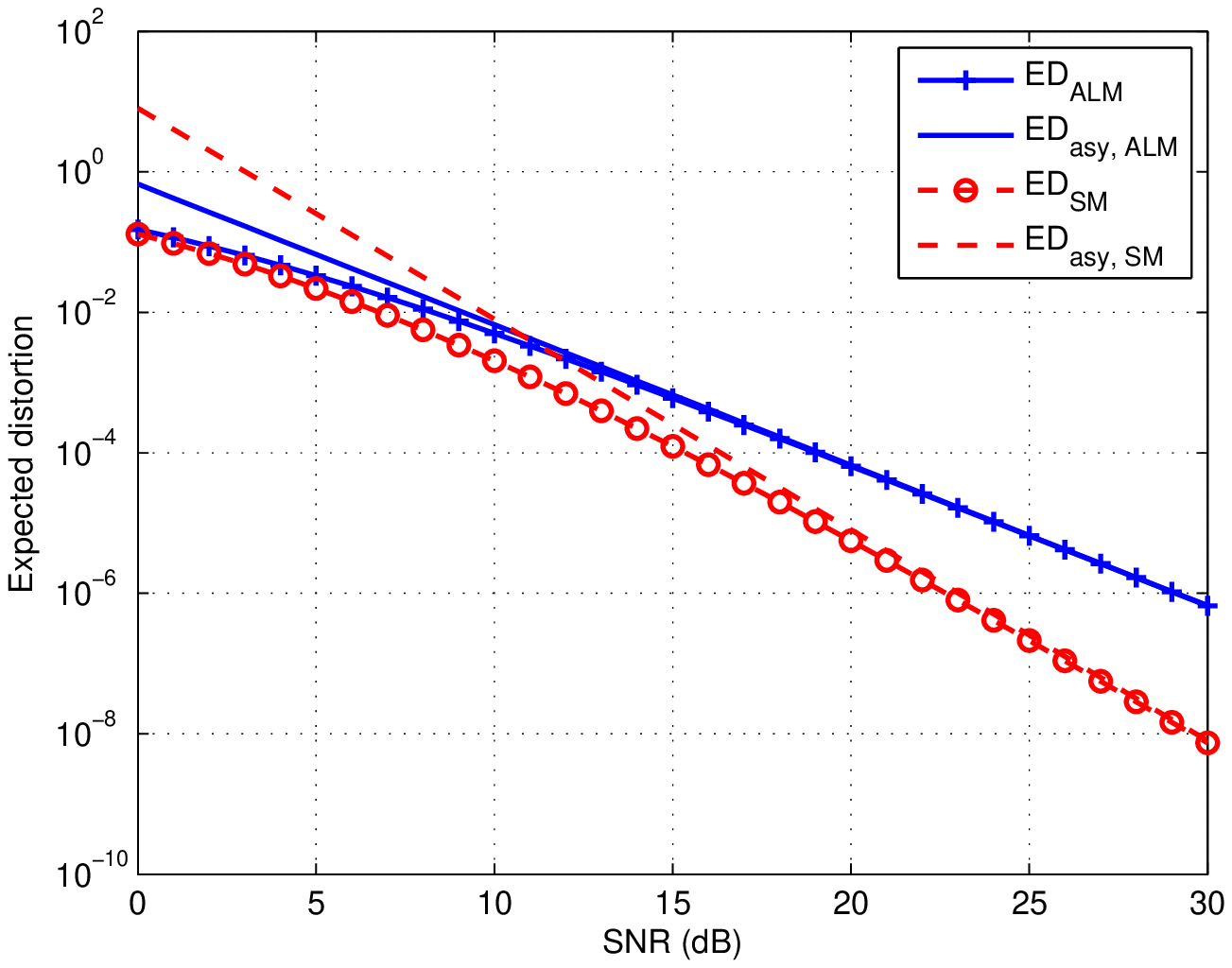}
\label{fig2b}}
\caption{ALM vs. SM, uncorrelated channel, $N_t = N_r = 2$, $\eta =1$, moderate SCBR.}
\label{fig2}
\end{figure}

According to the assumption in \cite{zoffoli_allerton08}, the SCBR of the systems is one, i.e., $\eta = 1$. As $N_t = N_r = 2$, it is seen that, for the systems considered,
\begin{equation}
\vert N_t-N_r\vert +1 < \frac{2}{\eta } < N_t+N_r-1
\end{equation}
and thus the systems are in the moderate SCBR regime.  From the description of SM strategy, it is seen that the expected distortion achieved by SM strategy is the optimum expected distortion for a $2\times 2$ MIMO system with $\eta = 1$, \emph{i.e.}, $ED_{\mathrm{SM}} = ED^*_{\mathrm{unc}}$. Regarding the asymptotic characteristics, from (\ref{edalm}) and (\ref{edsm}), we have
\begin{align}
ED_{\mathrm{asy, ALM}}= \frac{2}{3}\rho^{-2},\\
ED_{\mathrm{asy, SM}} = ED^*_{\mathrm{asy, unc}} = 8\rho^{-3}.
\end{align}

The ratio $ED_{\mathrm{ALM}}/ED_{\mathrm{SM}}$ is an alternative metric revealing the difference between $ED_{\mathrm{ALM}}$ and $ED_{\mathrm{SM}}$, illustrated by Fig.\ref{fig2b} in log-log scale. We see that in the high SNR regime, although $ED_{\mathrm{ALM}}$ approaches $ED_{\mathrm{SM}}$ in the linear scale as Fig.\ref{fig2a} shows, the ratio $ED_{\mathrm{ALM}}/ED_{\mathrm{SM}}$ becomes larger and larger as  Fig.\ref{fig2b} shows. It can also be seen that the expected distortions of the ALM and SM strategies are determined by their asymptotic expressions when the SNR's are greater than 13 dB and 20 dB respectively.

\subsection{An example in the LSCBR regime, uncorrelated MIMO channel}\label{eguncl}

\begin{figure}
\centering
\includegraphics[width = 7.5cm , height = 7cm]{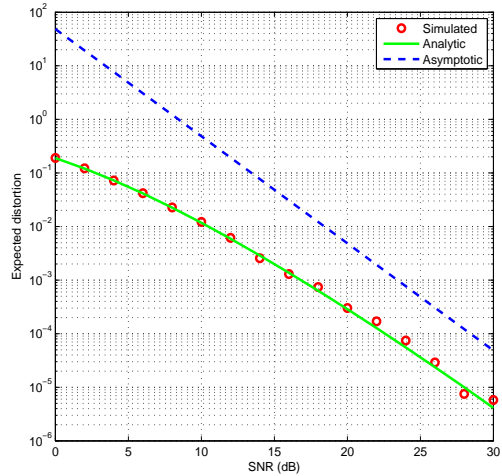}
\caption{Uncorrelated channel, $N_t = 1$, $N_r = 2$, $\eta = 0.99$, low SCBR}
\label{fig3}
\end{figure}

Fig. \ref{fig3} presents an example when $N_t = 1$, $N_r = 2$ and $\eta = 0.99$. The red circles represent the results of Monte Carlo simulations which are carried out by generating 10~000 realizations of $\mathbf{H}$ and evaluating (\ref{EDopt}). The blue dashed line represents $ED^*_{\mathrm{asy,unc}}$. The green line represents the analytical expression of $ED^*_{\mathrm{unc}}$ in Theorem \ref{t1}. It can be seen that the simulated results agree well with our analytical results. The gap between the asymptotic tangent line and the curve of $ED^*_{\mathrm{unc}}$ implies that, for the systems in the LSCBR regime, more terms in the polynomial of $ED^*_{\mathrm{unc}}$ are to be analyzed, which is much more complicated than analyzing the asymptotic expression. It is a subject for future research.

\subsection{Examples in HSCBR \& LSCBR regimes, spatially correlated MIMO channel}

The analytical framework we derived is general and valid for all correlated cases with distinct (unrepeated) eigenvalues of the correlation matrix $\bm{\Sigma}$. To give an example, we consider a well-known correlation model as in \cite{chiani}: the exponential correlation with $\bm{\Sigma}= \{r^{\vert i-j\vert}\}_{i,j = 1,\cdots, N_r}$ and $r \in (0,1)$ \cite{aalo}.

\begin{figure}
\subfigure[$N_t = 4$, $N_r = 2$, $\eta = 10$, high SCBR]{\label{fig4a}
\includegraphics[width = 7.5cm , height = 7cm]{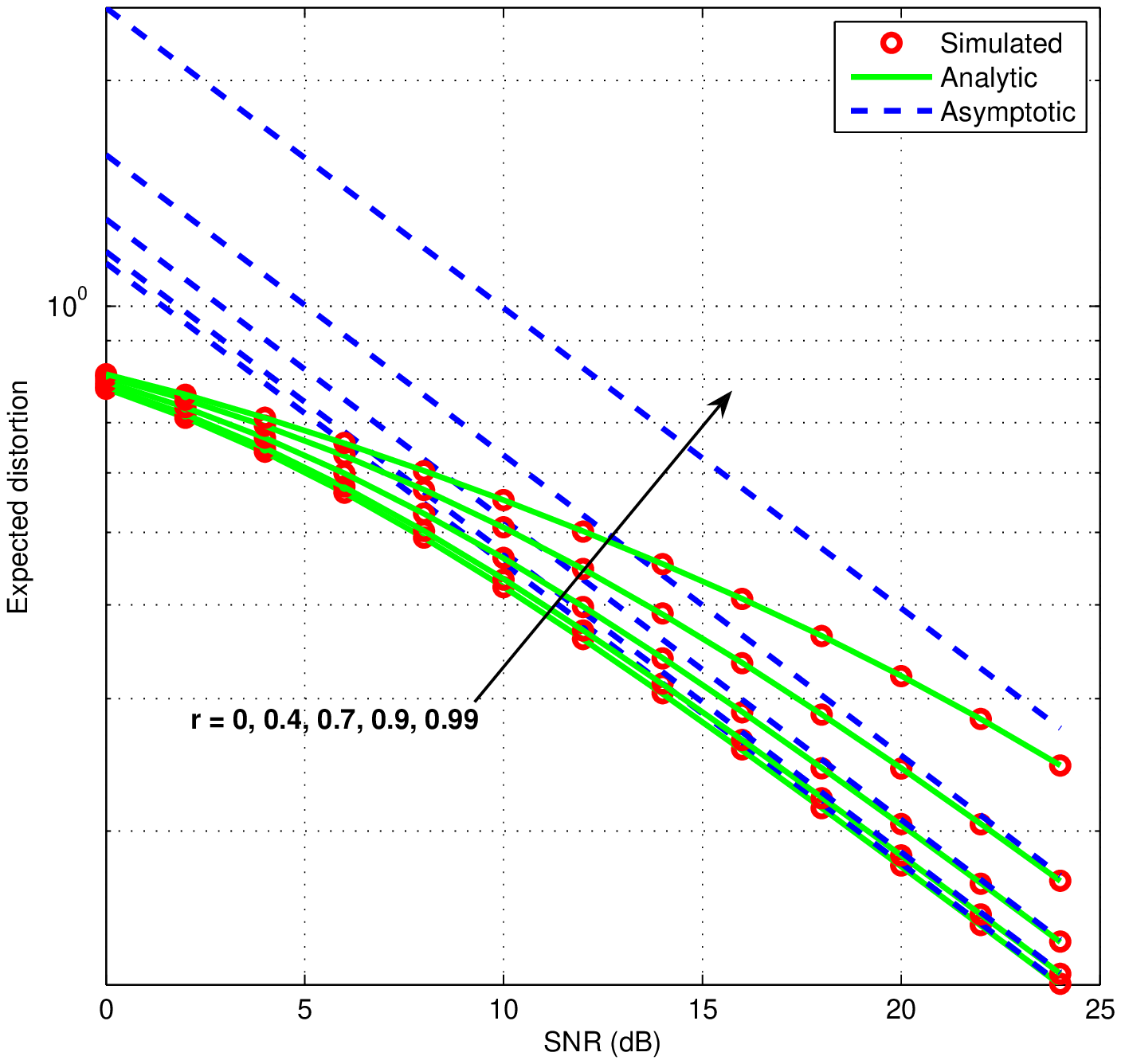}}
\subfigure[$N_t = 2$, $N_r = 2$, $\eta = 0.6657$, low SCBR]{\label{fig4b}
\includegraphics[width = 7.5cm , height = 7cm]{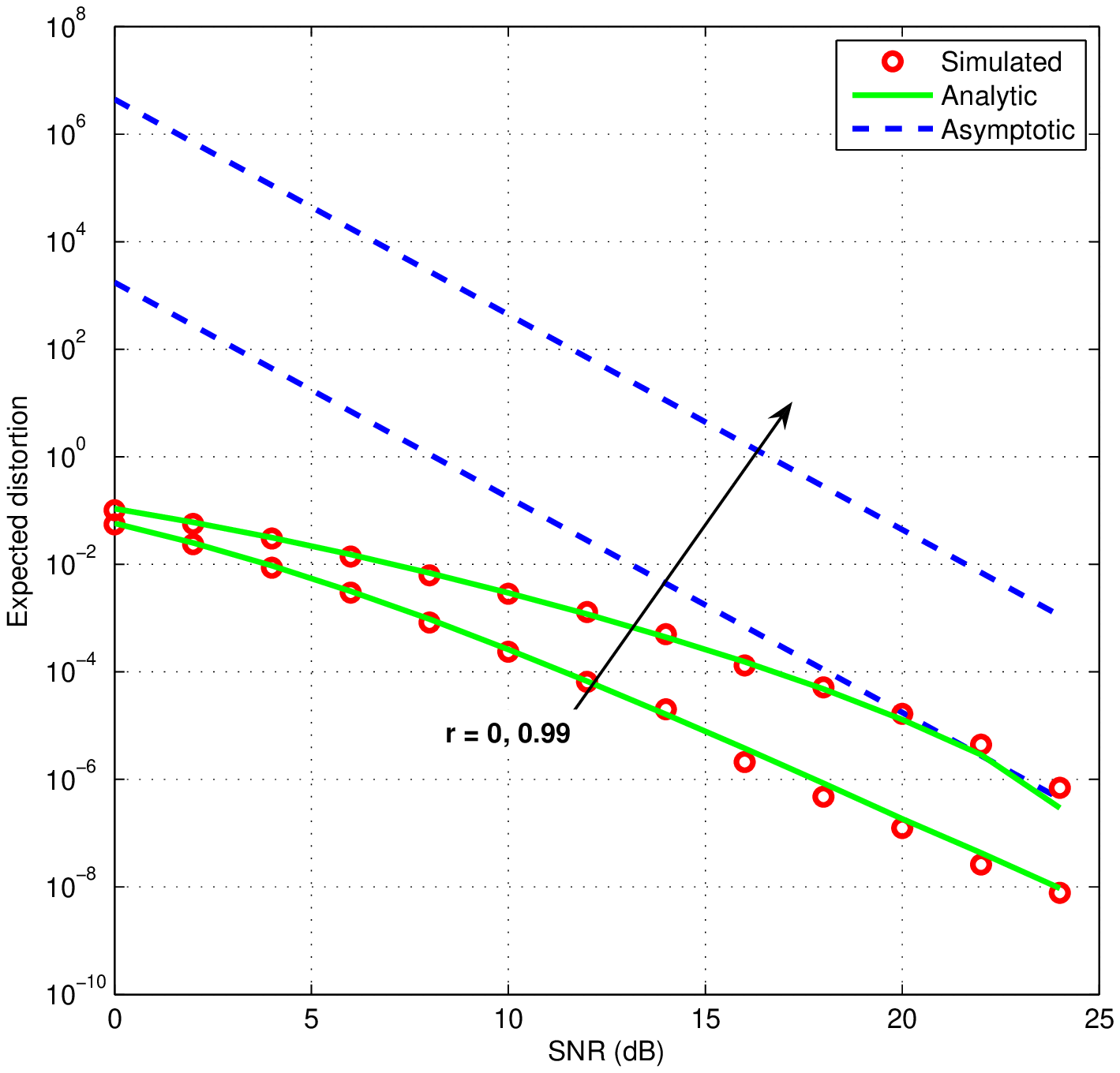}}
\caption{Expected distortions of uncorrelated and correlated channels}
\label{fig4}
\end{figure}

Fig.\ref{fig4} illustrates the optimum expected end-to-end distortion $ED^*$ on a power-one white Gaussian source transmitted in different correlation scenarios. Red circles represent the results of Monte Carlo simulations which are carried out by generating 10 000 realizations of $\mathbf{H}$ and evaluating (\ref{EDopt}). Green lines represent the analytical expressions of $ED^*_{\mathrm{cor}}$ in Theorem \ref{t4} and $ED^*_{\mathrm{unc}}$ in Theorem 1. Blue dashed lines represent the optimum asymptotic expected end-to-end distortion $ED^*_{\mathrm{asy}}$.
\begin{equation}
ED^*_{\mathrm{asy}} = \begin{cases}
\mu^*_{\mathrm{unc}}\rho^{-\Delta^*_{\mathrm{unc}}}, \quad r=0 \\
\mu^*_{\mathrm{cor}}\rho^{-\Delta^*_{\mathrm{cor}}}, \quad r>0.
\end{cases}
\end{equation}

In Fig.\ref{fig4a}, we see that there is an agreement between $ED^*$ and $ED^*_{\mathrm{asy}}$ in the high SNR regime. Corresponding to Theorem \ref{t5} and Theorem \ref{t6}, in the high SNR regime, due to the same optimum SNR distortion exponent, the optimum distortions of the systems in different correlation scenarios have the same descendent slopes; the difference comes from different distortion factors which depend on the correlation coefficients. The optimum distortion is increasing with $r$ and the line of the uncorrelated case ($r=0$) is the lowest. For reaching the same optimum expected distortion, there is about 8 dB difference of SNR between the cases of $r = 0.99$  and $r = 0$. This agrees with our intuition that spatial correlation decreases channel capacity.

The impact of correlation can also be seen in Fig.\ref{fig4b} by the example in the low SCBR regime. There are gaps between the asymptotic lines and the optimum expected distortions for the same reason as for the example in Section \ref{eguncl}.

\section{Conclusion and Future Work}\label{sec_con}

\subsection{Conclusion}

In this paper, considering transmitting a white Gaussian source $s(t)$ over a MIMO channel in an outage-free system, we have derived the analytical expression of the optimum expected end-to-end distortion valid for any SNR (see Theorem \ref{t1} and Theorem \ref{t4}) and the closed-form asymptotic expression of the optimum asymptotic expected end-to-end distortion (see Theorem \ref{t2}, Theorem \ref{t3}, Theorem \ref{t5} and Theorem \ref{t6}) comprised of the optimum distortion exponent and the multiplicative optimum distortion factor. By the results on the optimum asymptotic expected end-to-end distortion, we have analyzed the joint impact of the numbers of antennas, source-to-channel bandwidth ratio (SCBR) and spatial correlation on the optimum expected end-to-end distortion. Straightforwardly, our results are bounds for outage-bearing systems and could be the performance objectives for analog-source transmission systems. To some extend, they are instructive for system design.

\subsection{Future work}

\begin{itemize}
\item
As we have shown in Fig.\ref{fig3} and Fig.\ref{fig4b}, for a system in the low SCBR regime, there is an apparent gap between $ED_{\mathrm{asy}}^*$ and $ED^*$ in the practical high SNR regime. The reason that the gap exists is the effect of the other terms in the polynomial expansion of $ED^*$. Therefore, if the closed-form expression with more terms in the polynomial expansion of $ED^*$ could be derived, the analysis on the behavior of $ED^*$ would be more precise.
\item
Let us provide an insight into Theorem \ref{t2}. Define a non-negative integer $m$ as
\begin{equation}
m = \begin{cases}
N_{\min},& \quad  0 < \frac{2}{\eta} < \vert N_t - N_r\vert+1; \\
N_{\min} - \left\lfloor \frac{\frac{2}{\eta}+1-\vert N_t-N_r\vert}{2}\right\rfloor, & \quad \vert N_t - N_r\vert+1 \leq \frac{2}{\eta} \leq N_t + N_r-1; \\
0, &\quad \frac{2}{\eta} > N_t + N_r-1.
\end{cases}
\end{equation}
Then, (\ref{c2eq:t2}) can be written in the form
\begin{equation}
\Delta^*(\eta) = (N_t - m)(N_r - m) + \frac{2m}{\eta},
\label{c2eq:dexp2}
\end{equation}
which looks analogous to the formula of the Diversity-Multiplexing Tradeoff (DMT) \cite{zheng} and to the expression of the distortion exponent (\ref{c1eq:holli2}) in tandem source-channel coding systems \cite{holliday_it08}. Note that (\ref{c2eq:dexp2}) has nothing to do with outage since the instantaneous channel capacity is assumed to be known at the transmitter. This intriguing similarity induces us to conjecture that there may be a hidden connection to be explored.
\end{itemize}

\section{Acknowledgements}
The authors would like to thank Prof. Giussepe Caire in USC for kindly providing his important manuscript not-yet-published at that time and Prof. Emre Teletar in EPFL for his detailed review and suggestions on the first author's dissertation including this work. Special thanks to Dr. Junbo Huang for the inspiring discussions on mathematic-relevant derivations and borrowing the Bateman's book published in 1953 from INRIA's library for the first author, which became the mathematical basis of this work.

Eurecom's research is partially supported by its industrial members: Swisscom, Thales, SFR, Orange, STEricsson, SAP, BMW Group, Cisco, Monaco Telecom, and Symantec. The research work leading to this paper has also been partially supported by the European Commission under the ICT research network of excellence NEWCOM++ of the 7th Framework programme and by the French ANR-RNRT project APOGEE.

\appendices
\section{Some Properties of $\Psi(a,c;x)$}\label{app1}

\begin{itemize}
\item If $c$ is not an integer,
\begin{equation}
\begin{split}
\Psi(a,c;x) &= \frac{\Gamma(1-c)}{\Gamma(a-c+1)}\Phi(a,c;x)\\
&\quad + \frac{\Gamma(c-1)}{\Gamma(a)}x^{1-c}\Phi(a-c+1,2-c;x)
\end{split}
\label{eq_psicni}
\end{equation}
where $\Phi(a,c;x)$ is another confluent hypergeometric function,
\begin{equation}
\Phi(a,c;x) = \sum_{r=0}^{\infty}\frac{(a)_r}{(c)_r}\frac{x^r}{r!}.
\end{equation}
Note that $(a)_n = \Gamma(a+n)/\Gamma(a)$.

\item if $c$ is a positive integer,
\small{
\begin{equation}
\begin{split}
&\Psi(a,n+1;x) = \frac{(-1)^{n-1}}{n!\Gamma(a-n)}\Big{[}\Phi(a,n+1;x)\log x \\
& \quad + \sum_{r=0}^{\infty}\frac{(a)_r}{(n+1)_r}\left(\psi(a+r)-\psi(1+r)-\psi(1+n+r)\right)\frac{x^r}{r!}\Big{]}\\
& \quad + \frac{(n-1)!}{\Gamma(a)}\sum_{r=0}^{n-1}\frac{(a-n)_r}{(1-n)_r}\frac{x^{r-n}}{r!} \quad n = 0,1,2,...
\end{split}
\label{eq_psicpi}
\end{equation}
}
\normalsize
The last sum is to be omitted if $n=0$.

\item
\begin{equation}
\Psi(a,c;x) = x^{1-c}\Psi(a-c+1, 2-c; x).
\label{eq_psieq}
\end{equation}
Thus, when $c$ is a non-positive integer, we can  obtain the form of  $\Psi(a,c;x)$ from (\ref{eq_psicpi}) and (\ref{eq_psieq}),
\small{
\begin{equation}
\begin{split}
&\Psi(a,c;x) = \frac{(-1)^{-c}}{(1-c)!\Gamma(a)}\Big{[}\Phi(a+1-c,2-c;x)x^{1-c}\log x \\
& \quad + \sum_{r=0}^{\infty}\frac{(a+1-c)_r}{(2-c)_r} \big(\psi(a+1-c+r)-\psi(1+r) \\
& \quad -\psi(2-c+r)\big) \frac{x^{r+1-c}}{r!}\Big{]}+ \frac{\Gamma(1-c)}{\Gamma(a+1-c)}\sum_{r=0}^{-c}\frac{(a)_r}{(c)_r}\frac{x^{r}}{r!}
\end{split}
\label{eq_psicnpi}
\end{equation}}
\end{itemize}

\section{Proof of Lemma \ref{l1}} \label{appl1}

We will prove this lemma recursively.

Define $p(n) = \mathrm{min}\{a, n\}$, subject to  $a \in \mathbb{R}^+$ and  $n \in \mathbb{Z}^+$. If $m_1 - m_2 = n_1 - n_2$, $m_1 > n_1$, and $m_2 > n_2$, then
\begin{equation}
p(m_1)-p(m_2) \leq p(n_1) - p(n_2).\label{p1}
\end{equation}

In the case that $m = 2$, by definition,
\begin{equation}
\mathbf{W}_2(x) = \left(
\begin{array}{cc}
c_{11}x^{p(2)} & c_{12}x^{p(3)} \\
c_{21}x^{p(3)} & c_{22}x^{p(4)} \\
\end{array}
\right).
\end{equation}
Then
\begin{equation}
\vert\mathbf{W}_2(x)\vert = c_{11}c_{22}x^{p(2)+p(4)} - c_{12}c_{21}^2x^{2p(3)}.
\end{equation}
By (\ref{p1}),
\begin{equation}
p(2) + p(4) \leq 2p(3).
\end{equation}
Consequently, when $m=2$,
\begin{equation}
\begin{split}
\lim_{x\rightarrow0}\frac{\mathrm{log}\vert\mathbf{W}_{2}(x)\vert}{\mathrm{log}x} &= p(2) + p(4) \\
&= \sum_{i=1}^2\mathrm{min}\{a, 2i\}.
\end{split}
\end{equation}

Suppose when $m = k-1$, $k \in \mathbb{Z}^+ \cap [3,+\infty)$,
\begin{equation}
\lim_{x\rightarrow0}\frac{\mathrm{log}\vert\mathbf{W}_{k-1}(x)\vert}{\mathrm{log}x}  = \sum_{i=1}^{k-1}\mathrm{min}\{a, 2i\}.
\end{equation}
When $m=k$, $\mathbf{W}_k(x)$ can be written as
\begin{equation}
\left(
\begin{array}{cc}
\mathbf{W}_{k-1}(x) & \mathbf{b}_k(x) \\
\mathbf{b}_k^T(x) & c_{kk}x^{p(2k)} \\
\end{array}
\right)
\end{equation}
where the column vector
\begin{equation}
\mathbf{b}_k(x) = \left(
\begin{array}{c}
c_{1k}x^{p(k+1)} \\
\vdots \\
c_{k-1,k}x^{p(2k-1)} \\
\end{array}
\right).
\end{equation}
Hence, in terms of Schur determinant formula \cite{horn},
\begin{equation}
\begin{split}
\lim_{x\rightarrow 0}\frac{\mathrm{log}\vert\mathbf{W}_k(x)\vert}{\mathrm{log}x} &=
\lim_{x\rightarrow 0}\frac{\mathrm{log}\left[\vert \mathbf{W}_{k-1}(x)\vert\, \vert\mathbf{W}_{k-1}^*(x)\vert\right]}{\mathrm{log}x}
\\
&= \lim_{x\rightarrow 0}\frac{\mathrm{log}\vert\mathbf{W}_{k-1}(x)\vert}{
\mathrm{log}x}+\lim_{x\rightarrow 0}\frac{\mathrm{logdet}\mathbf{W}_{k-1}^*(x)}{\mathrm{log}x}
\label{lop1}
\end{split}
\end{equation}
where $\mathbf{W}_{k-1}^*(x)$ is the Schur complement of $\mathbf{W}_{k-1}(x)$,
\begin{equation}
\mathbf{W}_{k-1}^*(x) = c_{2k}x^{p(2k)} - \mathbf{b}^T_k(x)\mathbf{W}^{-1}_{k-1}(x) \mathbf{b}_k(x). \label{lop2}
\end{equation}
Since $\mathbf{W}_{k-1}(x)\mathbf{W}^{-1}_{k-1}(x) = \mathbf{I}$, $\mathbf{W}^{-1}_{k-1}(x)$ is of the form
\begin{equation}
\left(
\begin{array}{ccc}
c_{11}'x^{-p(2)} & \cdots & c_{1k}'x^{-p(k)} \\
\vdots & \ddots & \vdots \\
c_{k1}'x^{-p(k)} & \cdots & c_{k-1,k-1}'x^{-p(2k-2)} \\
\end{array}
\right).
\end{equation}
Consequently,
\begin{equation}
\begin{split}
\quad & \lim_{x\rightarrow 0}\frac{\mathrm{log}\left[\mathbf{b}^T_k(x)\mathbf{W}^{-1}_{k-1}(x) \mathbf{b}_k(x)\right]}{\mathrm{log}x} \\
\quad &= \mathrm{min}\{ p(2k-1)-p(k)+p(k+1),\quad p(2k-1)-p(k+1)+p(k+2), \quad \\
& \qquad \ldots, \quad p(2k-1)-p(2k-2)+p(2k-1)\} \\
&\stackrel{(a)}=p(2k-1)-p(2k-2)+p(2k-1)\\
&\stackrel{(b)}\geq p(2k)
\end{split}
\end{equation}
where both steps $(a)$ and $(b)$ follow the inequality (\ref{p1}).
Therefore, by (\ref{lop1}) and (\ref{lop2}),
\begin{equation}
\lim_{x\rightarrow 0}\frac{\mathrm{logdet}\mathbf{W}(x)}{\mathrm{log}x} = \sum_{i=1}^{k}\mathrm{min}\{a, 2i\},
\end{equation}
which concludes this proof.

\section{Proof of Lemma \ref{l2}}\label{appl2}

Each summand in $\vert \mathbf{W}(x)\vert$, which is a product of the entries $w_{1j_1},\ldots, w_{mj_m}$,  can be written as
\begin{equation}
x^{\sum_{k=1}^m(k+j_k)}\, \prod_{k=1}^m c_{k+j_k}
\end{equation}
where the numbers $\{j_1, j_2, ..., j_m\}$ is a permutation of $\{1, 2, ..., m \}$. Then, each summand has the same degree $m(m+1)$, which concludes the proof.

\section{Proof of Lemma \ref{l3}} \label{appl3}

By definition,
\begin{equation}
\mathbf{W} = \left(
\begin{array}{ccc}
\Gamma(a+1) & \cdots & \Gamma(a+m) \\
\vdots & \ddots & \vdots \\
\Gamma(a+m) & \cdots &  \Gamma(a+2m-1) \\
\end{array}
\right).
\end{equation}

For calculating the determinant of $\mathbf{W}$, we do Gaussian elimination by elementary row operations from bottom to top for obtaining the equivalent upper triangular $\mathbf{L}$\cite{hill}. Below-diagonal entries are eliminated from the first column to the last column.

Let $\mathbf{W}_l$ denote the matrix after the below-diagonal entries of the $l^{\mathrm{th}}$ column are eliminated. Then the $(i,j)^{\mathrm{th}}$ entry of $\mathbf{W}_{l}$ subject to $i \geq j > l$ is of the form
\begin{equation}
w_{l,i,j} = \theta_{l,i,j} \, \Gamma(a+i+j-1-l). \label{hlij}
\end{equation}
Hence, after below-diagonal entries of the $(l-1)^{\mathrm{th}}$ column are eliminated, for the entries subject to $i > l$ and $j=l$,
\begin{align}
w_{l-1, i-1, l} &= \theta_{l-1,i-1,l} \, \Gamma(a+i-1), \\
w_{l-1, i, l} &= \theta_{l-1,i,l} \, \Gamma(a+i).
\end{align}

Consequently, for eliminating the $\mathrm(i,l)^{\mathrm{th}}$ multiplied entry of $\mathbf{W}_{l-1}$ to obtain $\mathbf{W}_l$, the factor for the row operation in the Gaussian elimination on the $i^{\mathrm{th}}$ row
\begin{equation}
c_{l,i} = -\frac{\theta_{l-1, i, l}}{\theta_{l-1,i-1,l}}\, (a+i-1).
\end{equation}

That is, $w_{l,i,j}$ is obtained as follows:
\begin{equation}
\begin{split}
w_{l,i,j} &= w_{l-1, i, j} + c_{l,i}\, w_{l-1,i-1, j} \\
&= \left[\theta_{l-1,i,j}\,(a+i+j-l-1) - \theta_{l-1,i-1,j}\,\frac{\theta_{l-1, i,l}}{\theta_{l-1,i-1,l}}\,(a+i-1)\right]\\
& \quad \times \Gamma(a+i+j-l-1).
\end{split}
\end{equation}
Comparing the RHS of the above equation to (\ref{hlij}), we get
\begin{equation}
\theta_{l,i,j} = \theta_{l-1,i,j}\,(a+i+j-l-1) - \theta_{l-1,i-1,j}\,\frac{\theta_{l-1, i,l}}{\theta_{l-1,i-1,l}}\,(a+i-1).\label{tlij}
\end{equation}

Before doing any operation on $\mathbf{W}$, $\theta_{0,i,j} =1$. Then, by (\ref{tlij}), we obtain $\theta_{1,i,j} = j-1$ and $\theta_{2,i,j}= \Gamma(j)/\Gamma(j-2)$. Supposing
\begin{equation}
\theta_{l,i,j} = \frac{\Gamma(j)}{\Gamma(j-l)}.
\end{equation}
then by (\ref{tlij}) we have
\begin{equation}
\theta_{l+1,i,j} =  \frac{\Gamma(j)}{\Gamma(j-l-1)}.
\end{equation}
Therefore, our conjecture is right. Hence,
\begin{equation}
\theta_{i-1,i,i} = \Gamma(i).
\end{equation}
and the $i^{\mathrm{th}}$ diagonal entry of $\mathbf{L}$,
\begin{equation}
w_{i-1,i,i} = \Gamma(i)\Gamma(a+i).
\end{equation}
Consequently,
\begin{equation}
\vert\mathbf{W}_m\vert = \prod_{k=1}^m \Gamma(k)\Gamma(a+k),
\end{equation}
which concludes this proof.

\section{Proof of Lemma \ref{l4}}\label{appl4}

This proof is similar to Appendix \ref{appl3}.

By definition,
\begin{equation}
\mathbf{W} = \left(
\begin{array}{ccc}
\Gamma(a+1)\Gamma(b-1) & \cdots & \Gamma(a+m)\Gamma(b-m) \\
\vdots & \ddots & \vdots \\
\Gamma(a+m)\Gamma(b-m) & \cdots &  \Gamma(a+2m-1)\Gamma(b-2m+1) \\
\end{array}
\right).
\end{equation}

The $(i,j)^{\mathrm{th}}$ entry of $\mathbf{W}_{l}$ subject to $i \geq j > l$ is of the form
\begin{equation}
w_{l,i,j} = \theta_{l,i,j} \, \Gamma(a+i+j-1-l)\Gamma(b-i-j+1). \label{l4_hlij}
\end{equation}

Consequently, the multiplied factor
\begin{equation}
\begin{split}
c_{l,i} = -\frac{\theta_{l-1, i, l}\,(a+i-1)}{\theta_{l-1,i-1,l}\,(b-i-l+1)}.
\end{split}
\end{equation}
and
\begin{equation}
\begin{split}
w_{l,i,j} &= w_{l-1, i, j} + c_{l,i}\, w_{l-1,i-1, j} \\
&= \left[\theta_{l-1, i, j}\, (a+i+j-l-1) - \frac{\theta_{l-1,i-1,j}\, \theta_{l-1,i,l}\,(a+i-1)\,(b-i-j+1)}{\theta_{l-1,i-1,l}\, (b-i-l+1)}\right]\\
& \qquad \,\Gamma(a+i+j-l-1)\,\Gamma(b-i-j+1).
\end{split}
\end{equation}

Comparing the RHS of the above expression to (\ref{l4_hlij}), we get
\begin{equation}
\theta_{l,i,j} = \theta_{l-1,i,j}\,(a+i+j-l-1) - \theta_{l-1,i-1,j}\,\frac{\theta_{l-1,i,l}\,(a+i-1)\,(b-i-j+1)} {\theta_{l-1,i-1,l}(b-i-l+1)}
\label{l4_tlij}
\end{equation}

Before doing any operation on $\mathbf{W}$, $\theta_{0,i,j} =1$. Then, by (\ref{l4_tlij}), we obtain
\begin{align}
\theta_{1,i,j} &= \frac{(j-1)(a+b-1)}{(b-i)}, \\
\theta_{2,i,j} &= \frac{(j-1)(j-2)(a+b-1)(a+b-2))}{(b-i)(b-i-1)}.
\end{align}
Supposing
\begin{equation}
\theta_{l,i,j} = \prod_{k=1}^l\frac{(j-k)(a+b-k)}{(b-i-l+k)}.
\end{equation}
then by (\ref{l4_tlij}) we have
\begin{equation}
\theta_{l+1,i,j} = \prod_{k=1}^{l+1}\frac{(j-k)(a+b-k)}{(b-i-l+k)}.
\end{equation}
Therefore, our conjecture is right. Hence, for $i\geq 2$, the $i^{\mathrm{th}}$ diagonal entry of the equivalent upper triangular $\mathbf{L}$,
\begin{equation}
w_{i-1,i,i} = \Gamma(a+b)\,\Gamma(i)\,\Gamma(a+i)\frac{\Gamma(b-2i+2)\Gamma(b-2i+1)}
{\Gamma(a+b-i+1)\Gamma(b-i+1)}.
\end{equation}
Consequently,
\begin{equation}
\begin{split}
\vert\mathbf{W}\vert &= \Gamma(a+1)\Gamma(b-1)\Gamma^{m-1}(a+b)\\
& \quad \, \prod_{k=2}^m  \Gamma(k)\Gamma(a+k)\frac{\Gamma(b-2k+2)\Gamma(b-2k+1)}{\Gamma(a+b-k+1)\Gamma(b-k+1)},
\end{split}
\end{equation}
which concludes this proof.

\section{Proof of Lemma \ref{l8}}\label{appl8}
The derivation of Lemma \ref{l8} is analogous to Appendix \ref{appl3}. However, for deriving Lemma \ref{l8}, we use Gaussian elimination by column operations from the right to the left, instead of row operations from the bottom to the top in Appendix \ref{appl3}. After the Gaussian elimination, the left upper-diagonal triangle-matrix becomes a zero triangle-matrix. Consequently, the determinant of $\mathbf{W}$ is
\begin{equation}
\vert\mathbf{W}\vert = (-1)^{\frac{m(m-1)}{2}}\prod_{k=1}^m \Gamma(k)\Gamma(a+k-m).
\end{equation}

\section{Proof of Lemma \ref{l6}}\label{appl6}
$f(n)$ can be written as
\begin{equation}
f(n) = \frac{\Gamma(n-a)}{\Gamma(n)}\cdots\frac{\Gamma(n-m+1-a)}{\Gamma(n-m+1)}.
\end{equation}
We thus have
\begin{equation}
f(n+1) - f(n) = \left(\frac{n-a}{n}\cdots\frac{n-m+1-a}{n-m+1}-1\right)\, f(n).
\end{equation}
It is seen that $\frac{n-a}{n}\cdots\frac{n-m+1-a}{n-m+1} < 1$ and $f(n) > 0$. Hence, $f(n+1)-f(n) < 0$, \emph{i.e.}, $f(n)$ is monotonically decreasing.

For $g(n)$,
\begin{equation}
\begin{split}
g(n+1)-g(n) &= \left((n+1)^{am}\;\frac{n-a}{n}\cdots\frac{n-m+1-a}{n-m+1} - n^{am} \right)\, f(n) \\
&\leq \left[(n+1)^{am}\;\left(\frac{n-a}{n}\right)^m - n^{am} \right]\, f(n)
\end{split}
\end{equation}

If
\begin{equation}
(n+1)^a\cdot \frac{n-a}{n} < n^a,
\end{equation}
then we have $g(n+1)-g(n) < 0$.

Define a function $h(x)$,
\begin{equation}
\begin{split}
h(x) &= (x-a)(x+1)^a-x^{a+1}\\
&= (x+1)^{a+1} - x^{a+1} - (a+1)(x+1)^a,\qquad x>a
\end{split}
\end{equation}
In terms of mean value theory \cite{rudin}, for $\phi(x) = x^{a+1}$, there exists $\xi$ which lets
\begin{equation}
\phi'(\xi) = (x+1)^{a+1} - x^{a+1}, \qquad x < \xi < x+1
\end{equation}
where $\phi'(\xi)$ is the first derivative.

As
\begin{equation}
\phi^{''}(x) = a(a+1)x^{a-1} > 0,
\end{equation}
$\phi^{'}(x)$ is monotonically increasing and thus
\begin{equation}
\phi^{'}(\xi) < \phi^{'}(x+1).
\end{equation}
So,  $h(x) < 0$.

Then, we have
\begin{equation}
\frac{x-a}{x} < \left(\frac{x}{x+1}\right)^a.
\end{equation}
When $x = n$,
\begin{equation}
(n+1)^a\,\frac{n-a}{n} < n^a
\end{equation}
Consequently, $g(n+1)-g(n) < 0$, that is, $g(n)$ is monotonically decreasing.

\section{Proof of Lemma \ref{l7}}\label{appl7}

In terms of Euler's reflection formula
\begin{equation}
\Gamma(1-x)\Gamma(x) = \frac{\pi}{\sin(\pi x)},
\end{equation}
\begin{align}
\Gamma(a+n+1)\Gamma(-a-n) &= \frac{\pi}{\sin\left(\pi (a+n+1)\right)}, \\
\Gamma(a+1)\Gamma(-a) &= \frac{\pi}{\sin\left(\pi (a+1)\right)}.
\end{align}
Straightforwardly,
\begin{equation}
\frac{\Gamma(a+n+1)}{\Gamma(a+1)} = (-1)^n \frac{\Gamma(-a)}{\Gamma(-a-n)},
\end{equation}
i.e.,
\begin{equation}
(a+1)_n = (-1)^n(-a-n)_n.
\end{equation}

\section{Proof of Theorem \ref{t3}} \label{appunc}

From the proof of Theorem \ref{t2}, we see that
\begin{equation}
\mu^*_{\mathrm{unc}}(\eta) = \frac{P_s\vert\mathbf{E}(\eta)\vert }{\prod_{k=1}^{N_{\mathrm{min}}}\Gamma(N_\mathrm{max} -k+1)\Gamma(N_{\mathrm{min}}-k+1)} \label{muunc1}
\end{equation}
where $\mathbf{E}(\eta)$ is an $N_{\min}\times N_{\min}$ matrix of $e_{ij}(\eta)$'s.
\\

\begin{compactenum}
%proof of T3.1
\item When $2/\eta \in (0, \vert N_t-N_r\vert+1)$, given by (\ref{uij}) and Table \ref{tb_psisx}, we have
\begin{equation}
e_{ij}(\eta) = {N_t}^{\frac{2}{\eta}}\Gamma(d_{ij}-\frac{2}{\eta}).
\end{equation}
By Lemma \ref{l3},
\begin{equation}
\vert\mathbf{E}(\eta)\vert = N_t^{\Delta_{\mathrm{unc}}^*}\kappa_h\left(\frac{2}{\eta}, N_{\min}, N_{\min}, N_{\max}\right). \label{muunc2}
\end{equation}
In this case, $\Delta^*_{\mathrm{unc}}(\eta) = 2N_{\min}/\eta$. Substituting (\ref{muunc2}) into (\ref{muunc1}), we obtain the optimum distortion factor in this case in the closed form
\begin{equation}
\mu_{\mathrm{unc}}^*(\eta) = P_s{N_t}^{\Delta^*_{\mathrm{unc}}}\frac{\kappa_h(\frac{2}{\eta}, N_{\min}, N_{\min}, N_{\max} )}{\prod_{k=1}^{N_{\min}}\Gamma(N_{\max}-k+1)\Gamma(N_{\min}-k+1)}.
\end{equation}
In the light of Lemma \ref{l6}, it monotonically decreases with $N_{\max}$.
\\
%proof of T3.2
\item When $2/\eta \in (N_t+N_r-1, \infty)$, in terms of (\ref{uij}) and Table \ref{tb_psisx}, we have
\begin{equation}
e_{ij}(\eta) = N_t^{d_{ij}}\Gamma(d_{ij})\frac{\Gamma\;\left(\frac{2}{\eta}-d_{ij}\right)} {\Gamma\left(\frac{2}{\eta}\right)}.\label{eij2}
\end{equation}
In terms of Lemma \ref{l2} and Lemma \ref{l4}, the determinant of $\mathbf{E}(\eta)$ is
\begin{equation}
\vert\mathbf{E}(\eta)\vert = N_t^{\Delta_{\mathrm{unc}}^*}\kappa_l\left(\frac{2}{\eta}, N_{\min}, N_{\min}, N_{\max} \right). \label{muunc3}
\end{equation}
In this case, $\Delta^*_{\mathrm{unc}}(\eta) = N_tN_r$. Substituting (\ref{muunc3}) into (\ref{muunc1}), we obtain the optimum distortion factor in this case in the form
\begin{equation}
\mu_{\mathrm{unc}}^* = P_s{N_t}^{\Delta^*_{\mathrm{unc}}}\frac{\kappa_l(\frac{2}{\eta}, N_{\min}, N_{\min}, N_{\max} )}{\prod_{k=1}^{N_{\min}}\Gamma(N_{\max}-k+1)\Gamma(N_{\min}-k+1)}.
\end{equation}
\\
%proof of T3.3
\item When $2/\eta \in [\vert N_t-N_r\vert+1, N_t+N_r-1]$, the analysis is relatively complex.  Define a partition number
\begin{equation}
l = \left\lfloor \frac{\frac{2}{\eta}+1-\vert N_t-N_r\vert}{2}\right\rfloor
\end{equation}
and partition the Hankel matrix $\mathbf{E}(\eta)$ in (\ref{Dlb}) as
\begin{equation}
\mathbf{E}(\eta) = \left(
\begin{array}{cc}
\mathbf{A} & \mathbf{B} \\
\mathbf{B}^{\mathrm{T}} & \mathbf{C} \\
\end{array}
\right)
\end{equation}
where $\mathbf{A}$ is the $l\times l$ submatrix and $\mathbf{C}$ is the $(N_{\min}-l)\times (N_{\min}-l)$ submatrix.

At high SNR, in terms of Table \ref{tb_psisx},  if $2l \neq \frac{2}{\eta}+1-\vert N_t - N_r\vert$, entries of $\mathbf{A}$ and $\mathbf{C}$ approximate
\begin{align}
\tilde{a}_{ij} &= N_t^{d_{ij}}\Gamma(d_{ij})\frac{\Gamma(\frac{2}{\eta}-d_{ij})} {\Gamma(\frac{2}{\eta})}\;\rho^{-d_{ij}}, \\
\tilde{c}_{ij} & = N_t^{\frac{2}{\eta}}\Gamma(d_{ij}-\frac{2}{\eta})\rho^{-\frac{2}{\eta}} \label{c2eq:cij};
\end{align}
if $2l = \frac{2}{\eta}+1-\vert N_t - N_r\vert$, the form of $\tilde{c}_{ij}$ is the same as (\ref{c2eq:cij}) whereas the form of $\tilde{a}_{ij}$ becomes
\begin{equation}
\tilde{a}_{ij} = \begin{cases} N_t^{d_{ij}}\Gamma(d_{ij})\frac{\Gamma(\frac{2}{\eta}-d_{ij})} {\Gamma(\frac{2}{\eta})}\;\rho^{-d_{ij}}, \quad (i,j) \neq (l,l); \\
N_t^{\frac{2}{\eta}}\log\rho\;\rho^{-\frac{2}{\eta}}, \quad (i,j) = (l,l).
\end{cases}
\end{equation}

In terms of Schur determinant formula \cite{horn},
\begin{equation}
\vert \mathbf{E}(\eta)\vert = \vert\mathbf{A}\vert \vert\mathbf{C}-\mathbf{A}^*\vert
\end{equation}
where $\mathbf{A}^* = \mathbf{B}^T\mathbf{A}^{-1}\mathbf{B}$. By the method analogous to the derivation in Appendix \ref{appl1}, we know that for high SNR
\begin{equation}
\mathbf{C} - \mathbf{A}^* \sim \widetilde{\mathbf{C}}
\end{equation}
where $\widetilde{\mathbf{C}}$ is composed of $\tilde{c}_{ij}$'s. Consequently,
\begin{equation}
\vert\mathbf{E}(\eta)\vert \sim \vert\widetilde{\mathbf{A}}\vert\vert\widetilde{\mathbf{C}}\vert.
\end{equation}
Given the preceding derivation for high and low SCBR regimes, we have
\begin{align}
\vert\widetilde{\mathbf{A}}\vert & = \begin{cases}
N_t^{l(l+N_{\max}-N_{\min})}\kappa_l(\frac{2}{\eta}, l, N_{\min}, N_{\max})\rho^{-l(l+N_{\max}-N_{\min})}, \\
\quad \text{if}\; 2l \neq \frac{2}{\eta}+1-\vert N_t - N_r\vert; \\
N_t^{l(l+N_{\max}-N_{\min})}\kappa_l(\frac{2}{\eta}, l-1, N_{\min}, N_{\max})\log\rho\;\rho^{-l(l+N_{\max}-N_{\min})}, \\
\quad \text{if}\; 2l = \frac{2}{\eta}+1-\vert N_t - N_r\vert,
\end{cases}
\\
\vert\widetilde{\mathbf{C}}\vert& = N_t^{\frac{2(N_{\min}-l)}{\eta}}\kappa_h(\frac{2}{\eta}-2l, N_{\min}-l, N_{\min}, N_{\max})\rho^{-\frac{2(N_{\min}-l)}{\eta}}.
\end{align}
Therefore, in this case,
\begin{equation}
\mu^*_{\mathrm{unc}}(\eta) = \begin{cases}
P_sN_t^{\Delta^*_{\mathrm{unc}}}\frac{\kappa_l(\frac{2}{\eta}, l, N_{\min}, N_{\max})\kappa_h(\frac{2}{\eta}-2l, N_{\min}-l, N_{\min}, N_{\max})}{\prod_{k=1}^{N_{\min}}\Gamma(N_{\max}-l+1)\Gamma(N_{\min}-k+1)}, \\
\quad 2l \neq \frac{2}{\eta}+1-\vert N_t - N_r\vert; \\
P_sN_t^{\Delta^*_{\mathrm{unc}}}\log\rho\frac{\kappa_l(\frac{2}{\eta}, l-1, N_{\min}, N_{\max})\kappa_h(\frac{2}{\eta}-2l, N_{\min}-l, N_{\min}, N_{\max})}{\prod_{k=1}^{N_{\min}}\Gamma(N_{\max}-l+1)\Gamma(N_{\min}-k+1)}, \\
\quad 2l = \frac{2}{\eta}+1-\vert N_t - N_r\vert
\end{cases}
\end{equation}
where the optimum distortion exponent is
\begin{equation}
\Delta^*_{\mathrm{unc}}(\eta) = l(l+\vert N_t-N_r\vert) +\frac{2(N_{\min}-l)}{\eta}.
\end{equation}

This concludes the proof of this theorem.

\end{compactenum}

\section{Proof of Theorem \ref{t5}} \label{appt5}
%Proof of T5
Let $\widetilde{\mathbf{G}}$ denote the asymptotic form of $\mathbf{G}$ for high SNR. Since $g_{ij}$ is a polynomial of $\rho^{-1}$ given by (\ref{gij}) and the preliminaries in Section \ref{sec_pre}, in terms of Table \ref{tb_psisx}, $\vert\widetilde{\mathbf{G}}\vert$ can be written as $\sum_{m=1}^M\vert\widetilde{\mathbf{G}}_m\vert$ where
\begin{equation}
\vert\widetilde{\mathbf{G}}_m\vert = u_m \rho^{-\Delta_{\mathrm{cor}}^*}, \label{c3_defu}
\end{equation}
i.e., they have the same degree over $\rho^{-1}$. Each entry of $\widetilde{\mathbf{G}}_m$ is a monomial of $\rho^{-1}$ denoted by $\widetilde{g}_{m, ij}$. In terms of Table \ref{tb_psisx} and the preliminaries in Section \ref{sec_pre}, we learn that $\widetilde{g}_{m,ij}$'s form is one of $\sigma_i^{-r_{m,j}}a(j,r_{m,j})\rho^{-(d_j+r_{m,j})}$ (Form 1) and $\sigma_i^{d_j-\frac{2}{\eta}}c_{j}\log^{\epsilon}\rho\;\rho^{-\frac{2}{\eta}}$ (Form 2), where $r_{m,j}$ is a non-negative integer, $\epsilon = 0,1$, and
\begin{align}
a(j, r_{m,j}) &= N_t^{d_j+r_{m,j}}\frac{\Gamma(\frac{2}{\eta}-d_j)\Gamma(d_j+r_{m,j})} {\Gamma(\frac{2}{\eta})\Gamma(r_{m,j}+1)(d_j+1-\frac{2}{\eta})_{r_{m,j}}}\label{ajr1}\\
c_j &= N_t^{\frac{2}{\eta}}\Gamma(d_j-\frac{2}{\eta})\label{cj1}.
\end{align}

If the entries of first $l$ columns of $\widetilde{\mathbf{G}}_m$ are of Form 1 and other entries are of Form 2, $\widetilde{\mathbf{G}}_m$ can be partitioned as
\begin{equation}
\widetilde{\mathbf{G}}_m = \left(
\begin{array}{cc}
\widetilde{\mathbf{G}}_{m,1} & \widetilde{\mathbf{G}}_{m,2} \\
\end{array}
\right)
\end{equation}
where $\widetilde{\mathbf{G}}_{m,1}$ is of size $N_{\min}\times l$ and $\widetilde{\mathbf{G}}_{m,2}$ is of size $N_{\min}\times (N_{\min}-l)$. Since $\widetilde{\mathbf{G}}_m$ is a full-rank matrix, $\widetilde{\mathbf{G}}_{m,1}$ and $\widetilde{\mathbf{G}}_{m,2}$ ought to be full rank as well. Apparently, $\widetilde{\mathbf{G}}_{m,2}$ is a full-rank matrix; whereas, for $\widetilde{\mathbf{G}}_{m,1}$, if there exist $r_{m,j_1} = r_{m,j_2}$ for $j_1 \neq j_2$, $\widetilde{\mathbf{G}}_{m,1}$ would not be full rank, because in that case, its submatrix constructed by the two columns with individual indices $j_1$ and $j_2$ would be rank-one. Thus, each $r_{m,j}$ must be distict.

Now let us figure out $l$. Define a distortion exponent function as \begin{equation}
\gamma(n) = \begin{cases}
\sum_{k=1}^n d_k + {\sum_{k=0}^{n-1}k+\frac{2(N_{\min}-n)}{\eta}}, \quad & n\in \mathbb{Z} \cap (0, N_{\min}]; \\
\frac{2N_{\min}}{\eta}, \quad & n = 0.
\end{cases}
\end{equation}
Apparently, $\gamma(n)$ is on the curve of the two-order function $f(x)$,
\begin{equation}
f(x) = x^2 + \left(\vert N_t-N_r\vert-\frac{2}{\eta}\right)x + \frac{2N_{\min}}{\eta}
\end{equation}
which is a symmetric convex function and whose minimum value is given by $x = \frac{\frac{2}{\eta}-\vert N_t-N_r\vert}{2}$.

Since $n=l$ gives the minimum $\gamma(n)$, when $2/\eta \in (0, \vert N_t-N_r\vert+1)$, $l = 0$, $\Delta_{\mathrm{cor}}(\eta) = \gamma(0) = 2N_{\min}/\eta$; when $2/\eta \in (N_t+N_r-1, +\infty)$, $l = N_{\min}$, $\Delta_{\mathrm{cor}}(\eta) = \gamma(N_{\min}) = N_tN_r$.

When $\eta \in [\vert N_t-N_r\vert+1, N_t+N_r-1]$, we should have
\begin{equation}
\gamma(l) \leq \gamma(l-1)
\end{equation}
and
\begin{equation}
\gamma(l) \leq \gamma(l+1),
\end{equation}
which gives
\begin{equation}
\frac{2}{\eta}-1-\vert N_t-N_r\vert \leq 2l \leq \frac{2}{\eta}+1-\vert N_t-N_r\vert.
\end{equation}
Hence, for $\eta \in [\vert N_t-N_r\vert+1, N_t+N_r-1]$,
\begin{equation}
l = \left\lfloor\frac{\frac{2}{\eta}+1-\vert N_t-N_r\vert}{2}\right\rfloor \quad \text{or} \quad \left\lceil \frac{\frac{2}{\eta}-1-\vert N_t-N_r\vert}{2}\right\rceil
\end{equation}
and
\begin{equation}
\begin{split}
\Delta^*_{\mathrm{cor}}(\eta) &= \gamma(l) \\
&= l(l+\vert N_r-N_t\vert)+\frac{2(N_{\min}-l)}{\eta} \\
& = \sum_{k=1}^{N_{\min}}{\min\left\{\frac{2}{\eta}, 2k-1+\vert N_t-N_r \vert\right\}}.
\end{split}
\end{equation}
Note that $\gamma\left(\left\lfloor\frac{\frac{2}{\eta}+1-\vert N_t-N_r\vert}{2}\right\rfloor\right) = \gamma\left(\left\lceil \frac{\frac{2}{\eta}-1-\vert N_t-N_r\vert}{2}\right\rceil\right)$.

This concludes the proof of this theorem.

\section{Proof of Theorem \ref{t6}} \label{appt6}
From the proofs of Theorem \ref{t4} and Theorem \ref{t5}, we have
\begin{equation}
\mu^*_{\mathrm{cor}} = \frac{P_s\vert\bm{\Sigma}\vert^{-N_{\max}}\sum_{m=1}^{M}u_m} {\prod_{k=1}^{N_{\mathrm{min}}}\Gamma(N_{\max} -k+1)\vert\mathbf{V}_2(\bm{\sigma})\vert}
\end{equation}
where $u_m$ is defined in (\ref{c3_defu}).

\begin{compactenum}
%Proof of T6.1
\item Consider the case of $2/\eta \in (0, \vert N_t-N_r\vert +1)$. We have $M=1$ and
\begin{equation}
\widetilde{g}_{1,ij} = \sigma_i^{d_j-\frac{2}{\eta}}c_j\rho^{-\frac{2}{\eta}}, \quad i = 1, \ldots N_{\min},\; j = 1, \ldots N_{\min}
\end{equation}
where $d_j$ is defined in Theorem \ref{t4} and $u_j$ is defined in (\ref{cj1}).
Thereby,
\begin{equation}
u_1 = N_t^{\frac{2N_{\min}}{\eta}}\vert\mathbf{V}_1(\bm{\sigma})\vert \prod_{j=1}^{N_{\min}}\Gamma(d_j-\frac{2}{\eta}) \prod_{i=1}^{N_{\min}}\sigma_i^{\vert N_t-N_r\vert+1-\frac{2}{\eta}}.
\end{equation}
So, in this case,
\begin{equation}
\begin{split}
\mu_{\mathrm{cor}}^*(\eta) &= \frac{\vert\bm{\Sigma}\vert^{-N_{\max}}\vert \mathbf{V}_1(\bm{\sigma})\vert \prod_{i=1}^{N_{\min}}\sigma_i^{\vert N_t-N_r\vert+1-\frac{2}{\eta}}}{\vert\mathbf{V}_2(\bm{\sigma})\vert}\\ & \quad \times \frac{P_sN_t^{\frac{2N_{\min}}{\eta}}\prod_{j=1}^{N_{\min}}\Gamma(d_j-\frac{2}{\eta})} {\prod_{k=1}^{N_{\mathrm{min}}}\Gamma(N_{\max} -k+1)} \\
&= \prod_{k=1}^{N_{\min}}\sigma_k^{-\frac{2}{\eta}}\, \mu^*_{\mathrm{unc}}(\eta).
\end{split}
\end{equation}
Note that $\mathbf{V}_1(\bm{\sigma})$ and $\mathbf{V}_2(\bm{\sigma})$ are Vandermonde matrices defined by (\ref{c3_defv1}) and (\ref{c3_defv2}) respectively in the proof of Theorem \ref{t4}.
%Proof of T6.2
\item Consider the case of $2/\eta \in (N_t + N_r -1, +\infty)$. We have $M = N_{\min}!$ and
\begin{equation}
\begin{split}
\widetilde{g}_{m,ij}= \sigma_i^{-r_{m,j}}a(j, r_{m,j})\rho^{-d_j-r_{m,j}}, &\quad m = 1, \dots, M, \; i = 1,\ldots,N_{\min}, \\
& \quad \; j = 1,\ldots,N_{\min}
\end{split}
\end{equation}
where
\begin{equation}
\begin{split}
a(j, r_{m,j}) &= N_t^{d_j+r_{m,j}}\frac{\Gamma(d_j)\Gamma(\frac{2}{\eta}-d_j)(d_j)_{r_{m,j}}} {\Gamma(\frac{2}{\eta})\Gamma(r_{m,j}+1)(d_j+1-\frac{2}{\eta})_{r_{m,j}}}\\
&= N_t^{d_j+r_{m,j}}\frac{\Gamma(\frac{2}{\eta}-d_j)\Gamma(d_j+r_{m,j})} {\Gamma(\frac{2}{\eta})\Gamma(r_{m,j}+1)(d_j+1-\frac{2}{\eta})_{r_{m,j}}}
\end{split}\label{aj1}
\end{equation}
By Lemma \ref{l8},
\begin{equation}
\left(d_j+1-\frac{2}{\eta}\right)_{r_{m,j}} = (-1)^{r_{m,j}}\left(\frac{2}{\eta}-d_j-r_{m,j}\right)_{r_{m,j}}.\label{aj2}
\end{equation}
Substitute (\ref{aj2}) to (\ref{aj1}), we have
\begin{equation}
a(j, r_{m,j}) = (-1)^{r_{m,j}} N_t^{d_j+r_{m,j}}\frac{\Gamma(d_j+r_{m,j})\Gamma(\frac{2}{\eta}-d_j-r_{m,j})} {\Gamma(\frac{2}{\eta})\Gamma(r_{m,j}+1)}.\label{ajr}
\end{equation}

Hence,
\begin{equation}
\begin{split}
u_m &= (-1)^{\sum_jr_{m,j}}\mathrm{sgn}(\mathbf{r}_m)\vert\mathbf{V}_2(\bm{\sigma})\vert\; \prod_{j=1}^{N_{\min}}a(j, r_{m,j})\\
&= \mathrm{sgn}(\mathbf{r}_m)\vert\mathbf{V}_2(\bm{\sigma})\vert\;  \prod_{j=1}^{N_{\min}}N_t^{d_j+r_{m,j}} \frac{\Gamma(d_j+r_{m,j})\Gamma(\frac{2}{\eta}-d_j-r_{m,j})} {\Gamma\left(\frac{2}{\eta}\right)\Gamma(r_{m,j}+1)}
\end{split}
\end{equation}
Note that $\mathbf{r}_m$ is a permutation of $\{0, 1, \ldots, N_{\min}-1\}$ and $\mathrm{sgn}(\mathbf{r}_m)$ denotes the signature of the permutation $\mathbf{r}_m$: $+1$ if $\mathbf{r}_m$ is an even permutation and $-1$ if $\mathbf{r}_m$ is an odd permutation.

Consequently, in the light of Leibniz formula \cite{horn},
\begin{equation}
\sum_{m=1}^M u_m = \frac{\vert\mathbf{V}_2(\sigma)\vert}{\prod_{k=1}^{N_{\min}}\Gamma(k)}\; \vert\mathbf{Q}\vert
\end{equation}
where each entry of $\mathbf{Q}$ is
\begin{equation}
q_{ij} = N_t^{d_{ij}}\Gamma(d_{ij})\frac{\Gamma(\frac{2}{\eta}-d_{ij})}{\Gamma(\frac{2}{\eta})}. \label{qij2}
\end{equation}
Note that $d_{ij}$ is defined in the description of Theorem \ref{t1}. Comparing (\ref{qij2}) to (\ref{eij2}), we find that $q_{ij}$ and $e_{ij}$ are identical. Therefore,
\begin{equation}
\mu_{\mathrm{cor}}^*(\eta) = \prod_{k=1}^{N_{\min}}\sigma_k^{-N_{\max}}\, \mu^*_{\mathrm{unc}}(\eta).
\end{equation}
%Proof of T6.3
\item  Consider the case of $2/\eta \in [\vert N_t - N_r\vert -1, N_t+N_r +1]$. In terms of the proof of Theorem \ref{t5} and the preliminaries in Section \ref{sec_pre}, when {$\mod\{2/\eta+1-\vert N_t-N_r\vert,2\} \neq 0$}, $M = l!$,
\begin{equation}
\tilde{g}_{m,ij}=\begin{cases}
\sigma_i^{-r_{m,j}}a(j, r_{m,j})\rho^{-d_j-r_{m,j}},\quad &j \leq l; \\
\sigma_i^{d_j-\frac{2}{\eta}}c_{j}\rho^{-\frac{2}{\eta}}, \quad &j\geq l+1;
\end{cases}
\end{equation}
when $\mod\{2/\eta+1-\vert N_t-N_r\vert,2\} = 0$, $M = (l-1)!$,
\begin{equation}
\tilde{g}_{m,ij}=\begin{cases}
\sigma_i^{-r_{m,j}}a(j, r_{m,j})\rho^{-d_j-r_{m,j}},\quad &j \leq l-1; \\
\sigma_i^{-l+1}(-1)^{l-1}\frac{N_t^{\frac{2}{\eta}}}{\Gamma(l)}\log\rho \,\rho^{-\frac{2}{\eta}} , \quad &j = l; \\
\sigma_i^{d_j-\frac{2}{\eta}}c_{j}\rho^{-\frac{2}{\eta}}, \quad &j\geq l+1.
\end{cases}
\end{equation}
Note that $a(j,r_{m,j})$ and $c_{j}$ are given by (\ref{ajr1}) and (\ref{cj1}) respectively; when $\mod\{2/\eta+1-\vert N_t-N_r\vert,2\} \neq 0$, $\mathbf{r}_m$ is a permutation of $\{0, 1, \ldots, l-1 \}$; when $\mod\{2/\eta+1-\vert N_t-N_r\vert,2\} = 0$, $\mathbf{r}_m$ is a permutation of $\{0, 1, \ldots, l-2 \}$. Thus,
\begin{equation}
u_m = \begin{cases} \mathrm{sgn}(\mathbf{r}_m)\vert\mathbf{V}_3(\mathbf{\bm{\sigma}})\vert \prod_{j=1}^l a(j, r_{m,j}) \prod_{j=l+1}^{N_{\min}}N_t^{\frac{2}{\eta}}\Gamma(d_j-\frac{2}{\eta}),\\
\quad \mod\{2/\eta+1-\vert N_t-N_r\vert,2\} \neq 0; \\
\mathrm{sgn}(\mathbf{r}_m)\vert\mathbf{V}_3(\mathbf{\bm{\sigma}})\vert (-1)^{l-1}N_t^{\frac{2(N_{\min}-l+1)}{\eta}}\log\rho\\
\quad \times\prod_{j=1}^{l-1} a(j, r_{m,j}) \prod_{j=l+1}^{N_{\min}}\Gamma(d_j-\frac{2}{\eta}),\\
\quad \mod\{2/\eta+1-\vert N_t-N_r\vert,2\} = 0.
\end{cases}
\end{equation}
where each entry of $\mathbf{V}_3(\bm{\sigma})$,
\begin{equation}
v_{3,ij} = \sigma_i^{-\min\{j-1, \frac{2}{\eta}-d_j\}}.
\end{equation}
Comparing to the proof of Theorem \ref{t3} for the same case of $\eta$, we have
\begin{equation}
\begin{split}
\mu_{\mathrm{cor}}^*(\eta) &=
\frac{(-1)^{\frac{l(l-1)}{2}}\vert\mathbf{V}_3(\bm{\sigma})\vert} {\prod_{k=1}^{N_{\min}}\sigma_k^{\vert N_t-N_r\vert+1}\prod_{1\leq m < n \leq N_{\min}}(\sigma_n-\sigma_m)} \\
&\quad \times\prod_{k=1}^{N_{\min}-l}\frac{(k)_l}{(\vert N_t-N_r\vert-\frac{2}{\eta}+l+k)_l}\,\mu^*_{\mathrm{unc}}(\eta).
\end{split}
\end{equation}
\end{compactenum}
This concludes the proof.

\section{Proof of Theorem \ref{t7}} \label{appt7}
%Proof of T7
When $2/\eta \in (0, \vert N_t-N_r\vert +1)$ or $2/\eta \in (N_t + N_r -1, +\infty)$, in terms of Theorem \ref{t6}, straightforwardly,  $\lim_{\bm{\Sigma}\rightarrow \mathbf{I}}\mu^*_{\mathrm{cor}}(\eta) = \mu^*_{\mathrm{unc}}(\eta)$  .

Consider the case of $2/\eta \in [\vert N_t - N_r\vert -1, N_t+N_r +1]$. By Taylor expansion and Lemma \ref{l8} , the entries of $\mathbf{V}_3(\bm{\sigma})$
\begin{equation}
\begin{split}
v_{3, ij} &= \sum_{n=0}^{\infty}\frac{(-p_j-n+1)_n}{n!}(\sigma_i-1)^n \\
&=  \sum_{n=0}^{\infty}\frac{(-1)^n(p_j)_n}{n!}(\sigma_i-1)^n
\end{split}
\end{equation}
where $p_j = \min\{j-1, \frac{2}{\eta}-d_j\}$.

Thereby, when $\bm{\sigma}$ approaches a vector of ones,
\begin{equation}
\vert\mathbf{V}_3(\bm{\sigma})\vert = \sum_{m=1}^{(N_{\min}-1)!}\vert \mathbf{V}_{3,m}(\bm{\sigma})\vert
\end{equation}
where the entries of $ \mathbf{V}_{3,m}(\bm{\sigma})$
\begin{equation}
v_{3,m,ij} = \begin{cases}
1, \quad & j = 1; \\
\frac{(-1)^{s_{m,j}}(p_j)_{s_{m,j}}}{s_{m,j}!}(\sigma_i-1)^{s_{m,j}}, \quad &j \geq 1.
\end{cases}
\end{equation}
Note that $\mathbf{s}_m = \{s_{m,2},\ldots, s_{m,{N_{\min}}}\}$ is a permutation of $\{1, 2, \ldots, N_{\min}-1\}$.
%Full rank: $\mathrm{rank}(A+B) \leq \mathrm{rank}(A) + \mathrm{rank}(B)$.

The determinant of $\mathbf{V}_{3,m}(\bm{\sigma})$
\begin{equation}
\vert \mathbf{V}_{3,m}(\bm{\sigma})\vert = (-1)^{n_1}\vert\mathbf{V}_1(\bm{\sigma}-\bm{1})\vert \mathrm{sgn}(\mathbf{s}_m)\prod_{k=2}^{N_{\min}}\frac{1}{\Gamma(p_k)\Gamma(k)} \prod_{j=2}^{N_{\min}}\Gamma(s_{m,j}+p_j)
\end{equation}
where $n_1 = \frac{N_{\min}(N_{\min}-1)}{2}$.
In the light of Leibniz formula  \cite{horn} and
\begin{equation}
\vert\mathbf{V}_1(\bm{\sigma}-\mathbf{a})\vert = \vert\mathbf{V}_1(\bm{\sigma})\vert, \quad \mathbf{a}  = \{a,\ldots,a\},
\end{equation}
$\vert \mathbf{V}_3(\bm{\sigma})\vert$ can be written in the form
\begin{equation}
\vert \mathbf{V}_3(\bm{\sigma})\vert = (-1)^{\frac{N_{\min}(N_{\min}-1)}{2}}\vert \mathbf{V}_1(\bm{\sigma})\vert\vert \mathbf{W}\vert\prod_{k=2}^{N_{\min}}\frac{1}{\Gamma(p_k)\Gamma(k)}
\end{equation}
where $\mathbf{W}$ is an $(N_{\min}-1)\times (N_{\min}-1)$ matrix with entries
\begin{equation}
\begin{split}
w_{ij} &= \Gamma(i+p_{j+1}) \\
&= \begin{cases}
\Gamma(i+j), \quad &j \leq l-1\\
\Gamma\left(\frac{2}{\eta}-\vert N_t-N_r\vert-1+i-j\right), \quad &j \geq l.
\end{cases}
\end{split}
\end{equation}

By partial Gaussian elimination, $\mathbf{W}$ can be transformed to $\mathbf{W}^{'}$ with a $(N_{\min}-l)\times (l-1)$ left-lower submatrix of zeros. Partition $\mathbf{W}^{'}$ as
\begin{equation}
\mathbf{W}^{'} = \left(
\begin{array}{cc}
\mathbf{W}_1^{'} & \mathbf{W}_2^{'} \\
\mathbf{W}_3^{'} & \mathbf{W}_4^{'} \\
\end{array}
\right),
\end{equation}
where $\mathbf{W}_3^{'}$ is the submatrix of zeros, the entries of $\mathbf{W}_1^{'}$ are
\begin{equation}
w^{'}_{1,ij} =\Gamma(i+j-1), \quad 1 \leq i, j \leq l-1,
\end{equation}
and the entries of $\mathbf{W}_4^{'}$ are
\begin{equation}
\begin{split}
w^{'}_{4,ij} &= \left(\frac{2}{\eta}-\vert N_t-N_r\vert-j-l\right)_{l-1}\Gamma(\frac{2}{\eta}-\vert N_t-N_r\vert-l+i-j), \\
&\quad \quad \quad l \leq i,j \leq N_{\min}-1.
\end{split}
\end{equation}

\begin{equation}
\vert \mathbf{W} \vert = \vert \mathbf{W}^{'}_1 \vert \vert \mathbf{W}^{'}_4 \vert
\end{equation}

By Lemma \ref{l3},
\begin{equation}
\vert \mathbf{W}^{'}_1 \vert = \prod_{k=1}^{l-1}\Gamma(k)\Gamma(k+1).
\end{equation}

By Lemma \ref{l8},
\begin{equation}
\vert \mathbf{W}^{'}_4 \vert = (-1)^{n_2} \prod_{j=l}^{N_{\min}-1}\left(\frac{2}{\eta}-\vert N_t-N_r\vert-j-l\right)_{l-1} \prod_{k=1}^{N_{\min}-l}\Gamma(k)\Gamma(\frac{2}{\eta}-N_{\max}+k).
\end{equation}
where $n_2= \frac{(N_{\min}-l)(N_{\min}-l-1)}{2}$.

Consequently, in terms of Theorem \ref{l6},
\begin{equation}
\begin{split}
\lim_{\bm{\Sigma}\rightarrow\mathbf{I}}\mu^*_{\mathrm{cor}} & = (-1)^{n_1+n_2+n_3} \prod_{k=1}^{N_{\min}-l}\frac{\Gamma(\frac{2}{\eta}-N_{\max}+k)} {\Gamma(\frac{2}{\eta}-\vert N_t-N_r\vert-k-2l+1)} \\
&\quad \times \frac{\Gamma(\vert N_t-N_r\vert-\frac{2}{\eta}+l+k)} {\Gamma(\vert N_t-N_r\vert - \frac{2}{\eta}+2l+k)}\; \mu^*_{\mathrm{unc}}
\end{split}
\end{equation}
where $n_3 = \frac{l(l-1)}{2}$.
Since for any function $f(x)$,
\begin{equation}
\prod_{k=1}^{N_{\min}-l}f(a+N_{\min}-k-l+1) = \prod_{k^{'}=1}^{N_{\min}-l}f(a+k^{'})
\end{equation}
where $k^{'} = N_{\min}-k-l+1$,
\begin{equation}
\begin{split}
\lim_{\bm{\Sigma}\rightarrow\mathbf{I}}\mu^*_{\mathrm{cor}}(\eta) &= (-1)^{n_1+n_2+n_3} \prod_{k=1}^{N_{\min}-l}
\frac{(\frac{2}{\eta}-N_{\max}+k-l)_l}{(N_{\max}-\frac{2}{\eta}-k+1)_l}\; \mu^*_{\mathrm{unc}}(\eta).
\end{split}
\end{equation}
By Lemma \ref{l8},
\begin{equation}
\left(\frac{2}{\eta}-N_{\max}+k-l\right)_l  = (-1)^l\left(N_{\max}-\frac{2}{\eta}-k+1\right)_l
\end{equation}
Thus,
\begin{equation}
\lim_{\bm{\Sigma}\rightarrow\mathbf{I}}\mu^*_{\mathrm{cor}}(\eta) = (-1)^{n_1+n_2+n_3+n_4}\,\mu^*_{\mathrm{unc}}(\eta).
\end{equation}
where $n_4 = l(N_{\min}-l+1)$.
As
\begin{equation}
(-1)^{n_1+n_2+n_3+n_4} = (-1)^{n_1-n_2+n_3+n_4} = 1,
\end{equation} we have
\begin{equation}
\lim_{\bm{\Sigma}\rightarrow\mathbf{I}}\mu^*_{\mathrm{cor}}(\eta) = \mu^*_{\mathrm{unc}}(\eta).
\end{equation}
This concludes the proof.

%\bibliographystyle{IEEEtran}
% Generated by IEEEtran.bst, version: 1.12 (2007/01/11)

\end{document}